\newtheorem{theorem}{Theorem}
\newtheorem{lemma}{Lemma}
\newtheorem{corollary}{Corollary}
\newtheorem{proposition}{Proposition}
\newtheorem{definition}{Definition}
\newcommand{\bra}[1]{\mbox{$\left\langle #1 \right|$}}
\newcommand{\ket}[1]{\mbox{$\left| #1 \right\rangle$}}
\newcommand{\comments}[1]{}
\newcommand{\CZ}{\textrm{CZ}}
\begin{document}
\preprint{APS/123-QED}
\title{Detecting multipartite entanglement structure with minimal resources}
\date{\today}
\author{You Zhou}
\affiliation{Center for Quantum Information, Institute for Interdisciplinary Information Sciences, Tsinghua University, Beijing 100084, China}

\author{Qi Zhao}
\affiliation{Center for Quantum Information, Institute for Interdisciplinary Information Sciences, Tsinghua University, Beijing 100084, China}

\author{Xiao Yuan}
\affiliation{Department of Materials, University of Oxford, Parks Road, Oxford OX1 3PH, United Kingdom}

\author{Xiongfeng Ma}
\email{xma@tsinghua.edu.cn}
\affiliation{Center for Quantum Information, Institute for Interdisciplinary Information Sciences, Tsinghua University, Beijing 100084, China}

\begin{abstract}
Recently, there are tremendous developments on the number of controllable qubits in several quantum computing systems. For these implementations, it is crucial to determine the entanglement structure of the prepared multipartite quantum state as a basis for further information processing tasks. In reality, evaluation of a multipartite state is in general a very challenging task owing to the exponential increase of the Hilbert space with respect to the number of system components. In this work, we propose a systematic method using very few local measurements to detect multipartite entanglement structures based on the graph state --- one of the most important classes of quantum states for quantum information processing. Thanks to the close connection between the Schmidt coefficient and quantum entropy in graph states, we develop a family of efficient witness operators to detect the entanglement between subsystems under any partitions and hence the entanglement intactness. We show that the number of local measurements equals to the chromatic number of the underlying graph, which is a constant number, independent of the number of qubits. In reality, the optimization problem involved in the witnesses can be challenging with large system size. For several widely-used graph states, such as 1-D and 2-D cluster states and the Greenberger-Horne-Zeilinger state, by taking advantage of the area law of entanglement entropy, we derive analytical solutions for the witnesses, which only employ two local measurements. Our method offers a standard tool for entanglement structure detection to benchmark multipartite quantum systems.
\end{abstract}
\maketitle

\section{Introduction}
Entanglement is an essential resource for many quantum information tasks \cite{Horodecki2009entanglement}, such as quantum teleportation \cite{Bennett1993Teleporting}, quantum cryptography \cite{Bennett84cryptography,Ekert1991cryptography}, non-locality test \cite{Brunner2014nonlocality}, quantum computing \cite{Nielsen2011Quantum}, quantum simulation \cite{Lloyd1996Simulators} and quantum metrology \cite{Wineland1992squeezing,Giovannetti2006Metrology}. Tremendous efforts have been devoted to the realization of multipartite entanglement in various systems \cite{Monz2011Entanglement,Britton2012trapped,Friis2018Observation,Song2017Entanglement,Gong2019Genuine,Wang2016Entanglement,chen2017observation,12photon, Toth2014Detecting,Luo2017Deterministic,Lange2018atomic}, which provide the foundation for small- and medium-scale quantum information processing in near future and will eventually pave the way to universal quantum computing. In order to build up a quantum computing device, it is crucial to first witness multipartite entanglement. So far, genuine multipartite entanglement has been demonstrated and witnessed in experiment with a small amount of qubits in different realizations, such as 14-ion-trap-qubit \cite{Monz2011Entanglement}, 12-superconducting-qubit \cite{Gong2019Genuine}, and 12-photon-qubit systems \cite{12photon}.

In practical quantum hardware, the unavoidable coupling to the environment undermines the fidelity between the prepared state and the target one. Taking the Greenberger-Horne-Zeilinger (GHZ) state for example, the state-of-the-art 10-superconducting-qubit \cite{Song2017Entanglement} and the 12-photon \cite{12photon} preparations only achieve the fidelity of $66.8\%$ and $57.2\%$, respectively, which just exceed the threshold 50\% for the certification of genuine entanglement.
As the system size becomes larger, see for instance, Google's a $72$-qubit chip \footnote{https://www.sciencenews.org/article/google-moves-toward-quantum-supremacy-72-qubit-computer} and IonQ's a $79$-qubit system \footnote{https://physicsworld.com/a/ion-based-commercial-quantum-computer-is-a-first/}, it is an experimental challenge to create genuine multipartite entanglement. Nonetheless, even without global genuine entanglement as the target state possesses, the experimental prepared state might still have fewer-body entanglement within a subsystem and/or among distinct subsystems \cite{Dur2000Three,Acin2001Classification,Guhne2005Multipartite}. The study of lower-order entanglement, which can be characterized by the detailed entanglement structures \cite{Huber2013Structure,Shahandeh2014Structural,Lu2018Structure}, is important for quantum hardware development, because it might reveal the information on unwanted couplings to the environment and acts as a benchmark of the underlying system. Moreover, the certified lower-order entanglement among several subsystems could be still useful for some quantum information tasks.

Considering an $N$-partite quantum system and its partition into $m$ subsystems ($m\le N$), the entanglement structure indicates how the subsystems are entangled with each other. Each subsystem corresponds to a subset of the whole quantum system. For instance, we can choose each subsystem to be each party (i.e., $m=N$), and then the entanglement structure indicates the entanglement between the $N$ parties. In some specific systems, such as distributed quantum computing \cite{Cirac1999Distributed}, quantum networks \cite{Kimble2008internet} or atoms in a lattice, the geometric configuration can naturally determine the system partition (see FIG.~\ref{Fig:distQC} for an illustration).
In other cases, one might not need to specify the partition in the beginning. By going through all possible partitions, one can investigate higher level entanglement structures, such as entanglement intactness (non-separability)  \cite{Guhne2005Multipartite,Lu2018Structure}, which quantifies how many pieces in the $N$-partite state are separated.

\begin{figure}[tbhp!]
\centering \includegraphics[width=0.4\textwidth]{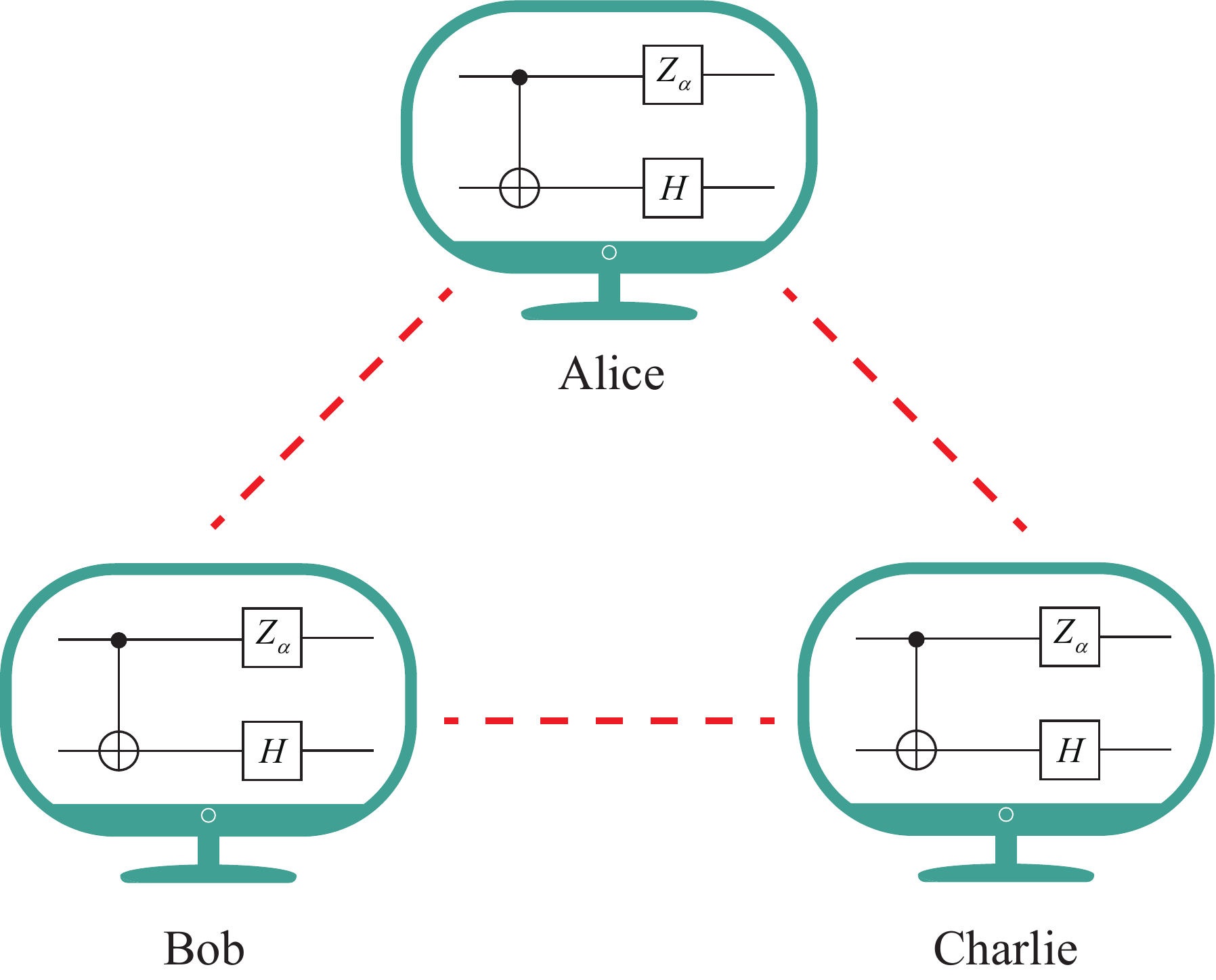}
\caption{A distributed quantum computing scenario. Three remote (small) quantum processors, owned by Alice, Bob and Charlie, are connected by quantum links. Each of them possesses a few of qubits and performs quantum operations. In this case, the partition of the whole quantum system is determined by the locations of these processors. In order to perform global quantum operations involving multiple processors, entanglement among the processors are generally required. Thus, it is essential to benchmark the entanglement structure on this network.
}\label{Fig:distQC}
\end{figure}

Multipartite entanglement structure detection is generally a challenging task. Naively, one can perform state tomography on the system. As the system size increases, tomography becomes infeasible due to the exponential increase of the Hilbert space. Entanglement witness \cite{TERHAL2001witness,GUHNE2009detection,Friis2019Reviews}, on the other hand, provides an elegant solution to multipartite entanglement detection. In literature, various witness operators have been proposed to detect different types of quantum states, generally requiring a polynomial number of measurements with respect to the system size \cite{Guhne2007Toolbox,You2019Sym}. Interestingly, a constant number of local measurement settings are shown to be sufficient to detect genuine entanglement for stabilizer states \cite{Toth2005Detecting,Knips2016Multipartite}.
Compared with genuine entanglement, multipartite entanglement structure still lacks a systematic exploration,  due to the rich and complex structures of $N$-partite system.
Recently, positive results have been achieved for detecting entanglement structures of GHZ-like states with two measurement settings \cite{Lu2018Structure} and the entanglement of a specific 1-D cluster state of the 16-qubit superconducting quantum processor $ibmqx5$ machine from the IBM cloud \cite{ibm16entanglement}. 
Unfortunately, it remains an open problem of efficient entanglement structure detection of general multipartite quantum states.

In this work, we propose a systematic method to witness the entanglement structure based on graph states. Note that the graph state \cite{Briegel2001Persistent,Hein2006Graph} is one of the most important classes of multipartite states for quantum information processing, such as measurement-based quantum computing \cite{onewayQC,Raussendorf2003Measurement}, quantum routing and quantum networks \cite{Kimble2008internet}, quantum error correction \cite{Werner2001error}, and Bell nonlocality test \cite{Guhne2005Bell}. It is also related to the symmetry-protected topological order in condensed matter physics \cite{Bei2019Meets}.
Typical graph states include cluster states, GHZ state, and the states involved in the encoding process of the 5-qubit Steane code and the concatenated $[7,1,3]$-CSS-code \cite{Hein2006Graph}.

The main idea of our entanglement structure detection method runs as follows. First, with the close connection between the maximal Schmidt coefficient and quantum entropy, we upper-bound the fidelity of fully- and bi-separable states. These bounds are directly related to the entanglement entropy of the underlying graph state with respect to certain bipartition. Then, inspired by the genuine entanglement detection method \cite{Toth2005Detecting}, we lower-bound the fidelity between the unknown prepared state and the target graph state, with local measurements corresponding to the stabilizer operators of the graph state. Finally, by comparing theses fidelity bounds, we can witness the entanglement structures, such as the (genuine multipartite) entanglement between any subsystem partitions, and hence the entanglement intactness.

Our detection method for entanglement structures based on graph states is presented in Theorem \ref{Th:main} and \ref{Th:msep}, which only involves $k$ local measurements. Here, $k$ is the chromatic number of the corresponding graph, typically, a small constant independent of the number of qubits.
For several common graph states, 1-D and 2-D cluster states and the GHZ state, we construct witnesses with only $k=2$ local measurement settings, and derive analytical solutions to the optimization problem. These results are shown in Corollaries \ref{Th:GHZ} to \ref{Th:2Dmsep}. The proofs of propositions and theorems are left in Methods, and the proofs of Corollaries \ref{Th:GHZ} to \ref{Th:2Dmsep} are presented in Supplementary Methods 1-4.

\section{Results}
\subsection{Definitions of multipartite entanglement structure}
Let us start with the definitions of multipartite entanglement structure. Considering an $N$-qubit quantum system in a Hilbert space $\mathcal{H}=\mathcal{H}_2^{\otimes N}$, one can partition the whole system into $m$ nonempty disjoint subsystems $A_i$, i.e., $\{N\}\equiv\{1,2,\dots,N\}=\bigcup_{i=1}^m A_i$ with $\mathcal{H}=\bigotimes^m_{i=1}\mathcal{H}_{A_i}$. Denote this partition to be $\mathcal{P}_m=\{A_i\}$ and omit the index $m$ when it is clear from the context. Similar to definitions of regular separable states, here, we define fully- and bi-separable states with respect to a specific partition $\mathcal{P}_m$ as follows.

\begin{definition}\label{Def:FullSep}
An $N$-qubit pure state, $\ket{\Psi_f}\in\mathcal{H}$, is $\mathcal{P}$-fully separable, iff it can be written as,
\begin{equation}\label{Eq:fullsep}
\ket{\Psi_{f}}=\bigotimes^m_{i=1}\ket{\Phi_{A_i}}.
\end{equation}
An $N$-qubit mixed state $\rho_f$ is $\mathcal{P}$-fully separable, iff it can be decomposed into a convex mixture of $\mathcal{P}$-fully separable pure states,
\begin{equation}\label{Eq:fullsepMix}
\rho_f=\sum_ip_i\ket{\Psi_f^i}\bra{\Psi_f^i},
\end{equation}
with $p_i\ge0, \forall i$ and $\sum_i p_i=1$.
\end{definition}

Denote the set of $\mathcal{P}$-fully separable states to be $S_f^{\mathcal{P}}$. Thus, if one can confirm that a state $\rho \notin S_f^{\mathcal{P}}$, the state $\rho$ should possess entanglement between the subsystems $\{A_i\}$. Such kind of entanglement could be weak though, since it only requires at least two subsystems to be entangled. For instance, the state $\ket{\Psi}=\ket{\Psi_{A_1A_2}}\otimes\prod^m_{i=3}\ket{\Psi_{A_i}}$ is called entangled nevertheless only with entanglement between $A_1$ and $A_2$. It is interesting to study the states where all the subsystems are genuinely entangled with each other. Below, we define this genuine entangled state via $\mathcal{P}$-bi-separable states.

\begin{definition}\label{Def:Bisep}
An $N$-qubit pure state, $\ket{\Psi_s}\in\mathcal{H}$, is $\mathcal{P}$-bi-separable, iff there exists a subsystem bipartition $\{A,\bar{A}\}$, where $A=\bigcup_i A_i$, $\bar{A}=\{N\}/A \neq \emptyset$, the state can be written as,
\begin{equation}\label{Eq:2sep}
\ket{\Psi_b}=\ket{\Phi_{A}}\otimes \ket{\Phi_{\bar{A}}}.
\end{equation}
An $N$-qubit mixed state $\rho_b$ is $\mathcal{P}$-bi-separable, iff it can be decomposed into a convex mixture of $\mathcal{P}$-bi-separable pure states,
\begin{equation}\label{Eq:fullsepMix}
\rho_b=\sum_ip_i\ket{\Psi_b^i}\bra{\Psi_b^i},
\end{equation}
with $p_i\ge0, \forall i$ and $\sum_i p_i=1$, and each state $\ket{\Psi_b^i}$ can have different bipartitions.
\end{definition}

Denote the set of bi-separable states to be $S_{b}^{\mathcal{P}}$. It is not hard to see that $S_{f}^{\mathcal{P}}\subset S_{b}^{\mathcal{P}}$.

\begin{definition}\label{Def:Pgenent}
A state $\rho$ possesses $\mathcal{P}$-genuine entanglement iff $\rho \notin S_{b}^{\mathcal{P}}$.
\end{definition}

\begin{figure}[tbh!]
\centering
\includegraphics[width=0.35\textwidth]{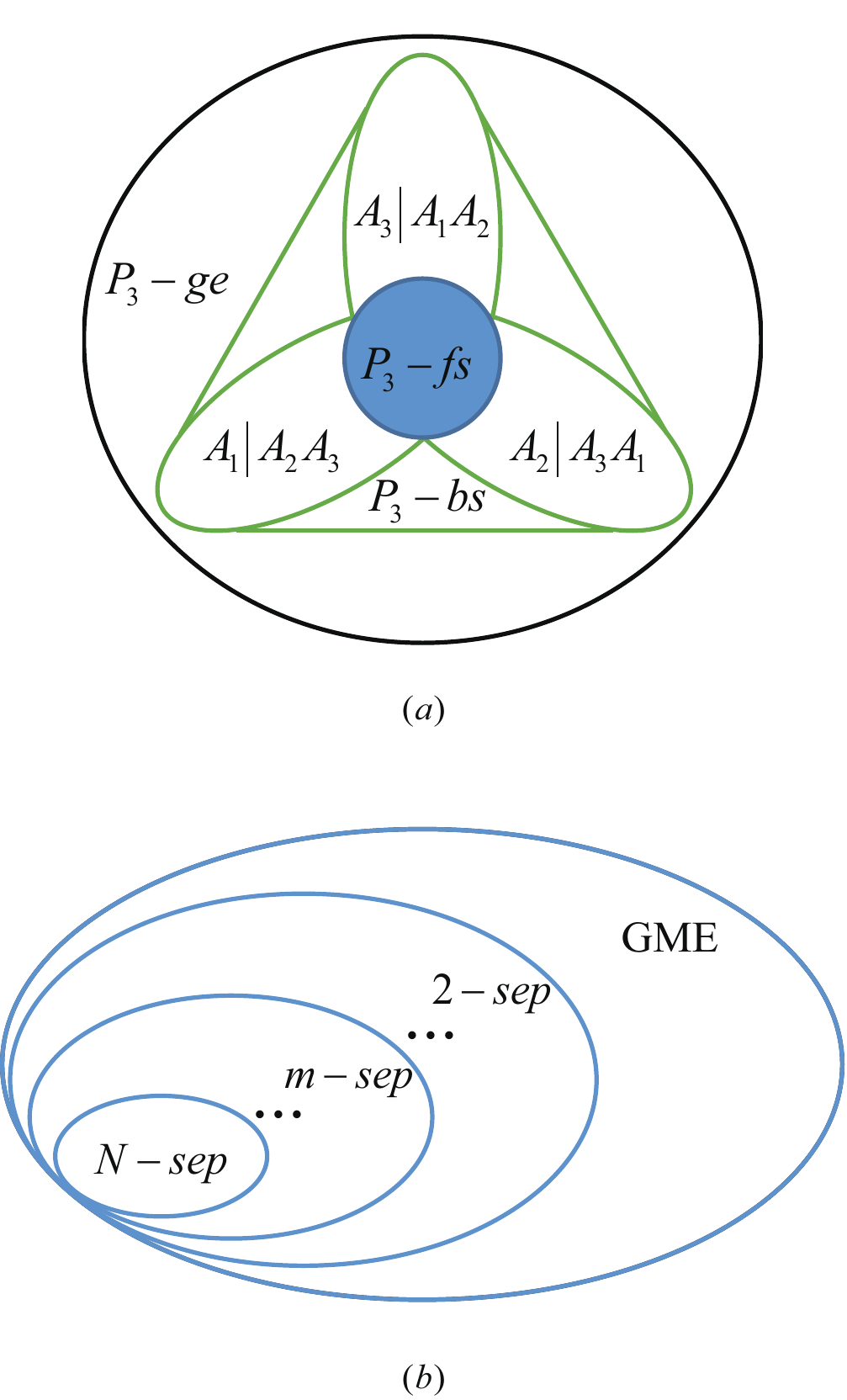}
\caption{Venn diagrams to illustrate relationships of several separable sets. (a) To illustrate the separability definitions based on a given partition, we consider a tripartition $\mathcal{P}_3=\{A_1,A_2,A_3\}$ here. The $\mathcal{P}$-fully separable state set $S_{f}^{\mathcal{P}}$ is at the centre, contained in three bi-separable sets with different bipartitions. The $\mathcal{P}$-bi-separable state set $S_{b}^{\mathcal{P}}$ is the convex hull of these three sets. A state possesses $\mathcal{P}$-genuine entanglement if it is outside of $S_{b}^{\mathcal{P}}$. Note that this becomes the case of three-qubit entanglement when each party $A_i$ contains one qubit \cite{Acin2001Classification}. (b) Separability hierarchy of $N$-qubit state with $S_{m+1}\subset S_{m}$ and $2\leq m \leq N$.  The $m$-separable state set $S_m$ is the convex hull of separable states with different $m$-partitions. Thus $S_{f}^{\mathcal{P}_m}\subset S_m$, and one can investigate $S_m$ by considering all $S_{f}^{\mathcal{P}_m}$. A state possesses genuine multipartite entanglement (GME) if it is outside of $S_2$, and is (fully) $N$-separable if it is in $S_N$.
}\label{Fig:VennD}
\end{figure}

The three entanglement-structure definitions of $\mathcal{P}$-fully separable, $\mathcal{P}$-bi-separable, and $\mathcal{P}$-genuinely entangled states can be viewed as generalized versions of regular fully separable, bi-separable, and genuinely entangled states, respectively. In fact, when $m=N$, these pairs of definitions are the same.

Following the conventional definitions, a pure state $\ket{\Psi_m}$ is $m$-separable if there exists a partition $\mathcal{P}_m$, the state can be written in the form of Eq.~\eqref{Eq:fullsep}. The $m$-separable state set, $S_m$, contains all the convex mixtures of the $m$-separable pure states, $\rho_m=\sum_ip_i\ket{\Psi_m^i}\bra{\Psi_m^i}$,  where the partition for each term $\ket{\Psi_m^i}$ needs not to be same. It is not hard to see that $S_{m+1}\subset S_{m}$. Meanwhile, define the \emph{entanglement intactness} of a state $\rho$ to be $m$, iff $\rho\notin S_{m+1}$ and $\rho\in S_{m}$.
Thus, as $\rho\notin S_{m+1}$, the intactness is at most $m$, i.e., the non-separability can serve as an upper bound of the intactness. When the entanglement intactness is 1, the state is genuinely entangled; and when the intactness is $N$, the state is fully separable. See FIG.~\ref{Fig:VennD} for the relationships among these definitions.

By definitions, one can see that if a state is $\mathcal{P}_m$-fully separable, it must be $m$-separable. Of course, an $m$-separable state might not be $\mathcal{P}_m$-fully separable, for example, if the partition is not properly chosen.
In experiment, it is important to identify the partition under which the system is fully separated. With the partition information, one can quickly identify the links where entanglement is broken.
Moreover, for some systems, such as distributed quantum computing, multiple quantum processor, and quantum network, natural partition exists due to the system geometric configuration. Therefore, it is practically interesting to study entanglement structure under partitions.

\subsection{Entanglement structure detection method}
Let us first recap the basics of graph states and the stabilizer formalism \cite{Briegel2001Persistent,Hein2006Graph}. In a graph, denoted by $G=(V,E)$, there are a vertex set $V=\{N\}$ and a edge set $E\subset[V]^2$. Two vertexes $i$, $j$ are called neighbors if there is an edge $(i,j)$ connecting them. The set of neighbors of the vertex $i$ is denoted as $N_i$. A graph state is defined on a graph $G$, where the vertexes represent the qubits initialized in the state of $\ket{+}=(\ket{0}+\ket{1})/\sqrt{2}$ and the edges represent a Controlled-$Z$ (C-$Z$) operation, $\CZ^{\{i,j\}}=\ket{0}_i\bra{0}\otimes \mathbb{I}_j+\ket{1}_i\bra{1}\otimes Z_j$, between the two neighbor qubits. Then the graph state can be written as,
\begin{equation} \label{eq:graphstate}
\ket{G}=\prod_{(i,j)\in E}\CZ^{\{i,j\}}\ket{+}^{\otimes N}.
\end{equation}
Denote the Pauli operators on qubit $i$ to be $X_i,Y_i,Z_i$. An $N$-partite graph state can also be uniquely determined by $N$ independent stabilizers,
\begin{equation}\label{}\label{Eq:Stab}
S_i=X_i\bigotimes_{j\in N_i} Z_j,
\end{equation}
which commute with each other and $S_i\ket{G}=\ket{G},\, \forall i$. That is, the graph state is the unique eigenstate with eigenvalue of $+1$ for all the $N$ stabilizers. Here, $S_i$ contains identity operators for all the qubits that do not appear in Eq.~\eqref{Eq:Stab}. As a result, a graph state can be written as a product of stabilizer projectors,
\begin{equation}\label{Eq:Gsta}
\ket{G}\bra{G}=\prod_{i=1}^N\frac{S_i+\mathbb{I}}{2}.
\end{equation}
The fidelity between $\rho$ and a graph state $\ket{G}$ can be obtained from measuring all possible products of stabilizers. However, as there are exponential terms in Eq.~\eqref{Eq:Gsta}, this process is generally inefficient for large systems. Hereafter, we consider the connected graph, since its corresponding graph state is genuinely entangled.

Now, we propose a systematic method to detect entanglement structures based on graph states. First, we give fidelity bounds between separable states and graph states as the following proposition.

\begin{proposition}\label{Th:Fidmain}
Given a graph state $\ket{G}$ and a partition $\mathcal{P}=\{A_i\}$, the fidelity between $\ket{G}$ and any $\mathcal{P}$-fully separable state is upper bounded by
\begin{equation}\label{Eq:upfullsep}
\mathrm{Tr}\left(\ket{G}\bra{G}\rho_f\right)\leq \min_{\{A,\bar{A}\}} 2^{-S(\rho_A)};
\end{equation}
and the fidelity between $\ket{G}$ and any $\mathcal{P}$-bi-separable state is upper bounded by
\begin{equation}\label{Eq:up2sep}
\mathrm{Tr}(\ket{G}\bra{G}\rho_b)\leq \max_{\{A,\bar{A}\}} 2^{-S(\rho_A)},
\end{equation}
where $\{A,\bar{A}\} $ is a bipartition of $\{A_i\}$, and $S(\rho_A)=-\mathrm{Tr}[\rho_A\log_2\rho_A]$ is the von Neumann entropy of the reduced density matrix $\rho_A=\mathrm{Tr}_{\bar{A}}(\ket{G}\bra{G})$.
\end{proposition}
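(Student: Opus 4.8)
The plan is to reduce the two bounds to one elementary inequality about the overlap of $\ket{G}$ with a state that factorises across a single bipartition, and then to use the degenerate entanglement spectrum of graph states to rewrite the largest Schmidt coefficient as $2^{-S(\rho_A)}$. First, since $\rho\mapsto\Tr(\ket{G}\bra{G}\rho)$ is linear and the sets $S_f^{\mathcal P}$, $S_b^{\mathcal P}$ are convex, it suffices to bound $|\braket{G}{\Psi}|^2$ for \emph{pure} $\mathcal P$-fully separable (resp.\ $\mathcal P$-bi-separable) states; the mixed-state bounds then follow by averaging over a convex decomposition.

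The key lemma I would establish is: for any bipartition $\{A,\bar A\}$ and any product pure state $\ket{\psi}=\ket{\phi_A}\otimes\ket{\phi_{\bar A}}$ one has $|\braket{G}{\psi}|^2\le\lambda_{\max}(A)$, the largest Schmidt coefficient of $\ket{G}$ across that cut. Writing the Schmidt decomposition $\ket{G}=\sum_k\sqrt{\lambda_k}\,\ket{a_k}_A\ket{b_k}_{\bar A}$ with $\{\ket{a_k}\}$, $\{\ket{b_k}\}$ orthonormal, one expands $\braket{G}{\psi}=\sum_k\sqrt{\lambda_k}\,\langle a_k|\phi_A\rangle\langle b_k|\phi_{\bar A}\rangle$, applies Cauchy--Schwarz to get $|\braket{G}{\psi}|^2\le\big(\sum_k\lambda_k|\langle a_k|\phi_A\rangle|^2\big)\big(\sum_k|\langle b_k|\phi_{\bar A}\rangle|^2\big)$, and bounds the first factor by $\lambda_{\max}(A)\sum_k|\langle a_k|\phi_A\rangle|^2\le\lambda_{\max}(A)$ and the second by $1$, both via Bessel's inequality for the orthonormal Schmidt vectors.

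Next I would invoke the standard structural fact that for a graph state the reduced density matrix $\rho_A=\Tr_{\bar A}\ket{G}\bra{G}$ is proportional to a projector---maximally mixed on its support---so that all nonzero Schmidt coefficients across $\{A,\bar A\}$ are equal; this follows by writing $\rho_A=2^{-|A|}\sum_{g\in\mathcal S_A}g$ over the subgroup $\mathcal S_A$ of stabilizers of $\ket G$ supported entirely inside $A$. Hence $\lambda_{\max}(A)=2^{-S(\rho_A)}$, and the lemma upgrades to $|\braket{G}{\psi}|^2\le 2^{-S(\rho_A)}$ for any state factorising across $\{A,\bar A\}$. To finish: a $\mathcal P$-fully separable pure state $\bigotimes_i\ket{\Phi_{A_i}}$ factorises across \emph{every} bipartition of the parts $\{A_i\}$, so its overlap with $\ket{G}$ is at most $\min_{\{A,\bar A\}}2^{-S(\rho_A)}$, giving \eqref{Eq:upfullsep}; a $\mathcal P$-bi-separable pure state factorises across \emph{at least one} such cut, so its overlap is at most $\max_{\{A,\bar A\}}2^{-S(\rho_A)}$, and since different pure components in a mixture may split along different cuts it is the $\max$ (not the $\min$) that survives the convex average, giving \eqref{Eq:up2sep}.

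The main obstacle is conceptual rather than computational: Cauchy--Schwarz alone only delivers $|\braket{G}{\psi}|^2\le\lambda_{\max}(A)$, and for a generic target state $\lambda_{\max}(A)\ge 2^{-S(\rho_A)}$, which would be weaker than the claimed bound. The proposition therefore hinges on graph (stabilizer) states having a flat entanglement spectrum, which is precisely what turns the inequality into the equality $\lambda_{\max}(A)=2^{-S(\rho_A)}$; once that is in hand, the remainder is bookkeeping over bipartitions and convexity.
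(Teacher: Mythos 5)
Your proposal is correct and follows essentially the same route as the paper: reduce to pure states by convexity, bound the overlap with a product state by the largest Schmidt coefficient, use the flat entanglement spectrum of graph states to identify that coefficient with $2^{-S(\rho_A)}$, then take the minimum over bipartitions for $\mathcal{P}$-fully separable states and the maximum for $\mathcal{P}$-bi-separable states. The only difference is that you supply explicit arguments (Cauchy--Schwarz/Bessel for the Schmidt bound, the stabilizer-subgroup form of $\rho_A$ for flatness) for the two facts the paper simply cites, which is a fine and self-contained way to fill in those steps.
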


The bound in Eq.~\eqref{Eq:up2sep} is tight, i.e., there always exists a $\mathcal{P}$-bi-separable state to saturate it. The bound in Eq.~\eqref{Eq:upfullsep} may not be tight for some partition $\mathcal{P}=\{A_i\}$ and some graph state $\ket{G}$. In addition, we remark that to extend Theorem \ref{Th:Fidmain} from the graph state to a general state $\ket{\Psi}$, one should substitute the entropy in the bounds of Eqs.~\eqref{Eq:upfullsep} and \eqref{Eq:up2sep} with the min-entropy $S_{\infty}(\rho_A)=-\log \lambda_1$ with $\lambda_1$ the largest eigenvalue of $\rho_A=\mathrm{Tr}_{\bar{A}}(\ket{\Psi}\bra{\Psi})$.

Next, we propose an efficient method to lower bound the fidelity between an unknown prepared state and the target graph state. A graph is $k$-colorable if one can divide the vertex set into $k$ disjoint subsets $\bigcup V_l=V$ such that any two vertexes in the same subset are not connected. The smallest number $k$ is called the chromatic number of the graph \footnote{Note that the colorability is a property of the graph (not the state), one may reduce the number of measurement settings by local Clifford operations \cite{Hein2006Graph}.}.
We define the stabilizer projector of each subset $V_l$ as
\begin{equation}\label{Eq:Pi}
\begin{aligned}
P_l=\prod_{i\in V_l}  \frac{S_i+\mathbb{I}}{2},
\end{aligned}
\end{equation}
where $S_i$ is the stabilizer of $\ket{G}$ in subset $V_l$. The expectation value of each $P_l$ can be obtained by one local measurement setting $\bigotimes_{i\in V_l} X_i \bigotimes_{j\in V/V_l} Z_j$. Then, we can propose a fidelity evaluation scheme with $k$ local measurement settings, as the following proposition.

\begin{proposition}\label{Th:lowerbound}
For a graph state $\ket{G}\bra{G}$ and the projectors $P_l$ defined in Eq.~\eqref{Eq:Pi}, the following inequality holds,
\begin{equation}\label{Eq:FidLB}
\ket{G}\bra{G}\ge\sum_{l=1}^k P_l-(k-1)\mathbb{I},
\end{equation}
where $A\geq B$ indicates that $(A-B)$ is positive semidefinite.
\end{proposition}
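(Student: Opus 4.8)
The plan is to exploit the fact that all the stabilizers $S_i$ mutually commute, so every operator appearing in \eqref{Eq:FidLB} is a polynomial in the commuting family $\{S_i\}_{i=1}^N$ and the three operators can be simultaneously diagonalized. First I would record two structural facts. (i) Each $P_l$ defined in \eqref{Eq:Pi} is an orthogonal projector: it is a product of the commuting projectors $(S_i+\mathbb{I})/2$, $i\in V_l$, hence Hermitian and idempotent. (ii) Because $\{V_l\}_{l=1}^k$ is a partition of $V=\{1,\dots,N\}$ and all the factors commute, $\prod_{l=1}^k P_l=\prod_{i=1}^N (S_i+\mathbb{I})/2=\ket{G}\bra{G}$ by \eqref{Eq:Gsta}. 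So the left-hand side of \eqref{Eq:FidLB} is precisely the product of the commuting projectors $P_1,\dots,P_k$.

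Next I would diagonalize. Since $P_1,\dots,P_k$ commute, fix a common orthonormal eigenbasis; on a given basis vector let $x_l\in\{0,1\}$ be the eigenvalue of $P_l$. On that vector the left-hand side of \eqref{Eq:FidLB} takes the value $\prod_{l=1}^k x_l$ and the right-hand side takes the value $\sum_{l=1}^k x_l-(k-1)$. The operator inequality is then equivalent to the elementary numerical inequality $\prod_{l=1}^k x_l\ge\sum_{l=1}^k x_l-(k-1)$ for all $x_l\in\{0,1\}$, which I would settle by a two-case split: if every $x_l=1$, both sides equal $1$; if some $x_l=0$, the left side is $0$ while the right side is at most $(k-1)-(k-1)=0$. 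Since the inequality holds on every common eigenvector, it holds as an operator inequality, which is \eqref{Eq:FidLB}.

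An alternative, more self-contained route is induction on $k$ using the observation that $(\mathbb{I}-Q)(\mathbb{I}-P_k)\ge 0$ whenever $Q$ and $P_k$ are commuting orthogonal projectors (the product of two commuting projectors is again a projector, hence positive semidefinite). Writing $Q_j=P_1\cdots P_j$, which is itself a projector, this yields $Q_k=Q_{k-1}P_k\ge Q_{k-1}+P_k-\mathbb{I}$; feeding in the inductive hypothesis $Q_{k-1}\ge\sum_{l=1}^{k-1}P_l-(k-2)\mathbb{I}$ gives $Q_k\ge\sum_{l=1}^{k}P_l-(k-1)\mathbb{I}$, with the base case $k=1$ trivial and $Q_k=\ket{G}\bra{G}$ as above.

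There is essentially no deep step here; the only thing to handle with care is the commutativity bookkeeping — checking that the $P_l$ genuinely commute and that their product collapses to $\ket{G}\bra{G}$ via \eqref{Eq:Gsta} — since this is what simultaneously licenses the diagonalization and identifies the left-hand side of \eqref{Eq:FidLB}. The supporting positivity facts (a product of commuting projectors is a projector, hence positive semidefinite) are standard.
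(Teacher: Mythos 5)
Your proof is correct and follows essentially the same route as the paper's: identify $\prod_{l=1}^k P_l=\ket{G}\bra{G}$ via Eq.~\eqref{Eq:Gsta}, use commutativity to work in common eigensubspaces, and verify the scalar inequality case by case on the $\{0,1\}$ eigenvalues. The inductive variant you sketch is a fine alternative but adds nothing beyond the paper's argument.
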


Note that Proposition \ref{Th:lowerbound} with $k=2$ case has also been studied in literature \cite{Toth2005Detecting}. Combining Propositions~\ref{Th:Fidmain} and \ref{Th:lowerbound}, we propose entanglement structure witnesses with $k$ local measurement settings, as presented in the following theorem.

\begin{theorem}\label{Th:main}
Given a partition $\mathcal{P}=\{A_i\}$, the operator $W_{f}^{\mathcal{P}}$ can witness non-$\mathcal{P}$-fully separability (entanglement),
\begin{equation}\label{Eq:witen}
W_{f}^{\mathcal{P}} = \left(k-1+\min_{\{A,\bar{A}\}} 2^{-S(\rho_A)}\right)\mathbb{I}-\sum_{l=1}^k P_l,
\end{equation}
with $\langle W_{f}^{\mathcal{P}}\rangle \ge0$ for all $\mathcal{P}$-fully-separable states; and the operator $W_{b}^{\mathcal{P}}$ can witness $\mathcal{P}$-genuine entanglement,
\begin{equation}\label{Eq:witGen}
W_{b}^{\mathcal{P}} = \left(k-1+\max_{\{A,\bar{A}\}} 2^{-S(\rho_A)}\right)\mathbb{I}-\sum_{l=1}^k P_l,
\end{equation}
with $\langle W_{b}^{\mathcal{P}}\rangle\ge0$ for all $\mathcal{P}$-bi-separable states, where $\{A,\bar{A}\} $ is a bipartition of $\{A_i\}$, $\rho_A=\mathrm{Tr}_{\bar{A}}(\ket{G}\bra{G})$, and the projectors $P_l$ is defined in Eq.~\eqref{Eq:Pi}.
\end{theorem}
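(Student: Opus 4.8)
The plan is to obtain Theorem~\ref{Th:main} by simply chaining Propositions~\ref{Th:lowerbound} and \ref{Th:Fidmain}; no new machinery is needed.

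First I would take the expectation value of the operator inequality of Proposition~\ref{Th:lowerbound}, $\ket{G}\bra{G}\ge\sum_{l=1}^kP_l-(k-1)\mathbb{I}$, in an arbitrary state $\rho$. Since both sides are linear in $\rho$, this is valid for mixed states as well and yields
\begin{equation}
\sum_{l=1}^k\mathrm{Tr}(P_l\rho)\le\mathrm{Tr}\!\left(\ket{G}\bra{G}\rho\right)+(k-1).
\end{equation}
Note that only this scalar consequence is used; I do not need $W_{f}^{\mathcal{P}}$ or $W_{b}^{\mathcal{P}}$ themselves to be positive semidefinite.

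Next I would restrict to $\rho=\rho_f\in S_{f}^{\mathcal{P}}$ and feed in the upper bound $\mathrm{Tr}(\ket{G}\bra{G}\rho_f)\le\min_{\{A,\bar{A}\}}2^{-S(\rho_A)}$ from Eq.~\eqref{Eq:upfullsep}. Substituting and using $\mathrm{Tr}\rho_f=1$, a one-line rearrangement gives precisely $\langle W_{f}^{\mathcal{P}}\rangle_{\rho_f}=\bigl(k-1+\min_{\{A,\bar{A}\}}2^{-S(\rho_A)}\bigr)-\sum_{l=1}^k\mathrm{Tr}(P_l\rho_f)\ge0$. Replacing Eq.~\eqref{Eq:upfullsep} by the bi-separable bound Eq.~\eqref{Eq:up2sep} and $\rho_f$ by $\rho_b\in S_{b}^{\mathcal{P}}$, the identical two steps give $\langle W_{b}^{\mathcal{P}}\rangle_{\rho_b}\ge0$. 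That completes the argument.

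Because the theorem is essentially a corollary of its two predecessors, I do not expect a real obstacle; the only point I would be careful to state explicitly is a bookkeeping one. Namely, the fidelity bounds of Proposition~\ref{Th:Fidmain}, although most naturally derived for pure separable states, pass to their convex hulls since the fidelity $\mathrm{Tr}(\ket{G}\bra{G}\,\cdot\,)$ is linear in the tested state and the $\min$/$\max$ over bipartitions is a fixed quantity depending only on $\ket{G}$ and $\mathcal{P}$, not on the particular separable state. With that remark in place, the two substitutions above deliver the witnesses $W_{f}^{\mathcal{P}}$ and $W_{b}^{\mathcal{P}}$ immediately, and the measurement cost is inherited verbatim from Proposition~\ref{Th:lowerbound}: each $P_l$ is read off from the single local setting $\bigotimes_{i\in V_l}X_i\bigotimes_{j\in V/V_l}Z_j$, so $k$ settings suffice.
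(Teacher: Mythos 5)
Your proposal is correct and follows essentially the same route as the paper: the paper's proof likewise takes the expectation of the operator inequality of Proposition~\ref{Th:lowerbound} in the separable state and then applies the fidelity bounds of Proposition~\ref{Th:Fidmain} (whose statement already covers mixed separable states by convexity), yielding $\mathrm{Tr}\{\sum_l P_l\,\rho_f\}\le (k-1)+\min_{\{A,\bar{A}\}}2^{-S(\rho_A)}$ and the analogous bound for $\rho_b$. Your explicit remark on passing from pure to mixed separable states is a harmless restatement of what the propositions already provide.
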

The proposed entanglement structure witnesses have several favourable features. First, given an underlying graph state, the implementation of the witnesses is the same for different partitions. This feature allows us to study different entanglement structures in one experiment. Note that the witness operators in Eqs.~\eqref{Eq:witen} and \eqref{Eq:witGen} can be divided into two parts: The measurement results of $P_l$ obtained from the experiment rely on the prepared unknown state and are independent of the partition; The bounds, $1+\min\,(\max)_{\{A,\bar{A}\}}2^{-S(\rho_A)}$, on the other hand, rely on the partition and are independent of the experiment. Hence, in the data postprocessing of the measurement results of $P_l$, we can study various entanglement structures for different partitions by calculating the corresponding bounds analytically or numerically.

Second, besides investigating the entanglement structure among all the subsystems, one can also employ the same experimental setting to study that of a subset of the subsystems, by performing different data post-processing. For example, suppose the graph $G$ is partitioned into 3 parts, say $A_1$, $A_2$ and $A_3$, and only the entanglement between subsystems $A_1$ and $A_2$ is of interest. One can construct new witness operators with  projectors $P_l'$, by replacing all the Pauli operators on the qubits in $A_3$ in Eq.~\eqref{Eq:Pi} to identities. Such measurement results can be obtained by processing the measurement results of the original $P_l$. Then the entanglement between $A_1$ and $A_2$ can be detected via Theorem \ref{Th:main} with projectors $P_l'$ and the corresponding bounds of the graph state $\ket{G_{A_1A_2}}$. Details are discussed in Supplementary Notes 1.

Third, when each subsystem $A_i$ contains only one qubit, that is, $m=N$, the witnesses in Theorem \ref{Th:main} become the conventional ones. It turns out that Eq.~\eqref{Eq:witGen} is the same for all the graph states under the $N$-partition $\mathcal{P}_N$, as shown in the following corollary. Note that, a special case of the corollary, the genuine entanglement witness for the GHZ and 1-D cluster states, has been studied in literature \cite{Toth2005Detecting}.

\begin{corollary} \label{Th:GHZ}
The operator $W_{b}^{\mathcal{P}_N}$ can witness genuine multipartite entanglement,
\begin{equation}\label{Eq:GHZgenuine}
W_{b}^{\mathcal{P}_N} = \left(k-\frac12\right)\mathbb{I}-\sum_{l=1}^k P_l,
\end{equation}
with $\langle W_{b}^{\mathcal{P}_N}\rangle \ge0$ for all bi-separable states, where $P_l$ is defined in Eq.~\eqref{Eq:Pi} for any graph state.
\end{corollary}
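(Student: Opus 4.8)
The plan is to specialize Theorem~\ref{Th:main} to the case $m=N$ (each $A_i$ a single qubit) and to show that the bound $\max_{\{A,\bar A\}}2^{-S(\rho_A)}$ appearing in Eq.~\eqref{Eq:witGen} equals $1/2$ for \emph{every} connected graph state, regardless of the underlying graph. Granting this, Eq.~\eqref{Eq:GHZgenuine} follows immediately: substituting $1/2$ for the max-term in Eq.~\eqref{Eq:witGen} turns the coefficient $k-1+\tfrac12$ into $k-\tfrac12$, and the $\mathcal{P}_N$-bi-separable sets coincide with the ordinary bi-separable sets (as noted in the text, the partition-based definitions reduce to the conventional ones when $m=N$), so $\langle W_b^{\mathcal{P}_N}\rangle\ge 0$ for all bi-separable states.

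The core claim is therefore the entropy computation. Over bipartitions $\{A,\bar A\}$ of the $N$ single-qubit parties, $2^{-S(\rho_A)}$ is maximized when $S(\rho_A)$ is minimized. For a connected graph, take $A=\{i\}$ a single vertex with at least one neighbor $j$. Then the reduced state $\rho_A=\mathrm{Tr}_{\bar A}\ket{G}\bra{G}$ on qubit $i$ is maximally mixed: this can be seen from the stabilizer formalism --- the stabilizer $S_i=X_i\bigotimes_{j\in N_i}Z_j$ involves qubits outside $A$ (since $N_i\neq\emptyset$), and the only single-qubit stabilizer element supported on $\{i\}$ alone is the identity, so $\rho_A=\mathbb{I}/2$ and $S(\rho_A)=1$, giving $2^{-S(\rho_A)}=1/2$. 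To finish I must argue that no bipartition does better, i.e.\ $S(\rho_A)\ge 1$ for every nontrivial bipartition of a connected graph. This follows because connectivity guarantees at least one edge crossing the cut $\{A,\bar A\}$; the C-$Z$ along a crossing edge, acting on the product input $\ket{+}^{\otimes N}$, already produces a Schmidt rank $\ge 2$ across the cut, and subsequent operations (which are local unitaries on each side together with further entangling gates) cannot reduce the rank of $\rho_A$ below $2$, so $S(\rho_A)\ge 1$. Hence the maximum over bipartitions is exactly $1/2$, attained on the single-qubit-versus-rest cut.

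The main obstacle is the lower bound $S(\rho_A)\ge1$ for all cuts: one must be careful that the graph-state construction genuinely forces nonzero entanglement across every edge-cut of a connected graph. The cleanest route is again the stabilizer picture: for any bipartition $\{A,\bar A\}$, the group generated by the $N$ stabilizers restricted to act trivially on $\bar A$ (the ``local subgroup'' on $A$) has order $2^{|A|-d}$ where $d\ge 1$ is the number of independent stabilizer generators that cross the cut, and $S(\rho_A)=d$; connectivity of $G$ forces $d\ge 1$. I would cite or quickly reprove this standard fact (e.g.\ from Hein \emph{et al.}~\cite{Hein2006Graph}) rather than grind through it. A secondary point to check is that the claimed coefficient $k-\tfrac12$ is independent of which $k$-coloring is used --- but $k$ enters only through $\sum_{l=1}^k P_l-(k-1)\mathbb{I}$ from Proposition~\ref{Th:lowerbound}, and any valid coloring yields a valid witness, so this is automatic. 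Everything else is bookkeeping: plug the value $1/2$ into Eq.~\eqref{Eq:witGen} and invoke Theorem~\ref{Th:main}.
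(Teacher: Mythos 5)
Your proposal is correct and follows essentially the same route as the paper: specialize Eq.~\eqref{Eq:witGen} of Theorem~\ref{Th:main} to the partition $\mathcal{P}_N$ and show that $\min_{\{A,\bar{A}\}}S(\rho_A)=1$ for any connected graph state, the minimum being attained on a single-qubit-versus-rest cut, so the coefficient becomes $k-1+\tfrac12=k-\tfrac12$. The paper's justification of the lower bound $S(\rho_A)\ge 1$ is slightly more direct---the state is entangled across every cut and the entanglement spectrum of a graph state is flat, so the entropy is a positive integer---and one small caveat on your side remark: identifying $S(\rho_A)$ with ``the number of independent stabilizer generators that cross the cut'' is not literally correct (e.g.\ the star graph with $A$ the set of all leaves has every generator crossing the cut yet $S(\rho_A)=1$), although the standard fact you actually intend to invoke, $S(\rho_A)=|A|-\log_2|\mathcal{S}_A|\ge 1$ by connectivity, is fine and yields the same conclusion.
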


Fourth, the witness in Eq.~\eqref{Eq:witen} can be naturally extended to identify  non-$m$-separability, by investigating all possible partitions $\mathcal{P}_m$ with fixed $m$. In fact, according to the definition of $m$-separable states and Eq.~\eqref{Eq:upfullsep}, the fidelity between any $m$-separable state $\rho_m$ and the graph state $\ket{G}$ can be upper bounded by $\max_{\mathcal{P}_m}\min_{\{A,\bar{A}\}} 2^{-S(\rho_A)}$, where the maximization is over all possible partitions with $m$ subsystems. As a result, we have the following theorem on the non-$m$-separability.

\begin{theorem}\label{Th:msep}
The operator $W_m$ can witness non-$m$-separability,
\begin{equation}\label{Eq:msep}
W_m = \left(k-1+\max_{\mathcal{P}_m}\min_{\{A,\bar{A}\}} 2^{-S(\rho_A)}\right)\mathbb{I}-\sum_{l=1}^k P_l,
\end{equation}
with $\langle W_m\rangle\ge0$ for all $m$-separable states, where the maximization takes over all possible partitions $\mathcal{P}_m$ with $m$ subsystems, the minimization takes over all bipartition of $\mathcal{P}_m$, $\rho_A=\mathrm{Tr}_{\bar{A}}(\ket{G}\bra{G})$, and the projector $P_l$ is defined in Eq.~\eqref{Eq:Pi}.
\end{theorem}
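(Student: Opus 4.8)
The plan is to assemble this theorem from the two ingredients already established, Propositions \ref{Th:Fidmain} and \ref{Th:lowerbound}, together with the observation about $m$-separable states sketched in the ``Fourth'' remark preceding the statement. First I would recall that any $m$-separable state $\rho_m$ is by definition a convex mixture $\rho_m=\sum_i p_i \ket{\Psi_m^i}\bra{\Psi_m^i}$, where each pure component $\ket{\Psi_m^i}$ is fully separable with respect to \emph{some} partition $\mathcal{P}_m^i$ into $m$ parts (the partition may differ across $i$). Fixing one such component and its partition $\mathcal{P}_m^i$, Proposition \ref{Th:Fidmain}, Eq.~\eqref{Eq:upfullsep}, gives $\mathrm{Tr}(\ket{G}\bra{G}\,\ket{\Psi_m^i}\bra{\Psi_m^i}) \le \min_{\{A,\bar A\}} 2^{-S(\rho_A)}$, where the minimum runs over bipartitions of $\mathcal{P}_m^i$. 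Since this bipartition-minimum for the particular $\mathcal{P}_m^i$ is no larger than the maximum of the same quantity over all $m$-partitions, each component satisfies $\mathrm{Tr}(\ket{G}\bra{G}\,\ket{\Psi_m^i}\bra{\Psi_m^i}) \le \max_{\mathcal{P}_m}\min_{\{A,\bar A\}} 2^{-S(\rho_A)}$. Taking the convex combination, the same bound holds for $\rho_m$ itself, establishing the fidelity upper bound $\mathrm{Tr}(\ket{G}\bra{G}\rho_m)\le \max_{\mathcal{P}_m}\min_{\{A,\bar A\}} 2^{-S(\rho_A)}$.

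Next I would invoke Proposition \ref{Th:lowerbound}: for the $k$-coloring projectors $P_l$ one has the operator inequality $\ket{G}\bra{G}\ge \sum_{l=1}^k P_l - (k-1)\mathbb{I}$. Taking the expectation value in any $m$-separable state $\rho_m$ and combining with the fidelity upper bound just derived yields
\begin{equation}
\sum_{l=1}^k \langle P_l\rangle_{\rho_m} - (k-1) \le \mathrm{Tr}(\ket{G}\bra{G}\rho_m) \le \max_{\mathcal{P}_m}\min_{\{A,\bar A\}} 2^{-S(\rho_A)}.
\end{equation}
Rearranging gives $\bigl(k-1+\max_{\mathcal{P}_m}\min_{\{A,\bar A\}} 2^{-S(\rho_A)}\bigr) - \sum_{l=1}^k \langle P_l\rangle_{\rho_m} \ge 0$, which is precisely $\langle W_m\rangle_{\rho_m}\ge 0$ with $W_m$ as defined in Eq.~\eqref{Eq:msep}. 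This holds for every $m$-separable state, so a measured value $\langle W_m\rangle < 0$ certifies non-$m$-separability, completing the proof.

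The only genuinely new step relative to Theorem \ref{Th:main} is the first one — lifting the per-partition bound of Proposition \ref{Th:Fidmain} to a partition-independent bound valid for the whole set $S_m$, which requires the maximization over all $\mathcal{P}_m$ and a short convexity argument since the optimal witnessing partition for each pure component is a priori unknown. That maximization is also where the main practical obstacle lies: the number of $m$-partitions of $N$ parties grows super-exponentially (Stirling numbers of the second kind), so although the \emph{statement} of the theorem is an immediate corollary of the two propositions, evaluating the coefficient $\max_{\mathcal{P}_m}\min_{\{A,\bar A\}} 2^{-S(\rho_A)}$ for a given graph state is the nontrivial optimization referred to in the abstract, handled analytically for cluster and GHZ states in the corollaries. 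For the proof of the theorem proper, however, no such evaluation is needed — the argument is purely the convexity-plus-operator-inequality chain above.
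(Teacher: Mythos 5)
Your argument is correct and matches the paper's own proof: the paper likewise bounds each pure $m$-separable component via Eq.~\eqref{Eq:upfullsep} for its particular partition, lifts this to $\max_{\mathcal{P}_m}\min_{\{A,\bar{A}\}} 2^{-S(\rho_A)}$ by convexity, and then combines with the operator inequality of Proposition \ref{Th:lowerbound} exactly as in Eq.~\eqref{Eq:profull}. No discrepancies to note.
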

The robustness analysis of the witnesses proposed in Theorem \ref{Th:main} and \ref{Th:msep} under the white noise is presented in Methods. It shows that our entanglement structure witnesses are quite robust to noise. Moreover, the optimization problems in Theorem \ref{Th:main} and \ref{Th:msep} are generally hard, since there are exponentially many different possible partitions.
Surprisingly, for several widely-used types of graph states, such as 1-D, 2-D cluster states, and the GHZ state, we find the analytical solutions to the optimization problem, as shown in the following section.

\subsection{Applications to several typical graph states}
In this section, we apply the general entanglement detection method proposed above to several typical graph states, 1-D, 2-D cluster states, and the GHZ state. Note that for these states the corresponding graphs are all 2-colorable. Thus, we can realize the witnesses with only two local measurement settings. For clearness, the vertexes in the subsets $V_1$ and $V_2$ are associated with red and blue colors respectively, as shown in FIG.~\ref{Fig:allstate}. We write the stabilizer projectors defined in Eq.~\eqref{Eq:Pi} for the two subsets as,
\begin{equation}\label{Eq:P1P2}
\begin{aligned}
P_1 &= \prod_{red\ i}  \frac{S_i+\mathbb{I}}{2}, \\
P_2 &= \prod_{blue\ i} \frac{S_i+\mathbb{I}}{2}.
\end{aligned}
\end{equation}
The more general case with $k$-chromatic graph states is presented in Supplementary Notes 1.

\begin{figure}[tbh]
\centering
\includegraphics[width=0.45\textwidth]{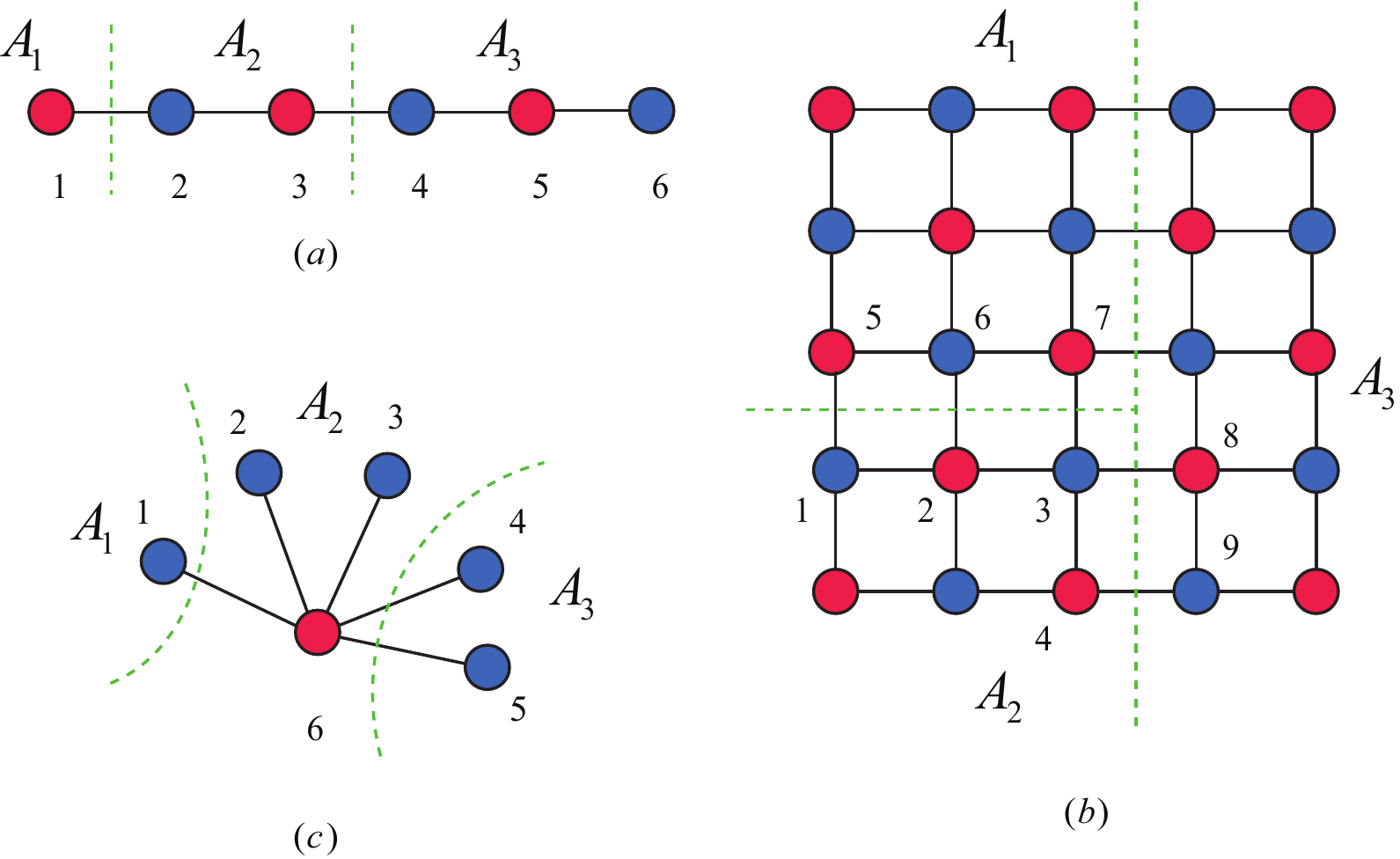}
\caption{Graphs of the (a) 1-D cluster state $\ket{C_1}$, (b) 2-D cluster state $\ket{C_2}$, and (c) GHZ state $\ket{GHZ}$. Note that the graph state form of the GHZ state is equivalent to its canonical form, $(\ket{0}^{\otimes N}+\ket{1}^{\otimes N})/\sqrt{2}$, up to local unitary operations.
}\label{Fig:allstate}
\end{figure}

We start with a 1-D cluster state $\ket{C_1}$ with stabilizer projectors in the form of Eq.~\eqref{Eq:P1P2}. Consider an example of tripartition $\mathcal{P}_3=\{A_1,A_2,A_3\}$, as shown in FIG.~\ref{Fig:allstate}(a), there are three ways to divide the three subsystems into two sets, i.e., $\{A,\bar{A}\}=\{A_1, A_2A_3\}$, $\{A_2, A_1A_3\}$, $\{A_3, A_1A_2\}$. It is not hard to see that the corresponding entanglement entropies are $S(\rho_{A_1})=S(\rho_{A_3})=1$ and $S(\rho_{A_2})=2$. Note that in the calculation, each broken edge will contribute 1 to the entropy, which is a manifest of the area law of entanglement entropy \cite{Eisert2010area}. According to Theorem \ref{Th:main}, the operators to witness $\mathcal{P}_3$-entanglement structure are given by,
\begin{equation}\label{Eq:witen1D}
\begin{aligned}
W_{f,C_1}^{\mathcal{P}_3} &= \frac{5}{4}\mathbb{I}-(P_1+P_2), \\
W_{b,C_1}^{\mathcal{P}_3} &= \frac{3}{2}\mathbb{I}-(P_1+P_2),
\end{aligned}
\end{equation}
where the two projectors $P_1$ and $P_2$ are defined in Eq.~\eqref{Eq:P1P2} with the graph of FIG.~\ref{Fig:allstate}(a). 

Next, we take an example of 2-D cluster state $\ket{C_2}$ defined in a $5\times5$ lattice and consider a tripartition, as shown in FIG.~\ref{Fig:allstate}(b). Similar to the 1-D cluster state case with the area law, the corresponding entanglement entropies are $S(\rho_{A_1})=S(\rho_{A_3})=5$ and $S(\rho_{A_2})=4$. According to Theorem \ref{Th:main}, the operators to witness $\mathcal{P}_3$-entanglement structure are given by,
\begin{equation}\label{Eq:witen2D}
\begin{aligned}
W_{f,C_2}^{\mathcal{P}_3} &= \frac{33}{32}\mathbb{I}-(P_1+P_2), \\
W_{b,C_2}^{\mathcal{P}_3} &= \frac{17}{16}\mathbb{I}-(P_1+P_2),
\end{aligned}
\end{equation}
where the two projectors $P_1$ and $P_2$ are defined in Eq.~\eqref{Eq:P1P2} with the graph of FIG.~\ref{Fig:allstate}(b). Similar analysis works for other partitions and other graph states.

Now, we consider the case where each subsystem $A_i$ contains exactly one qubit, $\mathcal{P}_N$. Then, witnesses in Eq.~\eqref{Eq:witen} become the conventional ones, as shown in the following Corollary.

\begin{corollary}\label{Th:1DN}
The operator $W_{f,C}^{\mathcal{P}_N}$ can witness non-fully separability (entanglement),
\begin{equation}\label{Eq:1Dent}
W_{f,C}^{\mathcal{P}_N} = (1+2^{-\lfloor\frac{N}{2}\rfloor})\mathbb{I}-(P_1+P_2),
\end{equation}
with $\langle W_{f,C}^{\mathcal{P}_N}\rangle \ge0$ for all fully separable states, where the two projectors $P_1$ and $P_2$ are defined in Eq.~\eqref{Eq:P1P2} with the stabilizers of any 1-D or 2-D cluster state.
\end{corollary}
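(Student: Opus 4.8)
The plan is to deduce the corollary directly from Theorem~\ref{Th:main} by evaluating its optimization for these particular graphs. The 1-D and 2-D cluster graphs are bipartite, so their chromatic number is $k=2$, and for the single-qubit partition $\mathcal{P}_N$ a ``bipartition of $\{A_i\}$'' is just a subset $\emptyset\neq A\subsetneq V$; hence Eq.~\eqref{Eq:witen} becomes $W_{f,C}^{\mathcal{P}_N}=\big(1+\min_{\emptyset\neq A\subsetneq V}2^{-S(\rho_A)}\big)\mathbb{I}-(P_1+P_2)$. Since $x\mapsto 2^{-x}$ is decreasing, the whole statement reduces to the single entropic identity
\begin{equation}
\max_{\emptyset\neq A\subsetneq V}S(\rho_A)=\left\lfloor\frac{N}{2}\right\rfloor
\end{equation}
for $\ket{G}$ a 1-D or 2-D cluster state.

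The ``$\le$'' direction is immediate from purity of $\ket{G}$: $S(\rho_A)=S(\rho_{\bar A})\le\min\{|A|,|\bar A|\}\le\lfloor N/2\rfloor$. For ``$\ge$'' I would use the standard bipartite-entropy formula for graph states: splitting the $\CZ$ gates of Eq.~\eqref{eq:graphstate} into those internal to $A$, internal to $\bar A$, and crossing the cut, the internal ones are local unitaries and do not change the Schmidt rank, so $S(\rho_A)=\mathrm{rank}_{\mathbb{F}_2}\Gamma_{A\bar A}$, the $\mathbb{F}_2$-rank of the off-diagonal $|A|\times|\bar A|$ block of the adjacency matrix~\cite{Hein2006Graph}. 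The convenient special case, which I would isolate as a lemma, is that when the ``cut graph'' of crossing edges is a disjoint union of paths $P_{k_1},\dots,P_{k_r}$ one has $\mathrm{rank}_{\mathbb{F}_2}\Gamma_{A\bar A}=\sum_i\lfloor k_i/2\rfloor$, since the biadjacency matrix of $P_k$ is, up to reordering its two colour classes, lower bidiagonal with a unit diagonal. Hence it suffices to exhibit, for each cluster state, one bipartition whose cut graph is a path-forest with at most one component of odd order: then $\sum_i\lfloor k_i/2\rfloor=\big\lfloor(\sum_i k_i)/2\big\rfloor=\lfloor N/2\rfloor$ and the bound is met.

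For the 1-D cluster $1\!-\!2\!-\!\cdots\!-\!N$, take $A$ to be the odd labels; the cut graph is the whole path $P_N$, so $S(\rho_A)=\lfloor N/2\rfloor$. For the 2-D cluster on an $L_1\times L_2$ lattice with $L_1$ even, take $A$ to be the odd-indexed rows: then every vertical edge crosses and no horizontal edge does, the cut graph is $L_2$ disjoint copies of $P_{L_1}$, and $S(\rho_A)=L_2\,(L_1/2)=N/2$; if $L_2$ is even one argues symmetrically with columns. The remaining case is $L_1,L_2$ both odd (so $N$ is odd), where I would perturb the row-alternating pattern near one boundary so that the cut graph becomes a union of even paths plus a single odd path covering all $N$ vertices; already for the $3\times3$ lattice one checks that $A=\{(1,1),(1,3),(3,1),(3,3),(1,2)\}$ has cut graph $P_7\sqcup P_2$, hence $S(\rho_A)=3+1=4=\lfloor 9/2\rfloor$. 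Substituting $\min 2^{-S(\rho_A)}=2^{-\lfloor N/2\rfloor}$ back into $W_{f,C}^{\mathcal{P}_N}$ gives Eq.~\eqref{Eq:1Dent}.

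The genuine obstacle is this last, both-odd sub-case: producing for every odd pair $(L_1,L_2)$ an explicit bipartition whose cut graph is a path-forest with at most one odd component, and then verifying that no stray grid edge crosses it. Everything else --- the 1-D case, the ``one dimension even'' 2-D case, and the entropy $=$ $\mathbb{F}_2$-rank identity --- is essentially immediate once the path-forest lemma is in place, so the weight of the argument sits in this boundary bookkeeping (which the paper defers to Supplementary Methods).
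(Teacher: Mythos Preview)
Your reduction to Theorem~\ref{Th:main} and the identity $\max_{A}S(\rho_A)=\lfloor N/2\rfloor$ is exactly the paper's strategy, and your upper bound and choices of $A$ (odd sites in 1-D, odd rows/columns in 2-D) coincide with theirs. The one methodological difference is in how the entropy of the chosen $A$ is computed: the paper argues by explicit EPR-pair distillation via local $\CZ$ gates and local complementation, reducing the 2-D case to copies of the 1-D case, whereas you invoke the $\mathbb{F}_2$-rank formula $S(\rho_A)=\mathrm{rank}_{\mathbb{F}_2}\Gamma_{A\bar A}$ together with your path-forest lemma. Both routes are correct and roughly equally short; your rank computation is a bit more algebraic and avoids tracking the sequence of Clifford moves, while the paper's distillation picture makes the odd--odd 2-D construction slightly easier to describe uniformly (alternate columns, then alternate within the last column) rather than as a boundary perturbation verified case by case. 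Your identification of the odd--odd sub-case as the only nontrivial bookkeeping matches the paper's treatment.
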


Here, we only show the cases of 1-D and 2-D cluster states. We conjecture that the witness holds for any (such as 3-D) cluster states. For a general graph state, on the other hand, the corollary does not hold. In fact, we have a counter example of the GHZ state shown in FIG.~\ref{Fig:allstate}(c). It is not hard to see that for any GHZ state, the entanglement entropy is given by,
\begin{equation}\label{Eq:SrhoGHZ}
\begin{aligned}
S(\rho_A^{GHZ}) = 1, \;\;\; \forall \{A,\bar{A}\}.
\end{aligned}
\end{equation}
Then, Eqs.~\eqref{Eq:witen} and \eqref{Eq:witGen} yield the same witnesses. That is, the witness constructed by Theorem \ref{Th:main} for the GHZ state can only tell genuine entanglement or not.


Following Theorem \ref{Th:msep}, one can fix the number of the subsystems $m$ and investigate all possible partitions to detect the non-$m$-separability. The optimization problem can be solved analytically for the 1-D and 2-D cluster states, as shown in Corollary \ref{Th:1Dmsep} and \ref{Th:2Dmsep}, respectively.

\begin{corollary}\label{Th:1Dmsep}
The operator $W_{m,C_1}$ can witness non-$m$-separability,
\begin{equation}\label{Eq:1Dmsep}
W_{m,C_1} = (1+2^{-\lfloor\frac{m}{2}\rfloor})\mathbb{I}-(P_1+P_2),
\end{equation}
with $\langle W_{m,C_1} \rangle\ge0$ for all $m$-separable states, where the two projectors $P_1$ and $P_2$ are defined in Eq.~\eqref{Eq:P1P2} with the stabilizers of a 1-D cluster state.
\end{corollary}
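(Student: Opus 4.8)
The plan is to reduce the statement, via Theorem~\ref{Th:msep} with $k=2$ (the path graph of a 1-D cluster state is $2$-colourable), to a combinatorial fact about the entanglement entropies of the cluster state. Because $2^{-x}$ is decreasing, the coefficient $\max_{\mathcal{P}_m}\min_{\{A,\bar{A}\}}2^{-S(\rho_A)}$ appearing in $W_m$ equals $2^{-c_m}$, where $c_m:=\min_{\mathcal{P}_m}\max_{\{A,\bar{A}\}}S(\rho_A)$ is the smallest value, over $m$-partitions, of the largest entropy among bipartitions of the $m$ parts; so it suffices to show $c_m=\lfloor m/2\rfloor$. Throughout I would use the standard property of graph states \cite{Hein2006Graph} that $S(\rho_A)=\mathrm{rank}_{\mathbb{F}_2}(\Gamma_{A\bar{A}})$, where $\Gamma_{A\bar{A}}$ is the submatrix of the adjacency matrix with rows in $A$ and columns in $\bar{A}$; for a path graph this is the precise form of the ``broken edges contribute to the entropy'' heuristic used in the examples above.

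For $c_m\le\lfloor m/2\rfloor$ I would test one explicit partition, $\mathcal{P}_m^{\ast}=\{\{1\},\dots,\{m-1\},\{m,m+1,\dots,N\}\}$: the first $m-1$ qubits as singletons and the rest as a single block. For any bipartition of these parts, after possibly swapping $A\leftrightarrow\bar{A}$ we may assume the block lies in $\bar{A}$, so $A\subseteq\{1,\dots,m-1\}$. Then $\Gamma_{A\bar{A}}$ has at most $|A|$ nonzero rows, while its nonzero columns are among the neighbours of $A$ inside $\bar{A}$, which are contained in $\{1,\dots,m\}\setminus A$ and hence number at most $m-|A|$. Therefore $S(\rho_A)=\mathrm{rank}_{\mathbb{F}_2}(\Gamma_{A\bar{A}})\le\min(|A|,\,m-|A|)\le\lfloor m/2\rfloor$; taking $A=\{1,3,5,\dots\}$ shows this is attained, so $\max_{\{A,\bar{A}\}}S(\rho_A)=\lfloor m/2\rfloor$ for $\mathcal{P}_m^{\ast}$.

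For $c_m\ge\lfloor m/2\rfloor$ I would argue that \emph{every} $m$-partition admits a bipartition with at least $m-1$ boundary edges and then convert boundary edges into entropy. Order the parts $A_1,\dots,A_m$ by the position of their leftmost qubit, so $\min A_1<\dots<\min A_m$; for $i\ge2$ the qubit $\min A_i-1$ belongs to a part $A_{p(i)}$ with $p(i)<i$, and $i\mapsto p(i)$ defines a spanning tree on the parts. Colour the parts by the parity of their tree depth and let $A$ be one colour class; every tree edge is then bichromatic, so the $m-1$ path-edges $(\min A_i-1,\min A_i)$, which are distinct, are boundary edges of $\{A,\bar{A}\}$. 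Next, in a path graph the nonzero-pattern of $\Gamma_{A\bar{A}}$ has maximum degree $\le2$ and is acyclic, hence a disjoint union of simple paths; since the biadjacency matrix of a path with $t$ edges has $\mathbb{F}_2$-rank $\lceil t/2\rceil$, one gets $S(\rho_A)=\sum_j\lceil t_j/2\rceil\ge\lceil B/2\rceil$, where $B=\sum_j t_j$ is the total number of boundary edges. With $B\ge m-1$ this gives $S(\rho_A)\ge\lceil(m-1)/2\rceil=\lfloor m/2\rfloor$. Combining the two bounds, $c_m=\lfloor m/2\rfloor$, and substituting into Theorem~\ref{Th:msep} yields Eq.~\eqref{Eq:1Dmsep}.

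The main obstacle I expect is the second step of the lower bound: passing from ``$\ge m-1$ broken edges'' to ``$\ge\lfloor m/2\rfloor$ ebits''. One cannot simply count a full ebit per broken edge --- as $A=\{1,3,5,\dots\}$ shows, the rank of $\Gamma_{A\bar{A}}$ drops below the number of broken edges whenever singleton runs meet the cut --- so the rank losses must be tracked, which is exactly what the ``$\Gamma_{A\bar{A}}$ decomposes into path-biadjacency blocks of $\mathbb{F}_2$-rank $\lceil t/2\rceil$'' structure lemma for path graphs accomplishes; it is elementary but needs a careful proof. A minor point is that the spanning-tree $2$-colouring must yield a genuine bipartition, i.e.\ both colour classes nonempty, which holds whenever $m\ge2$; the case $m=1$ is trivial since $\lfloor1/2\rfloor=0$ makes the witness $2\mathbb{I}-(P_1+P_2)\ge0$ automatically.
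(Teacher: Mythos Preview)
Your proof is correct and follows essentially the same route as the paper: reduce via Theorem~\ref{Th:msep} to computing $\min_{\mathcal{P}_m}\max_{\{A,\bar{A}\}}S(\rho_A)$, obtain the lower bound by a spanning-tree $2$-colouring of the parts yielding $\ge m-1$ cut edges and then converting cut edges to entropy via the boundary-cluster decomposition (the paper phrases this as EPR distillation in its Lemma~\ref{Lem:Entropy}, you as the $\mathbb{F}_2$-rank $\lceil t/2\rceil$ of path biadjacency blocks --- the same computation), and obtain the upper bound from the identical partition $\{\{1\},\dots,\{m-1\},\{m,\dots,N\}\}$. Your upper-bound step is in fact slightly more complete than the paper's, since you bound \emph{every} bipartition of $\mathcal{P}_m^{\ast}$ by $\lfloor m/2\rfloor$ via the $\min(|A|,m-|A|)$ rank estimate, whereas the paper only exhibits one bipartition attaining that value.
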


In particular, when $m=2$ and $m=N$, $W_{m,C_1}$ becomes the entanglement witnesses in  Eqs.~\eqref{Eq:GHZgenuine} and \eqref{Eq:1Dent}, respectively. 

\begin{corollary}\label{Th:2Dmsep}
The operator $W_{m,C_2}$ can witness non-$m$-separability for $N\geq m(m-1)/2$,
\begin{equation}\label{Eq:2Dmsep}
W_{m,C_2} = \left(1+2^{-\left\lceil \frac{-1+\sqrt{1+8(m-1)}}{2}\right\rceil}\right)\mathbb{I}-(P_1+P_2),
\end{equation}
with $\langle W_{m,C_2} \rangle\ge0$ for all $m$-separable states, where the two projectors $P_1$ and $P_2$ are defined in Eq.~\eqref{Eq:P1P2} with the stabilizers of a 2-D cluster state.
\end{corollary}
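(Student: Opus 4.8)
The plan is to obtain Corollary~\ref{Th:2Dmsep} from Theorem~\ref{Th:msep}. The square-lattice graph of the 2-D cluster state is bipartite, so $k=2$ and the projectors $P_1,P_2$ of Eq.~\eqref{Eq:P1P2} are the relevant ones; Theorem~\ref{Th:msep} then guarantees that $(1+\beta)\mathbb{I}-(P_1+P_2)$ is a valid non-$m$-separability witness whenever $\beta\ge\max_{\mathcal{P}_m}\min_{\{A,\bar A\}}2^{-S(\rho_A)}$. So it suffices to evaluate this optimisation for the 2-D cluster state under $N\ge m(m-1)/2$ and show it equals $2^{-t^*}$ with $t^*=\lceil(-1+\sqrt{1+8(m-1)})/2\rceil$. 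Since $x\mapsto 2^{-x}$ is strictly decreasing, and since $t^*$ is exactly the least integer with $t^*(t^*+1)/2\ge m-1$, this is equivalent to the combinatorial identity $\min_{\mathcal{P}_m}\max_{\{A,\bar A\}}S(\rho_A)=t^*$, the maximum running over bipartitions of the $m$ blocks of $\mathcal{P}_m$. The tool for handling the entropies is the standard fact that for any graph state $S(\rho_A)$ is the $\mathrm{GF}(2)$-rank of the off-diagonal block $\Gamma_{A\bar A}$ of the adjacency matrix; for the square lattice this is the area-law counting already used in the main text, together with two elementary estimates — $S(\rho_A)$ is at most the number of sites on the smaller side of the cut having a neighbour across it, and at least the size of any matching among the broken edges.

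For the lower bound $\min_{\mathcal{P}_m}\max_{\{A,\bar A\}}S(\rho_A)\ge t^*$, which is the part that actually makes the witness valid, I would show that every $m$-block partition of a (sufficiently large) lattice with $N\ge m(m-1)/2$ sites has a bipartition of its blocks whose cut carries at least $t^*$ independent broken edges. The mechanism is isoperimetric: with $m$ blocks of total size $\ge m(m-1)/2$ one can always isolate a sub-collection of blocks whose union is fat enough that its lattice boundary, hence the rank of the induced cut, is forced up to $t^*$; turning the threshold $m(m-1)/2$ into the triangular-number bound $t^*(t^*+1)/2\ge m-1$ is precisely the grid isoperimetric inequality (a $v$-site region in a corner of the lattice has boundary of order $2\sqrt v$, sharp for the staircase-shaped regions that arise).

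For the matching upper bound $\min_{\mathcal{P}_m}\max_{\{A,\bar A\}}S(\rho_A)\le t^*$, which shows the coefficient $2^{-t^*}$ cannot be improved, I would exhibit one explicit partition: pack $m-1$ small blocks into a staircase/triangular corner region of the lattice of size $\binom{t^*+1}{2}$ — room for which is exactly what $N\ge m(m-1)/2$ supplies — with one bulk block taking the remaining sites, and check from the $\mathrm{GF}(2)$-rank description that every bipartition of these $m$ blocks severs at most $t^*$ independent edges, the $\mathrm{GF}(2)$-cancellations among broken edges being what keeps the rank from growing. Substituting $\beta=2^{-t^*}$ back into Theorem~\ref{Th:msep} then produces the stated $W_{m,C_2}$, and $m=2$ specialises to the genuine-entanglement witness of Corollary~\ref{Th:GHZ}.

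The hard part will be the lower bound $\min_{\mathcal{P}_m}\max_{\{A,\bar A\}}S(\rho_A)\ge t^*$: there is no control over the adversarial $m$-block partition, the blocks need not be connected, and the $\mathrm{GF}(2)$-rank is genuinely smaller than the edge-boundary, so a crude edge count does not land on the sharp constant. Pinning the constant down to the triangular number, and explaining why $N\ge m(m-1)/2$ (rather than some weaker bound) is the right hypothesis, will require a careful lattice-isoperimetry argument dovetailed with the extremal example from the upper-bound direction; choosing the staircase region and the block shapes so that no sub-selection of blocks ever produces a cut of rank exceeding $t^*$ is the second delicate point.
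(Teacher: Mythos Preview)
Your strategy matches the paper's: reduce to Theorem~\ref{Th:msep} with $k=2$, then control $\min_{\mathcal{P}_m}\max_{\{A,\bar A\}}S(\rho_A)$ via a lattice isoperimetric argument. Two points are worth flagging.

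First, the paper does \emph{not} prove the exact identity $\min_{\mathcal{P}_m}\max_{\{A,\bar A\}}S(\rho_A)=t^*$ that you set as your target; it explicitly states that it only establishes the lower bound $f_2(m)\ge\gamma(m)=t^*$, which is all that is needed for the witness inequality $\langle W_{m,C_2}\rangle\ge 0$ on $m$-separable states. The matching upper bound (tightness of the coefficient) is shown only for $m\le 5$, via the corner partition you describe. So your plan to prove the upper bound for all $m$ goes beyond what the paper attempts, and you should be aware that the ``second delicate point'' you identify --- controlling the rank of \emph{every} bipartition of the staircase construction --- is not handled in the paper and may well fail for large $m$.

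Second, the place where the hypothesis $N\ge m(m-1)/2$ enters in the paper is a clean combinatorial lemma you have not quite isolated: for any $m$-partition of an $N$-element set with $N\ge m(m-1)/2$, there is a bipartition $\{A,\bar A\}$ of the blocks with $m-1\le|A|\le N-(m-1)$. This is proved by ordering the blocks by size and taking either $A_m$ alone or $A_{m-1}\cup A_m$; the size hypothesis is used only to rule out both candidates being too small. Once you have a cut with $|A|\ge m-1$ (and $|\bar A|\ge m-1$), the isoperimetric step is exactly as you describe: the minimum-boundary configuration of a region of given area on the square lattice is a corner staircase, and a corner region on $m-1$ sites has entropy at least $\gamma(m)$. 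The paper's treatment of this last step is physicist-informal --- it asserts rather than proves that the corner triangle minimises $S(\rho_A)$ among all $A$ of given size --- so your concern about the gap between edge-boundary and $\mathrm{GF}(2)$-rank is well placed; making that step fully rigorous would actually sharpen the paper's argument.
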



We remark that the witnesses constructed in Corollaries \ref{Th:GHZ}, \ref{Th:1DN}, and \ref{Th:1Dmsep} are tight. Take the witness $W_{m,C_1}$ in Corollary \ref{Th:1Dmsep} as an example. There exists an $m$-separable state $\rho_m$ that saturates $\mathrm{Tr}(\rho_m W_{m,C_1})=0$. In addition, as $m\leq 5$, the witness $W_{m,C_2}$ in Corollary \ref{Th:2Dmsep} is also tight. Detailed discussions are presented in Supplementary Methods 1-4.

\section{Discussion}
In this work, we propose a systematic method to construct efficient witnesses to detect entanglement structures based on graph states.
Our method offers a standard tool for entanglement structure detection and multipartite quantum system benchmarking. The entanglement structure definitions and the associated witness method may further help to detect novel quantum phases, by investigating the entanglement properties of the ground states of related Hamiltonians \cite{Bei2019Meets}.

The witnesses proposed in this work can be directly generalized to stabilizer states \cite{gottesman1997stabilizer,Nielsen2011Quantum}, which are equivalent to graph states up to local Clifford operations \cite{Hein2006Graph}. It is interesting to extend the method to more general multipartite quantum states, such as the hyper-graph state \cite{Rossi2013hyper} and the tensor network state \cite{ORUS2014TNW}. Meanwhile, the generalization to the neural network state \cite{Carleo2017Solving} is also intriguing, since this kind of ansatz is able to represent broader quantum states with a volume law of entanglement entropy \cite{Deng2017Neural}, and is a fundamental block for potential artificial intelligence applications. In addition, one may utilize the proposed witness method to detect other multipartite entanglement properties, such as the entanglement depth and width \cite{Sorensen2001depth,Wolk2016width}, as $m$-separability in this work.  Moreover, one can also consider the self-testing scenario, such as (measurement-) device-independent settings \cite{Branciard2013Measurement,Liang2015Bell,Zhao2016Efficient}, which can help to manifest the entanglement structures with less assumptions on the devices. Furthermore, translating the proposed entanglement witnesses into a probabilistic scheme is also interesting \cite{Dimic2018Single,Saggio2019few}.


\section{Methods}\label{Sec:Method}
\subsection{Proof of Proposition \ref{Th:Fidmain}}\label{Sec:P:one}
\begin{proof}
First, let us prove the $\mathcal{P}$-bi-separable state case in Eq.~\eqref{Eq:up2sep}.
Since the $\mathcal{P}$-bi-separable state set $S_b^{\mathcal{P}}$ is convex, one only needs to consider the fidelity $|\bra{\Psi_b}G\rangle|^2$ of the pure state $\ket{\Psi_b}$ defined in Eq.~\eqref{Eq:2sep}.
It is known that the maximal value of the fidelity equals to the largest Schmidt coefficient of $\ket{G}$ with regard to the bipartition $\{A,\bar{A}\}$ \cite{Bourennane2004Experimental}, i.e.,
\begin{equation}\label{}
\max_{\ket{\Psi_b}}|\bra{\Psi_b}G\rangle|^2=\lambda_1,
\end{equation}
with the Schmidt decomposition  $\ket{G}=\sum_{i=1}^d\sqrt{\lambda_i}\ket{\Phi_i}_{A}\ket{\Phi'_i}_{\bar{A}}$ and $\lambda_1\geq \lambda_2\geq\cdots \geq\lambda_d$.
For general graph state $\ket{G}$, the spectrum of any reduced density matrix $\rho_A$ is flat, i.e., $\lambda_1=\lambda_2=\cdots\lambda_d$, with $d$ being the rank of $\rho_A$ \cite{Hein2004Multiparty}. As a result, one has
\begin{equation}\label{}
\begin{aligned}
S(\rho_A)=\log_2 d, \\
\lambda_i=\frac{1}{d}=2^{-S(\rho_A)}.
\end{aligned}
\end{equation}
To get an upper bound, one should maximize $2^{-S(\rho_A)}$ on all possible subsystem bipartitions and then get Eq.~\eqref{Eq:up2sep}.

Second, we prove the $\mathcal{P}$-fully separable state case in Eq.~\eqref{Eq:upfullsep}.
Similarly, we only need to upper bound the fidelity of the pure state $\ket{\Psi_{f}}$ defined in Eq.~\eqref{Eq:fullsep}, due to the convexity property of the $\mathcal{P}$-fully separable state set $S_{f}^{\mathcal{P}}$.  From the proof of Eq.~\eqref{Eq:up2sep} above, we know that the fidelity of the $\mathcal{P}$-bi-separable state satisfies the bound $|\bra{\Psi_b}G\rangle|^2\leq 2^{-S(\rho_A)}$, given a subsystem bipartition $\{A,\bar{A}\}$. It is not hard to see that these bounds all hold for $\ket{\Psi_{f}}$, since $S_{f}^{\mathcal{P}}\subset S_{b}^{\mathcal{P}}$. Thus, one can obtain the finest bound via minimizing over all possible bipartitions and finally get Eq.~\eqref{Eq:upfullsep}.
\end{proof}

The entanglement entropy $S(\rho_A)$ equals the rank of the adjacency matrix of the underlying bipartite graph, which can be efficiently calculated. Details are discussed in Supplementary Notes 1. While the optimization problems can be computationally hard due to the exponential number of possible bipartitions, one can solve it properly as the number of the subsystems $m$ is not too large. In addition, we can always have an upper bound on the minimization by only considering specific partitions. Analytical calculation of the optimization is possible for graph states with certain symmetries, such as the 1-D and 2-D cluster states and the GHZ state.
\subsection{Proof of Proposition \ref{Th:lowerbound}}\label{Sec:P:two}
\begin{proof}
As shown in Main Text, a graph state $\ket{G}$ can be written in the following form
\begin{equation}\label{}
\ket{G}\bra{G}=\prod_{i=1}^N\frac{S_i+\mathbb{I}}{2}=\prod_{l=1}^k P_l.
\end{equation}
Accordingly, Eq.~\eqref{Eq:FidLB} in Proposition \ref{Th:lowerbound} becomes,
\begin{equation}\label{Eq:FidLB1}
\left[\prod_{l=1}^k P_l+(k-1)\mathbb{I}\right]-\sum_{l=1}^k P_l\ge 0.
\end{equation}

Note that the projectors $P_l$ commute with each other, thus we can prove Eq.~\eqref{Eq:FidLB1} for all subspaces which are determined by the eigenvalues of all $P_l$. For the subspace where the eigenvalues of all $P_l$ are $1$, the inequality $(1+k-1)-k\geq 0$ holds. For the subspace where only one of $P_l$ takes value of $0$, the inequality $(0+k-1)-(k-1)\geq 0$ holds. Moreover, for the subspace in which there are more than one $P_l$ taking $0$, the inequality also holds. As a result, we finish the proof.
\end{proof}

\subsection{Proofs of Theorem \ref{Th:main} and Theorem \ref{Th:msep}}\label{Sec:Th:onetwo}
Proof of Theorem \ref{Th:main}:
\begin{proof}
The proof is to combine Proposition \ref{Th:Fidmain} and \ref{Th:lowerbound}. Here we only show the proof of Eq.~\eqref{Eq:witen}, and one can prove Eq.~\eqref{Eq:witGen} in a similar way.
To be specific, one needs to show that any $\mathcal{P}$-fully separable state satisfies $\langle W_{f}^{\mathcal{P}}\rangle \ge0$, that is,
\begin{equation}\label{Eq:profull}
\begin{aligned}
\mathrm{Tr}\left\{\sum_{l=1}^k P_l\rho_f\right\}&\leq \mathrm{Tr}\left\{\left[(k-1)\mathbb{I}+\ket{G}\bra{G}\right]\rho_f\right\}\\
&\leq (k-1)+\min_{\{A,\bar{A}\}} 2^{-S(\rho_A)}.
\end{aligned}
\end{equation}
Here the first and the second inequalities are right on account of Proposition \ref{Th:lowerbound} and  \ref{Th:Fidmain}, respectively.
\end{proof}

Proof of Theorem \ref{Th:msep}:
\begin{proof}
With Eq.~\eqref{Eq:upfullsep} one can bound the fidelity from any $\mathcal{P}$-fully separable state to a graph state $\ket{G}$. The $m$-separable state set $S_m$ contains all the state $\rho_m$ which can be written as the convex mixture of pure $m$-separable state, $\rho_m=\sum_ip_i\ket{\Psi_m^i}\bra{\Psi_m^i}$,  where the partition for each constitute $\ket{\Psi_m^i}$ needs not to be the same. Hence one can bound the fidelity from $\rho_m$ to a graph state $\ket{G}$ by investigating all possible partitions, i.e.,
\begin{equation}\label{Eq:upmsep}
\mathrm{Tr}(\ket{G}\bra{G}\rho_m)\leq \max_{\mathcal{P}_m}\min_{\{A,\bar{A}\}} 2^{-S(\rho_A)},
\end{equation}
where the maximization takes over all possible partitions $\mathcal{P}_m$ with $m$ subsystems, the minimization takes over all bipartition of $\mathcal{P}_m$. Then like in Eq.~\eqref{Eq:profull}, by combing Eqs.~\eqref{Eq:FidLB} and \eqref{Eq:upmsep} we finish the proof.
\end{proof}

The optimization problem in Theorem \ref{Th:msep} over the partitions is generally hard, since there are about ${m^N}/{m!}$ possible ways to partition $N$ qubits into $m$ subsystems. For example, when $N$ is large (say, in the order of 70 qubits), the number of different partitions is exponentially large even with a small separability number $m$.
Surprisingly, for several widely-used types of graph states, such as 1-D, 2-D cluster states, and the GHZ state, we find the analytical solutions to the optimization problem, as shown in Corollaries in Main Text.
\subsection{Robustness of entanglement structure witnesses}\label{Sec:robust}
In this section, we discuss the robustness of the proposed witnesses in Theorem \ref{Th:main} and \ref{Th:msep}. In practical experiments, the prepared state $\rho$ deviates from the target graph state $\ket{G}$ due to some nonnegligible noise. Here we utilize the following white noise model to quantify the robustness of the witnesses.
\begin{equation}\label{Eq:whitenoise}
\rho=(1-p_{noise})\ket{G}\bra{G}+p_{noise}\frac{\mathbb{I}}{2^N},
\end{equation}
which is a mixture of the original state $\ket{G}$ and the maximally mixed state with coefficient $p_{noise}$. We will find the largest $p_{limit}$, such that the witness can detect the corresponding entanglement structure when $p_{noise}<p_{limit}$. Thus $p_{limit}$ reflects the robustness of the witness.

Let us first consider the entanglement witness $W_{f}^{\mathcal{P}}$ in Eq.~\eqref{Eq:witen} of Theorem \ref{Th:main}. For clearness, we denote $C_{min}=\min_{\{A,\bar{A}\}} 2^{-S(\rho_A)}$. Insert the state of Eq.~\eqref{Eq:whitenoise} into the witness, one gets,
\begin{equation}\label{Eq:tolFull}
\begin{aligned}
\mathrm{Tr}(W_{f}^{\mathcal{P}}\rho)&=\mathrm{Tr}\Bigg{\{}\left[\left(k-1+C_{min}\right)\mathbb{I}-\sum_{l=1}^k P_l \right]\\
&\ \ \ \ \ \ \ \ \ \left[p_{noise}\frac{\mathbb{I}}{2^N}+(1-p_{noise})\ket{G}\bra{G}\right]\Bigg{\}}\\
&=p_{noise}\left(k-1+C_{min}-2^{-N}\sum_{l=1}^k 2^{N-n_l}\right)+\\
&\ \ \ \ \ \ \ \ \ \ \ \ \ (1-p_{noise})(k-1+C_{min}-k)\\
&=p_{noise}\left(k-\sum_{l=1}^k 2^{-n_l}\right)+(C_{min}-1),
\end{aligned}
\end{equation}
where $n_l=|V_l|$ is the qubit number in each vertex set with different color, and in the second equality we employ the facts that $\mathrm{Tr}(P_l)=2^{N-n_l}$ and $\mathrm{Tr}(P_l\ket{G}\bra{G})=1$. Let the above expectation value less than zero, one has
\begin{equation}\label{Eq:tolFullnew}
p_{noise}<\frac{1-C_{min}}{k-\sum_{l=1}^k 2^{-n_l}}.
\end{equation}

Similarly, for the $\mathcal{P}$-genuine entanglement witness and the non-m-separability witness in Eqs.~\eqref{Eq:witGen} and \eqref{Eq:msep}, we have,
\begin{equation}\label{Eq:tolGennew}
\begin{aligned}
p_{noise} &< \frac{1-C_{max}}{k-\sum_{l=1}^k 2^{-n_l}}, \\
p_{noise} &< \frac{1-C_{m}}{k-\sum_{l=1}^k 2^{-n_l}},
\end{aligned}
\end{equation}
where we denote the optimizations $\max_{\{A,\bar{A}\}} 2^{-S(\rho_A)}$ and $\max_{\mathcal{P}_m}\min_{\{A,\bar{A}\}} 2^{-S(\rho_A)}$ as $C_{max}$ and $C_{m}$ respectively.

Moreover, it is not hard to see that all the coefficients $C_{min}$, $C_{max}$ and $C_{m}$ are not larger than $0.5$. Thus, for any entanglement structure witness, one has
\begin{equation}\label{Eq:tolbound}
p_{limit}\geq \frac{0.5}{k-\sum_{l=1}^k 2^{-n_l}}
> \frac{1}{2k}.
\end{equation}
As a result, our entanglement structure witness is quite robust to noise, since the largest noise tolerance $p_{limit}$ is just related to the chromatic number of the graph.

%

\section{Acknowledgments}
We acknowledge Y.-C.~Liang for the insightful discussions. This work was supported by the National Natural Science Foundation of China Grants No.~11875173 and No.~11674193, and the National Key R\&D Program of China Grants No.~2017YFA0303900 and No.~2017YFA0304004, and the Zhongguancun Haihua Institute for Frontier Information Technology. Xiao Yuan was supported by the EPSRC National Quantum Technology Hub in Networked Quantum Information Technology (EP/M013243/1).



\onecolumngrid

\appendix
\section*{The structure of Supplementary Material}\label{Sec:general}
The structure of Supplementary Material is organized as follows. In Supplementary Notes 1, we apply Theorem 1 in Main Text to several graph states, such as 1-D and 2-D cluster states, and prove the entanglement structure witnesses shown in Main Text. The advantage of witnessing subsystem entanglement structures and the generalization to multi-color graph states are also contained in Supplementary Notes 1.  We put the proofs of Corollaries \ref{Th:GHZ} to \ref{Th:2Dmsep} in Supplementary Methods 1 to Supplementary Methods 4, respectively. We also discuss the tightness of each witness at the end of each corollary.

\section*{SUPPLEMENTARY NOTES 1: Witness entanglement structures of graph states}\label{Sec:general}
In this section, to illustrate the proposed entanglement structure witnesses, we apply Theorem 1 to several widely-used graph states, such as 1-D and 2-D cluster states and prove the results shown in Main Text. In addition, we also discuss the advantage of witnessing subsystem entanglement structures by only post-processing the measurement results, and the generalization to multi-color graph states.
\subsection*{Entanglement entropy of graph state}\label{subsec:entfor}
Here, we briefly review the formula of the entanglement entropy of graph state \cite{Hein2004Multiparty}, which is helpful for the following discussions.

Any simple graph $G$ can be uniquely determined by its symmetric adjacency matrix denoted as $\Gamma$, with $\Gamma_{i,j}=1$ iff $(i,j)\in E$. Suppose the vertex set $V=\{N\}$ is partitioned into two complementary subsets $A$ and $\bar{A}$, the adjacency matrix $\Gamma$ can be arranged in the following form,
\begin{equation}
 \Gamma_G=\left(
 \begin{array}{cc}
    \Gamma_A & \Gamma_{A\bar{A}}\\
    \Gamma_{A\bar{A}}^T & \Gamma_{\bar{A}}\\
  \end{array}
\right),
\end{equation}
where $\Gamma_A$, $\Gamma_{\bar{A}}$ describe the connections inside each subsystem, and the off-diagonal submatrix $\Gamma_{A\bar{A}}$ is for the ones between them.

Given a graph state $\ket{G}$ with its associated graph $G$, the reduced density matrix of a subsystem $A$ is $\rho_A=\mathrm{Tr}_{\bar{A}}(\ket{G}\bra{G})$, where the partial trace is on $\bar{A}$. The explicit formula of the entanglement entropy is
\begin{equation}\label{Eq:Entfor}
 S(\rho_A)=\mathrm{rank}(\Gamma_{A\bar{A}})
\end{equation}
where the $\mathrm{rank}$ is on the binary field $\mathbb{F}_2$, and $S(\rho)=-\mathrm{Tr}[\rho\log_2\rho]$ is Von Neumann entropy. Note that Renyi-$\alpha$ entropy $S_\alpha(\rho_A)$ of any order is also suitable here, since the spectrum of $\rho_A$ is flat for graph states \cite{Hein2004Multiparty}.

\subsection*{Examples: 1-D and 2-D cluster states}
Now we apply Theorem 1 to detect entanglement structures of 1-D and 2-D cluster states. The corresponding graphs of these states are all $2$-colorable, i.e., $k=2$, thus one can realize the witnesses with only two local measurement settings. For clearness, the vertexes in the subsets $V_1$ and $V_2$ are associated with red and blue colors respectively, as shown in Fig.~\ref{Fig:allstate}. According to Eq.~(10) in Main text, we write the projectors for each subset as,
\begin{equation}\label{Eq:P1P2}
\begin{aligned}
P_1 &= \prod_{red\ i}  \frac{S_i+\mathbb{I}}{2},\\
P_2 &= \prod_{blue\ i} \frac{S_i+\mathbb{I}}{2}.
\end{aligned}
\end{equation}

We first detect the entanglement structures of the 1-D cluster state $\ket{C_1}$ using the two projectors $P_1$ and $P_2$ defined in Eq.~\eqref{Eq:P1P2} with stabilizers of $\ket{C_1}$. Consider an example of tripartition $\mathcal{P}_3=\{A_1,A_2,A_3\}$ as shown in Fig.~\ref{Fig:allstate}(a), there are three ways to divide the subsystems into two sets, i.e., $\{A,\bar{A}\}=\{A_1, A_2A_3\}$, $\{A_2, A_1A_3\}$, $\{A_3, A_1A_2\}$. And the corresponding entanglement entropies are $S(\rho_{A_1})=S(\rho_{A_3})=1$ and $S(\rho_{A_2})=2$ , which is a manifest of the area law of entanglement entropy \cite{Eisert2010area}. Thus the maximal and minimal entropy is $2$ and $1$.
According to Theorem 1, the operators to witness $\mathcal{P}_3$-entanglement structure are given by,
\begin{equation}\label{Eq:witen1D}
\begin{aligned}
W_{f,C_1}^{\mathcal{P}_3} &= \frac{5}{4}\mathbb{I}-(P_1+P_2), \\
W_{b,C_1}^{\mathcal{P}_3} &=\frac{3}{2}\mathbb{I}-(P_1+P_2),
\end{aligned}
\end{equation}
where the two projectors $P_1$ and $P_2$ are defined in Eq.~\eqref{Eq:P1P2} with the graph of Fig.~\ref{Fig:allstate}(a). Similar analysis works for other general partitions.

Here we show how to calculate the entanglement entropy $S(\rho_{A_2})$ by using the formula in Eq.~\eqref{Eq:Entfor}, and the entanglement entropy of other subsystems can be calculated similarly.
The matrix $\Gamma_{A_2,\bar{A_2}}$ which describes the connections between $A_2$ and $\bar{A_2}=A_1A_3$ shows,
\begin{equation}
\Gamma_{A_2,\bar{A_2}}=
\bordermatrix{%
&1&4&5&6\cr
2&1 & 0 & 0 &0\cr
3&0& 1 & 0 & 0\cr
},
\end{equation}
where the indexes of the row and column label the vertexes in $A_2$ and $\bar{A_2}$ respectively. In fact, one just needs to consider following submatrix,
\begin{equation}\label{Eq:1DsubM}
\bordermatrix{%
&1&4\cr
2&1& 0\cr
3&0& 1\cr
},
\end{equation}
which is obtained from $\Gamma_{A_2,\bar{A_2}}$ by deleting the last two columns, and shares the same rank with $\Gamma_{A_2,\bar{A_2}}$.  Note that there is no edge between the vertexes $5,6$ in $\bar{A_2}$ and the ones in $A_2$. The matrix in Eq.~\eqref{Eq:1DsubM} is an identity matrix with rank $2$, thus $S(\rho_{A_2})=2$.
Generally speaking, one only needs to consider the vertexes which are related to the edges crossing $A$ and $\bar{A}$, when calculating the rank of the matrix $\Gamma_{A,\bar{A}}$.

Next, we consider the 2-D cluster state $\ket{C_2}$ that is defined on a $2$-D square lattice. As an example, we consider a tripartition of a $5\times5$ lattice as shown in Fig.~\ref{Fig:allstate}(b).
Similar as the 1-D cluster state case, the corresponding entanglement entropies are $S(\rho_{A_1})=S(\rho_{A_3})=5$ and $S(\rho_{A_2})=4$. According to Theorem 1, the operators to witness $\mathcal{P}_3$-entanglement structure are given by,
\begin{equation}\label{Eq:witen2D}
\begin{aligned}
W_{f,C_2}^{\mathcal{P}_3} &= \frac{33}{32}\mathbb{I}-(P_1+P_2), \\
W_{b,C_2}^{\mathcal{P}_3} &= \frac{17}{16}\mathbb{I}-(P_1+P_2),
\end{aligned}
\end{equation}
where the two projectors $P_1$ and $P_2$ are defined in Eq.~\eqref{Eq:P1P2} with the graph of Fig.~\ref{Fig:allstate}(b). Similar analysis works for other general partitions.

Here we show how to calculate the entanglement entropy $S(\rho_{A_2})$ by using the formula in Eq.~\eqref{Eq:Entfor}, and the entanglement entropy of other subsystems can be calculated similarly. As mentioned in the 1-D case, one only needs to consider the following matrix which shares same rank with $\Gamma_{A_2,\bar{A_2}}$
\begin{equation}\label{Eq:2DsubM}
\bordermatrix{%
&5&6&7&8&9\cr
1&1 & 0 & 0 &0&0\cr
2&0& 1 & 0 & 0&0\cr
3&0 & 0 & 1 &1&0\cr
4&0& 0 & 0 & 0&1\cr
},
\end{equation}
which is obtained from $\Gamma_{A_2,\bar{A_2}}$ by ignoring the vertexes in the bulks of both $A_2$ and $\bar{A_2}$. It is clear that the rank of the matrix in Eq.~\eqref{Eq:2DsubM} is $4$, since the four row vectors are linearly independent. As a result, $S(\rho_{A_2})=4$.


\subsection{Witnessing subsystem entanglement structures}
\begin{figure}[t]
\centering
\includegraphics[width=0.45\textwidth]{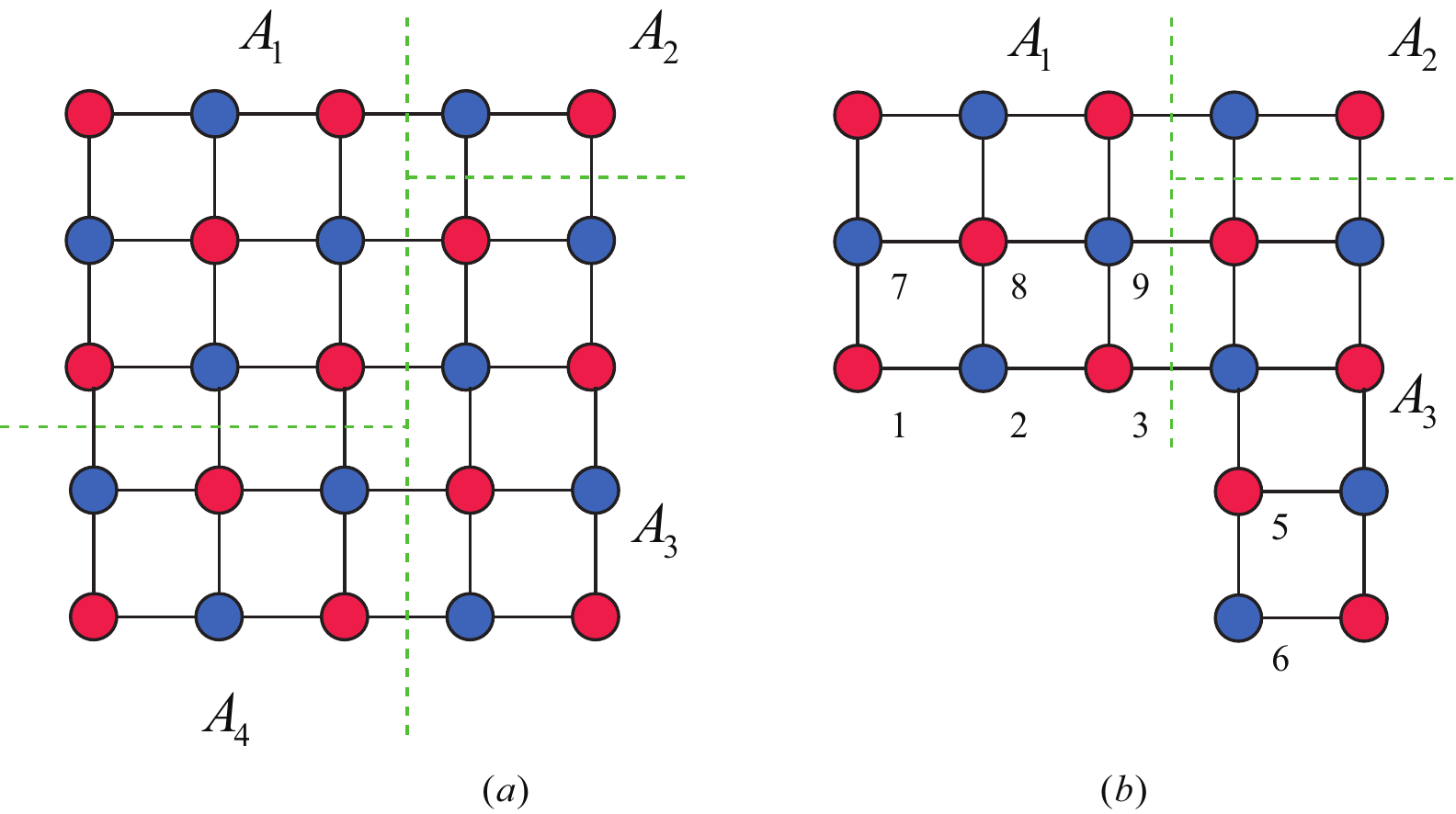}
\caption{Witnessing subsystem entanglement structures.}\label{Fig:subsystem}
\end{figure}
As mentioned in Main Text, the witnesses in Theorem 1 can also be used to detect the entanglement structures of a few of subsystems, by only post-processing the original measurement results. In the following, we take the 2-D cluster state to illustrate this advantage.

As shown in Fig.~\ref{Fig:subsystem}(a), the whole 2-D lattice is partitioned into four parts  $\{A_1,A_2,A_3, A_4\}$. Suppose one only cares about the entanglement structures among subsystems $A_1, A_2$ and $A_3$, the witnesses shown in Theorem 1 can also be applied in this scenario.
To be specific, the projectors $P_1$ and $P_2$ appearing in the witnesses should be changed to $P_1'$ and $P_2'$, which are related to the new graph state $\ket{G_{A_1A_2A_3}}$. $G_{A_1A_2A_3}$ is the subgraph of $G$ with edges and vertexes related to $A_2$ deleted, i.e., $G_{A_1A_2A_3}=G-V_{A_4}$, as shown in Fig.~\ref{Fig:subsystem} (b). Similar as the 2-D cluster case shown previously, now we should calculate the entanglement entropy with respect to the new graph state $\ket{G_{A_1A_2A_3}}$. It is not hard to find that $S(\rho'_{A_1})=S(\rho'_{A_3})=3$ and $S(\rho'_{A_2})=2$, where $\rho'_{A_i}=\mathrm{Tr}_{\bar{A_i}}(\ket{G_{A_1A_2A_3}}\bra{G_{A_1A_2A_3}})$ for $i=1,2,3$.

As a result, according to Theorem 1, the operators to witness this $\mathcal{P}_3=\{A_1,A_2,A_3\}$-entanglement structure are given by,
\begin{equation}\label{Eq:witsub}
\begin{aligned}
W_{f,sub}^{\mathcal{P}_3} &= \frac{9}{8}\mathbb{I}-(P_1'+P_2'), \\
W_{b,sub}^{\mathcal{P}_3} &= \frac{5}{4}\mathbb{I}-(P_1'+P_2'),
\end{aligned}
\end{equation}
where the two projectors $P_1'$ and $P_2'$ are defined in Eq.~\eqref{Eq:P1P2} with stabilizers of the graph state $\ket{G_{A_1A_2A_3}}$. Now the stabilizers which constitute the projectors are restricted on the subsystem $A_1A_2A_3$. For example, $S_1'=X_1Z_2Z_7$ and $S_2'=X_2Z_1Z_8Z_3$ in $P_1'$ and $P_2'$ in Fig.~\ref{Fig:subsystem} (b).

Note that the expectation values of them can also be evaluated from the two original local measurement settings $\bigotimes_{i\in V_1} X_i \bigotimes_{j\in V_2} Z_j$ and $\bigotimes_{i\in V_1} Z_i \bigotimes_{j\in V_2} X_j$, which are employed to measure $P_1$ and $P_2$, respectively. Consequently, one can detect the entanglement structures of any subset of subsystems by only post-processing the measurement results, without conducting the experiment again.
\subsection*{Multi-color graph state}
As shown in Main Text, the number of local measurement settings in the detection is directly related to the colorability of the corresponding graph. That is, the witnesses can be realized with $k$ local measurements when the corresponding graph $G$ is $k$-colorable. We have shown several widely-used graph states whose graphes are 2-colorable. Here we give a $3$-colorable graph state and construct the entanglement structure witnesses according to Theorem 1.

Before that, we remark that one may reduce the chromatic number of the underlying graph by applying local complementation, which can be realized by local Clifford operation on the graph state \cite{Van2004Graphical,Hein2006Graph}. To be specific, local complementation $\tau_i$ on $G$ with respective to the vertex $i$ is to delete edges between vertexes in the neighborhood set $N_i$ if they are originally connected; or to add edges otherwise. The corresponding local Clifford unitary to realize this graph transformation shows,
\begin{equation}\label{}
U_i(G)=\exp(-i\frac{\pi}{4}X_i)\bigotimes_{j\in N_i}\exp(i\frac{\pi}{4}Z_j),
\end{equation}
and one has $\ket{\tau_i(G)}=U_i(G)\ket{G}$.
For example, as shown in Fig.~\ref{Fig:kcolor} (a), the fully connected graph can be transformed to the star graph by local complementation, and one is $N$-colorable and the other is $2$-colorable.

Then let us consider the five-qubit ring state $\ket{R_5}$ shown in Fig.~\ref{Fig:kcolor}(b), with the corresponding graph being 3-colorable. The chromatic number decides how many local measurements that one needs to lower bound the fidelity between the separable states with the graph state, as shown in Proposition 2. In this case, we needs 3 local measurement settings according to the red, blue, and yellow vertexes to obtain the expectation values of the following three projectors,
\begin{equation}\label{Eq:P1P2P3}
\begin{aligned}
P_1 &= \prod_{red\ i}  \frac{S_i+\mathbb{I}}{2},\\
P_2 &= \prod_{blue\ i} \frac{S_i+\mathbb{I}}{2},\\
P_3 &= \prod_{yellow\ i} \frac{S_i+\mathbb{I}}{2}.
\end{aligned}
\end{equation}

On the other hand, the optimizations in the witnesses shown in Theorem 1 just relate to the entanglement entropy of the graph state, not the chromatic number of it. The entanglement entropies of the subsystems show $S(\rho_{A_1})=1$ and $S(\rho_{A_2})=S(\rho_{A_3})=2$. According to Theorem 1, the operators to witness  $\mathcal{P}_3$-entanglement structure are given by,
\begin{equation}\label{Eq:wit}
\begin{aligned}
W_{f,R_5}^{\mathcal{P}_3} &= \frac{9}{4}\mathbb{I}-(P_1+P_2+P_3),\\
W_{b,R_5}^{\mathcal{P}_3} &= \frac{5}{2}\mathbb{I}-(P_1+P_2+P_3),
\end{aligned}
\end{equation}
where the two projectors $P_1, P_2$ and $P_3$ are defined in Eq.~\eqref{Eq:P1P2P3} with stabilizers of the ring state $\ket{R_5}$.
\begin{figure}[t]
\centering
\includegraphics[width=0.45\textwidth]{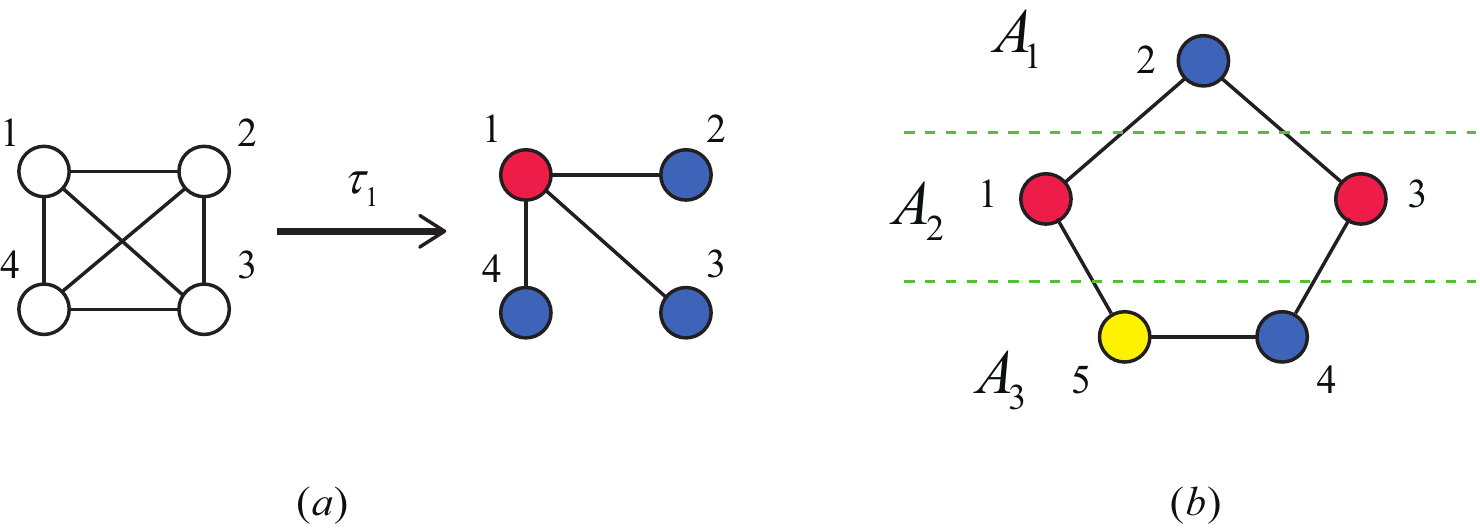}
\caption{(a) The fully connected graph which is $N$-colorable, and the local complementation can transform it to the star graph which is 2-colorable. (b) Five-vertex ring graph is $3$-colorable, and the whole system is partitioned into 3 parts.}\label{Fig:kcolor}
\end{figure}

\section*{SUPPLEMENTARY Methods 1: Proof of Corollary \ref{Th:GHZ}}\label{Sec:Co:one}
\begin{proof}
The genuine entanglement witness $W_{b}^{\mathcal{P}_N}$ in Eq.~\eqref{Eq:GHZgenuine} is obtained from the second equation in Theorem 1, by considering the $N$-partition $\mathcal{P}_N$. Thus, we need to find the solution of $\max_{\{A,\bar{A}\}} 2^{-S(\rho_A)}$, or equivalently, $\min_{\{A,\bar{A}\}} S(\rho_A)$ of the $N$-partite graph state $\ket{G}$.
Here we show that $\min_{\{A,\bar{A}\}} S(\rho_A)=1$. Since $\ket{G}$ is an entangled pure state, $S(\rho_A)>0$ for any $A$. Because the entanglement spectrum is flat for any graph state, $S(\rho_A)$ is at least $1$ with spectrum $\{\frac{1}{2}, \frac{1}{2}\}$. We can choose any single qubit as $A$ such that $S(\rho_A)=1$ and finish the proof.
\end{proof}

In the following, we show that the witness $W_{b}^{\mathcal{P}_N}$ is tight (or optimal), in the sense that there exists a bi-separable state $\rho_b$ that saturates $\mathrm{Tr}(\rho_b W_{b}^{\mathcal{P}_N})=0$. For simplicity, we consider the case where the underlying graph is $k=2$-colorable. For general $k$ case, the tightness can be proved similarly. As $k=2$, the witness in Eq.~\eqref{Eq:GHZgenuine} becomes,
\begin{equation}\label{Eq:GHZgenuine2}
W_{b}^{\mathcal{P}_N} = \frac{3}{2}\mathbb{I}-(P_1+P_2),
\end{equation}
which is suitable for 1-D and 2-D cluster states and the GHZ state shown in Fig.~\ref{Fig:allstate}, and the projectors $P_1$ and $P_2$ are defined in Eq.~\eqref{Eq:P1P2}.

To show the tightness, we give a specific bi-separable state such that $\mathrm{Tr}(\rho_b W_{b}^{\mathcal{P}_N})=0$, or equivalently $\langle P_1+P_2\rangle =\frac{3}{2}$.
$P_1$ and $P_2$ can be written explicitly as a summation of stabilizers in the form $K_{\vec{p}}=S_1^{p_1}S_2^{p_2}\cdots S_N^{p_N}$, with $\vec{p}=(p_1,p_2\cdots p_N)$ a binary vector. To be specific, $P_1$ and $P_2$ contains all the stabilizers generated by the independent stabilizers $S_i$ of the red and blue vertexes respectively,
\begin{equation}\label{Eq:P1P2new}
\begin{aligned}
&P_1=\sum_{red\ \vec{p}} \frac{K_{\vec{p}}}{2^{n_r}},\\
&P_2=\sum_{blue\ \vec{q}} \frac{K_{\vec{q}}}{2^{n_b}},
\end{aligned}
\end{equation}
where $n_r$ and $n_b$ denote the number of red and blue vertexes respectively, with $n_r+n_b=N$;
$\vec{p}$ denote red type vector whose $p_i=0$ for all blue $i$; $\vec{q}$ denote blue type vector whose $q_{i'}=0$ for all red $i'$.

Now suppose the first vertex is red, we choose $\ket{\Psi_{b}}=\ket{0}_1\otimes \ket{G_{\{2,3,\cdots,N\}}}$, where the first qubit is set as $\ket{0}$ and the remaining qubits hold a graph state $\ket{G_{\{2,3,\cdots,N\}}}$. The corresponding graph $G_{\{2,3,\cdots,N\}}$ is obtained from the original graph $G$ via deleting the vertex 1 and the edges connected to it. First,
considering the stabilizer $K_{\vec{p}}$ in the projector $P_1$. If $K_{\vec{p}}$ does not contain
$S_1$, one has $\langle K_{\vec{p}} \rangle=1$, since it is still a stabilizer of the state $\ket{G_{\{2,3,\cdots,N\}}}$, when restricted on the qubits $\{2,\cdots,N\}$; otherwise $\langle K_{\vec{p}} \rangle=0$, since $S_1$ contains a Pauli $X_1$ on the first qubit and $\bra{0}X\ket{0}=0$.
As a result, we have $\langle P_1\rangle=\frac1{2}$, because there are one half of $K_{\vec{p}}$ containing $S_1$. Then, considering the stabilizer $K_{\vec{q}}$ in the projector $P_2$. $\langle K_{\vec{q}}\rangle=1$ for any $K_{\vec{q}}$ in $P_2$, since $K_{\vec{q}}$ contains a Pauli $Z_1$ or $\mathbb{I}_1$ on the first qubit with $\bra{0}Z(\mathbb{I})\ket{0}=1$, and the remaining part of it is actually a stabilizer of the state $\ket{G_{\{2,3,\cdots,N\}}}$. As a result, one has
$\langle P_2\rangle=1$ and hence $\langle P_1+P_2\rangle=\frac{3}{2}$.

\section*{SUPPLEMENTARY Methods 2: Proof of Corollary \ref{Th:1DN}}\label{Sec:Co:two}
As mentioned in Main Text, Corollary \ref{Th:1DN} is obtained from the first equation in Theorem 1 by taking all the subsystems just containing one qubit, i.e., an $N$-partition $\mathcal{P}_N$.
To prove it, one needs to find the solution of the optimization problem about entanglement entropy, say, $\min_{\{A,\bar{A}\}} 2^{-S(\rho_A)}$, where $A$ is a subsystem and $S(\rho_A)=\mathrm{Tr}_{\bar{A}}(\ket{C}\bra{C})$. Equivalently, one should find the solution of the optimization $\max_{\{A,\bar{A}\}}S(\rho_A)$, denoted as $S_{max}$ for simplicity. In the following, we prove Corollary \ref{Th:1DN} for 1-D and 2-D cluster states, by showing that $S_{max}=\lfloor\frac{N}{2}\rfloor$, respectively.
\subsection*{Proof of Corollary \ref{Th:1DN} of 1-D cluster state}\label{subsec:Cor1}
\begin{proof}
Here, we show that $S_{max}=\lfloor\frac{N}{2}\rfloor$ for the 1-D cluster state. First, note the entanglement entropy should be no more than the qubit number in the subsystem, thus one has $S(\rho_A)\leq |A|$ and $S(\rho_{\bar{A}})\leq|\bar{A}|$. On account of $S(\rho_A)=S(\rho_{\bar{A}})$, one further has $S(\rho_A)\leq \min\{|A|, |\bar{A}|\}$. As a result, $S_{max}\leq\lfloor\frac{N}{2}\rfloor$.

Then, we choose the system $A$ composed of all the qubits on the odd sites to saturate this bound. One can calculate the corresponding entanglement entropy with the formula given in Eq.~\eqref{Eq:Entfor}. Here, we evaluate $S(\rho_A)$ with another more intuitive method by distilling EPR pairs between $A$ and $\bar{A}$ with local unitary operations \cite{Briegel2001Persistent,Damian2007Entanglement}. Since local unitary does not change the entanglement entropy, one can properly evaluate it by counting the final number of EPR pairs.

As shown in Fig.~\ref{Fig:1Ddistill}, we divide the system into two parts according to the odd/even position of the qubits (or red/blue according to Fig.~\ref{Fig:allstate}). First, apply ``local'' unitary Controlled-Z operation $CZ^{\{1,3\}}$ on qubits $1$ and $3$, where locality is in the sense that one considers $A$ and $\bar{A}$ as two subsystems. Second, apply local complementation on qubit $1$, which can be realized by local Clifford unitary \cite{Van2004Graphical,Hein2006Graph}. Local complementation $\tau_i$ on $G$ with respective to the vertex $i$ is to add edges between the vertexes in the neighborhood set $N_i$ under module 2. As a result, the edge $\{2,3\}$ is deleted. Third, apply $CZ^{\{1,3\}}$ again and there is a EPR pair appearing between qubits $1$, $2$.  Here we call the two-qubit graph state $\frac{1}{\sqrt{2}}(\ket{0+}+\ket{1-})$ EPR pair without confusion. Iterating this process, one finally can distill $\lfloor\frac{N}{2}\rfloor$ EPR pairs.
Consequently, one has $S(\rho_{odd})=\lfloor\frac{N}{2}\rfloor$.
\end{proof}
\begin{figure}[thb]
\centering
\resizebox{8cm}{!}{\includegraphics[scale=1]{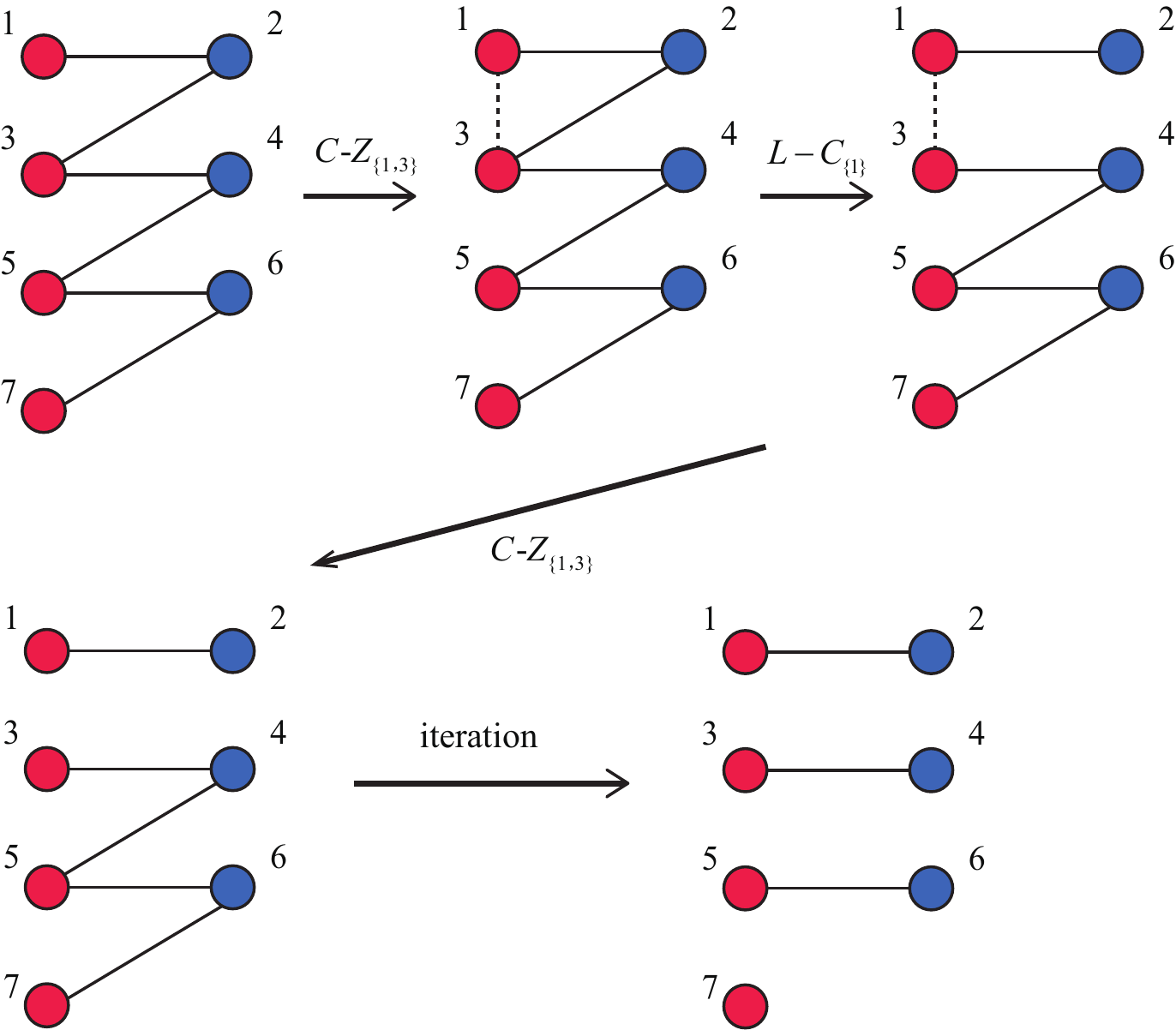}}
\caption{An illustration of distilling EPR pairs from the 1-D cluster state by local unitary operations.}
\label{Fig:1Ddistill}
\end{figure}
\subsection*{Proof of Corollary \ref{Th:1DN} of 2-D cluster state}\label{subsec:Cor12D}
\begin{proof}
Here, we show that $S_{max}=\lfloor\frac{N}{2}\rfloor$ for 2-D cluster state.  One has $S_{max}\leq\lfloor\frac{N}{2}\rfloor$ as in the 1-D case, since the entanglement entropy can not be larger than the number of qubit in the subsystems. In the following, we give examples of subsystem $A$ to saturate this bound. The chosen $A$ depends on the even/odd property of the total number of qubits.

First, let us consider the even-qubit case, i.e., the qubit number $N=n\times n$ where $n$ is even. We select all the odd columns to constitute the subsystem $A$, as shown in Fig.~\ref{Fig:even2D}. Then one can prove that $S(\rho_A)=\frac{N}{2}$ by distilling EPR pairs as in the 1-D case. To be specific, one first applies $CZ$ operations inside both subsystems $A$ and $\bar{A}$, such that the quantum state becomes $n$ independent 1-D cluster states each with $n$ qubits. Then by distilling EPR for each 1-D cluster state, finally one can obtain totally $n\times \frac{n}{2}=\frac{N}{2}$ EPR pairs, thus $S(\rho_A)=\frac{N}{2}$.

For the odd-qubit case, $N=n\times n$ with $n$ being odd. The previous choosing style can not make $S(\rho_A)=\lfloor\frac{N}{2}\rfloor$. Here we show a modified one. The first $n-1$ columns belong to $A$ and $\bar{A}$ in succession as before. For the qubits in the final column, we successively distribute the qubits to $A$ and $\bar{A}$. Then as in the even-qubit case, one applies $CZ$ operations inside both subsystems and distill EPR pairs for several 1-D cluster states, as shown in Fig.~\ref{Fig:even2D}. Finally, we also get $S(\rho_A)=\lfloor\frac{N}{2}\rfloor$.
\end{proof}
We remark that Corollary \ref{Th:1DN} also holds for the 2-D cluster state with general rectangle lattice $n_1\times n_2$. With loss of generality, assume that $n_1$ is odd and $n_2$ is even, one can distill $\lfloor\frac{N}{2}\rfloor$ EPR pairs by choosing $A$ containing all the odd columns as in the above proof. We further conjecture that Corollary \ref{Th:1DN} holds for any (such as 3-D) cluster states.

Finally, we show that the witness $W_{f,C}^{\mathcal{P}_N}$ in Corollary \ref{Th:1DN} is tight.
We give a fully separable state $\ket{\Psi_{f}}=\bigotimes_{red\ i} \ket{+}_i \bigotimes_{blue\ j} \ket{0}_j$ to saturate the bound, where the red (blue) qubits are set as $\ket{+}$ ($\ket{0}$) respectively. Similar as the discussion of $W_{b}^{\mathcal{P}_N}$ in Sec.~\ref{Sec:Co:one}, one has $\langle P_1\rangle=1$, since $\langle K_{\vec{p}} \rangle=1$ for all $K_{\vec{p}}$ in $P_1$; $\langle P_2\rangle=2^{-\lfloor\frac{N}{2}\rfloor}$, since all $\langle K_{\vec{q}} \rangle=0$ in $P_2$ except the $\mathbb{I}$. Consequently, $\langle P_1+P_2\rangle=1+2^{-\lfloor\frac{N}{2}\rfloor}$.
\begin{figure}[thb]
\centering
\resizebox{7cm}{!}{\includegraphics[scale=1]{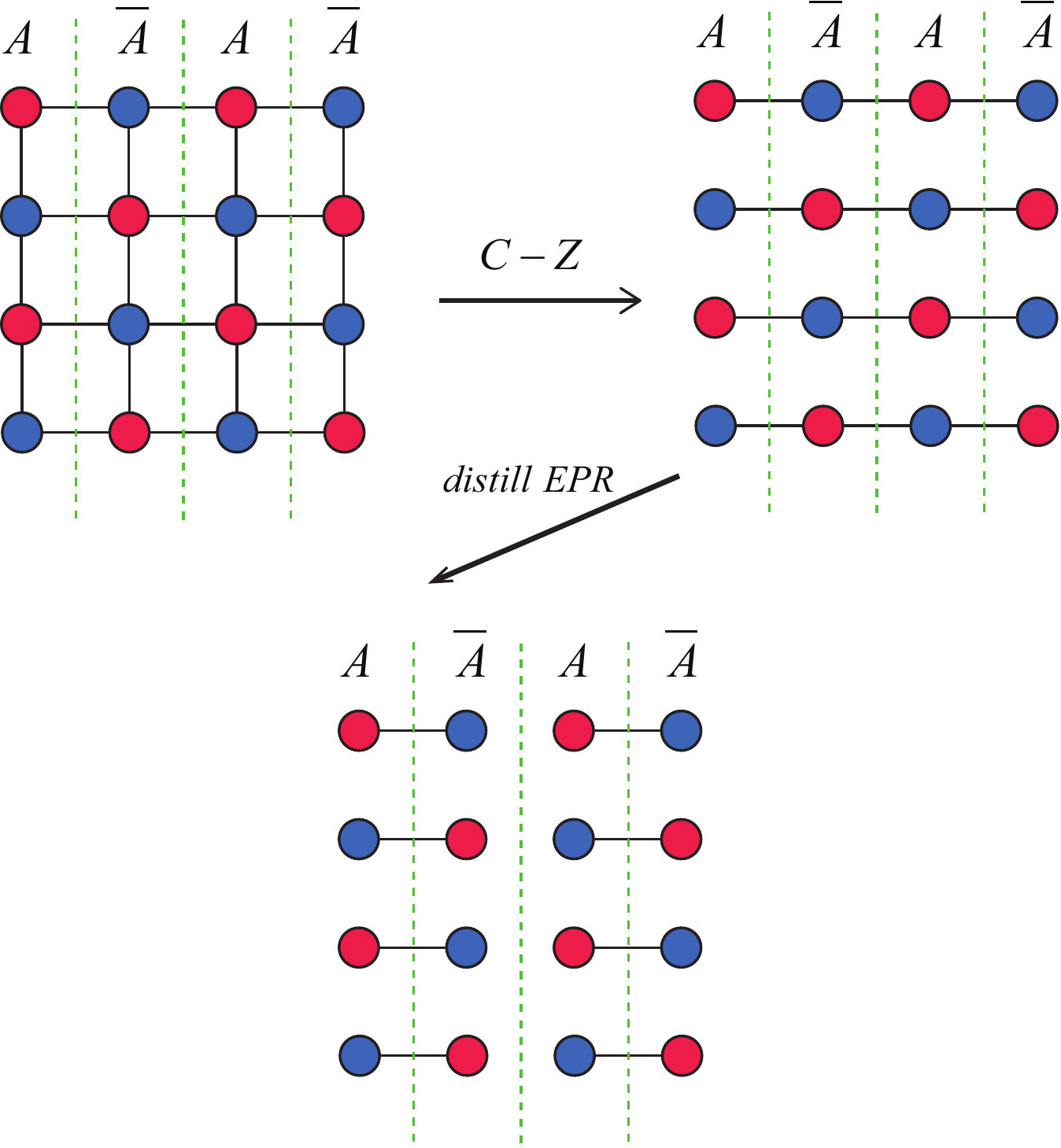}}
\caption{Distillation of EPR pairs in the $4\times 4$ 2-D cluster state}
\label{Fig:even2D}
\end{figure}
\begin{figure}[thb]
\centering
\resizebox{9cm}{!}{\includegraphics[scale=1]{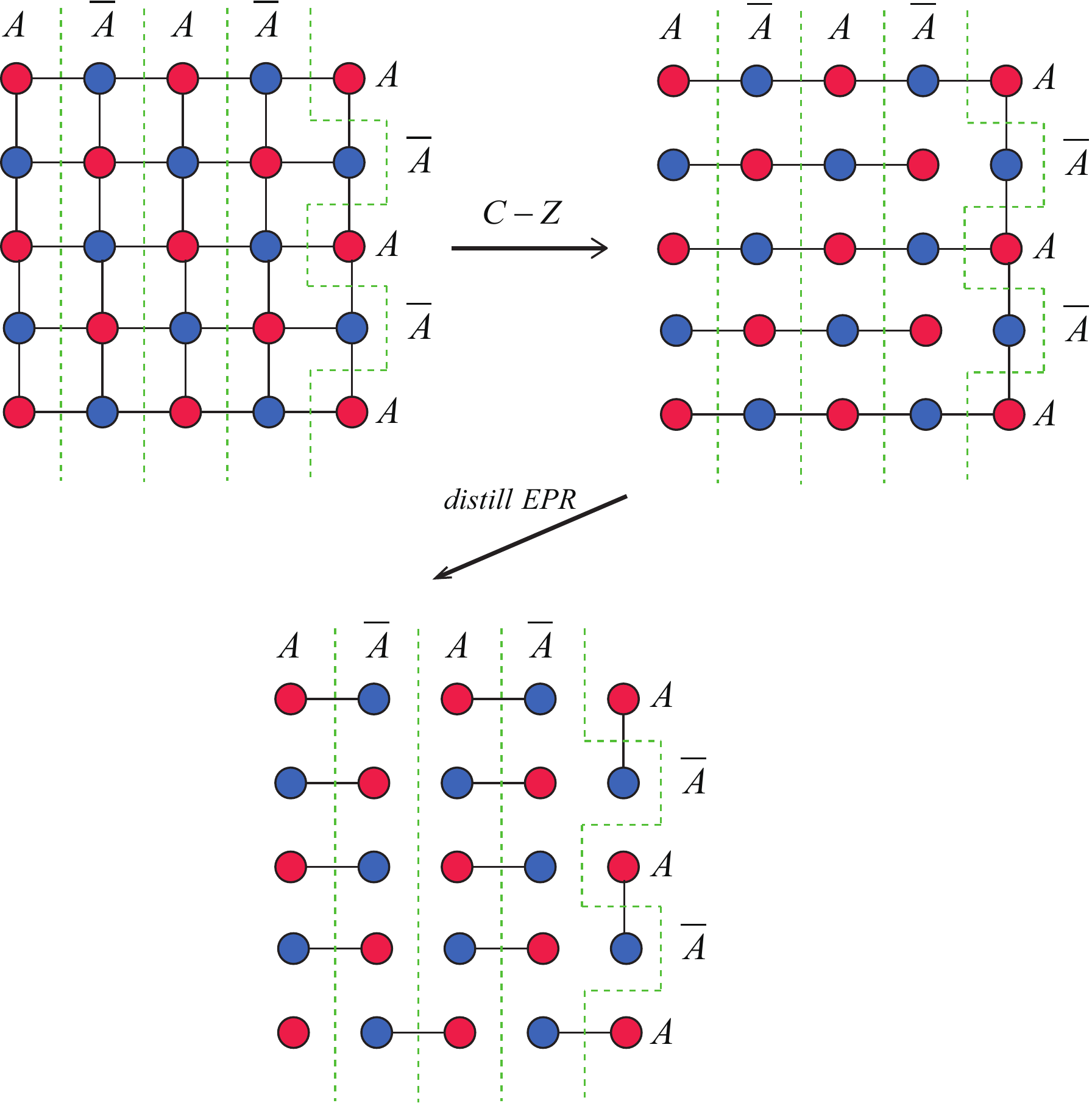}}
\caption{Distillation of EPR pairs in the $5\times 5$ 2-D cluster state}
\label{Fig:odd2D}
\end{figure}
\section*{SUPPLEMENTARY Methods 3: Proof of Corollary \ref{Th:1Dmsep}}\label{Sec:Co:three}
Before proving Corollary \ref{Th:1Dmsep}, we introduce the following Lemma. It gives a lower bound of $S(\rho_A)$ related to the geometric connection between $A$ and $\bar{A}$. First, let us introduce the definition of the boundary between $A$ and $\bar{A}$ on a 1-D qubit chain. The edge $(i,i+1)$ is called a boundary, if $i\in A$, $i+1\in \bar{A}$ or vice versa.

\begin{lemma}\label{Lem:Entropy}
Given a bipartition $\{A,\bar{A}\}$  on a 1-D qubit chain, if the number of boundaries between $A$ and $\bar{A}$ is not less than $m-1$, the entanglement entropy can be lower bounded by,
\begin{equation}\label{Eq:LB:Sa}
S(\rho_A)\geq \lfloor\frac{m}{2}\rfloor,
\end{equation}
where $\rho_A=\mathrm{Tr}_{\bar{A}}(\ket{C_1}\bra{C_1})$ is the the reduced density matrix of a 1-D cluster state $\ket{C_1}$.
\end{lemma}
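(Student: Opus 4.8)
The plan is to route everything through the rank formula $S(\rho_A)=\mathrm{rank}_{\mathbb F_2}(\Gamma_{A\bar A})$ of Eq.~\eqref{Eq:Entfor}, which turns the claimed entropy bound into a purely combinatorial statement about the off-diagonal block of the adjacency matrix of the chain $1-2-\cdots-N$ (whose $\Gamma$ has $\Gamma_{ij}=1$ iff $|i-j|=1$). Write $b$ for the number of boundary edges. Since $b\ge m-1$ and $\lceil(m-1)/2\rceil=\lfloor m/2\rfloor$, it suffices to prove the uniform estimate $\mathrm{rank}_{\mathbb F_2}(\Gamma_{A\bar A})\ge\lceil b/2\rceil$ for any bipartition $\{A,\bar A\}$ of the 1-D cluster state.

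First I would pin down the support of $\Gamma_{A\bar A}$. On the chain the only edges joining $A$ to $\bar A$ are boundary edges $(i,i+1)$, so each nonzero entry of $\Gamma_{A\bar A}$ sits exactly at the two endpoints of some boundary edge, and every row/column indexed by a vertex not incident to a boundary edge vanishes. Restricting to the vertices incident to boundary edges, $\Gamma_{A\bar A}$ is therefore the biadjacency matrix of the subgraph $H$ of the path formed by the $b$ boundary edges. As a subgraph of a path, $H$ is a disjoint union of paths $H=\pi_1\sqcup\cdots\sqcup\pi_r$ with $\sum_s\ell(\pi_s)=b$ (here $\ell(\pi_s)$ is the number of edges of $\pi_s$), and along this decomposition $\Gamma_{A\bar A}$ is block diagonal up to zero rows and columns, so $\mathrm{rank}_{\mathbb F_2}(\Gamma_{A\bar A})=\sum_s\mathrm{rank}_{\mathbb F_2}\bigl(\mathrm{biadj}(\pi_s)\bigr)$.

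The computational heart is a short sublemma: the $\mathbb F_2$-rank of the biadjacency matrix of a path with $\ell$ edges is $\lceil\ell/2\rceil$. I would prove it by ordering each of the two colour classes of the path by position, which renders the biadjacency matrix lower bidiagonal with $1$'s on the main diagonal; its largest leading square block is then unitriangular, so the rank is at least the smaller of the two class sizes, while it is trivially at most that, and the smaller class size of a path with $\ell$ edges is exactly $\lceil\ell/2\rceil$. Plugging this in and using the superadditivity of the ceiling, $\sum_s\lceil\ell(\pi_s)/2\rceil\ge\bigl\lceil(\sum_s\ell(\pi_s))/2\bigr\rceil=\lceil b/2\rceil\ge\lceil(m-1)/2\rceil=\lfloor m/2\rfloor$, which finishes the proof.

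I expect the only delicate point to be the bookkeeping of the second paragraph: verifying that every nonzero entry of $\Gamma_{A\bar A}$ genuinely comes from a boundary edge, and that the boundary-edge subgraph decomposes cleanly into paths — in particular that a singleton block, i.e.\ a vertex incident to two boundary edges, is correctly absorbed into one $\pi_s$ rather than double-counted. Everything after that is the one-paragraph unitriangular-submatrix computation. An alternative route, parallel to the EPR-distillation arguments used elsewhere in the Supplementary Material, would apply local $\CZ$'s and local complementations within $A$ and within $\bar A$ to peel off one EPR pair per two boundary edges; I would nonetheless keep the rank argument as the primary proof, since Eq.~\eqref{Eq:Entfor} is already available.
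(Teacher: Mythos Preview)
Your argument is correct. The scaffolding is identical to the paper's: both proofs group the boundary edges into maximal runs (your paths $\pi_s$, the paper's ``boundary clusters'' $h_k$), establish that each run of length $\ell$ contributes exactly $\lceil\ell/2\rceil$ to the entanglement entropy, and then finish with the same superadditivity chain $\sum_s\lceil\ell(\pi_s)/2\rceil\ge\lceil(m-1)/2\rceil=\lfloor m/2\rfloor$. The genuine difference is in how the per-cluster contribution is obtained: the paper works operationally, applying local $\CZ$'s and local complementations inside $A$ and $\bar A$ to distill $\lceil|h_k|/2\rceil$ EPR pairs from each cluster (exactly the ``alternative route'' you sketch at the end), whereas you go straight through Eq.~\eqref{Eq:Entfor} and read off the $\mathbb F_2$-rank of a bidiagonal biadjacency block. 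Your route is arguably cleaner here, since the rank formula is already on the table and the unitriangular-submatrix step is a two-line computation; the paper's distillation picture has the compensating virtue of being reusable for the 2-D arguments and of making the equality $S(\rho_A)=\sum_k\lceil|h_k|/2\rceil$ visually obvious. Your bookkeeping worry is unfounded: the boundary-edge subgraph of a path is automatically a disjoint union of paths (its connected components), so a vertex incident to two boundary edges simply sits in the interior of one $\pi_s$ and no double-counting can occur.
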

Note that Lemma \ref{Lem:Entropy} can be seen as a manifest of the area law of entanglement entropy \cite{Eisert2010area}.
\begin{figure}[thb]
\centering
\resizebox{7cm}{!}{\includegraphics[scale=1]{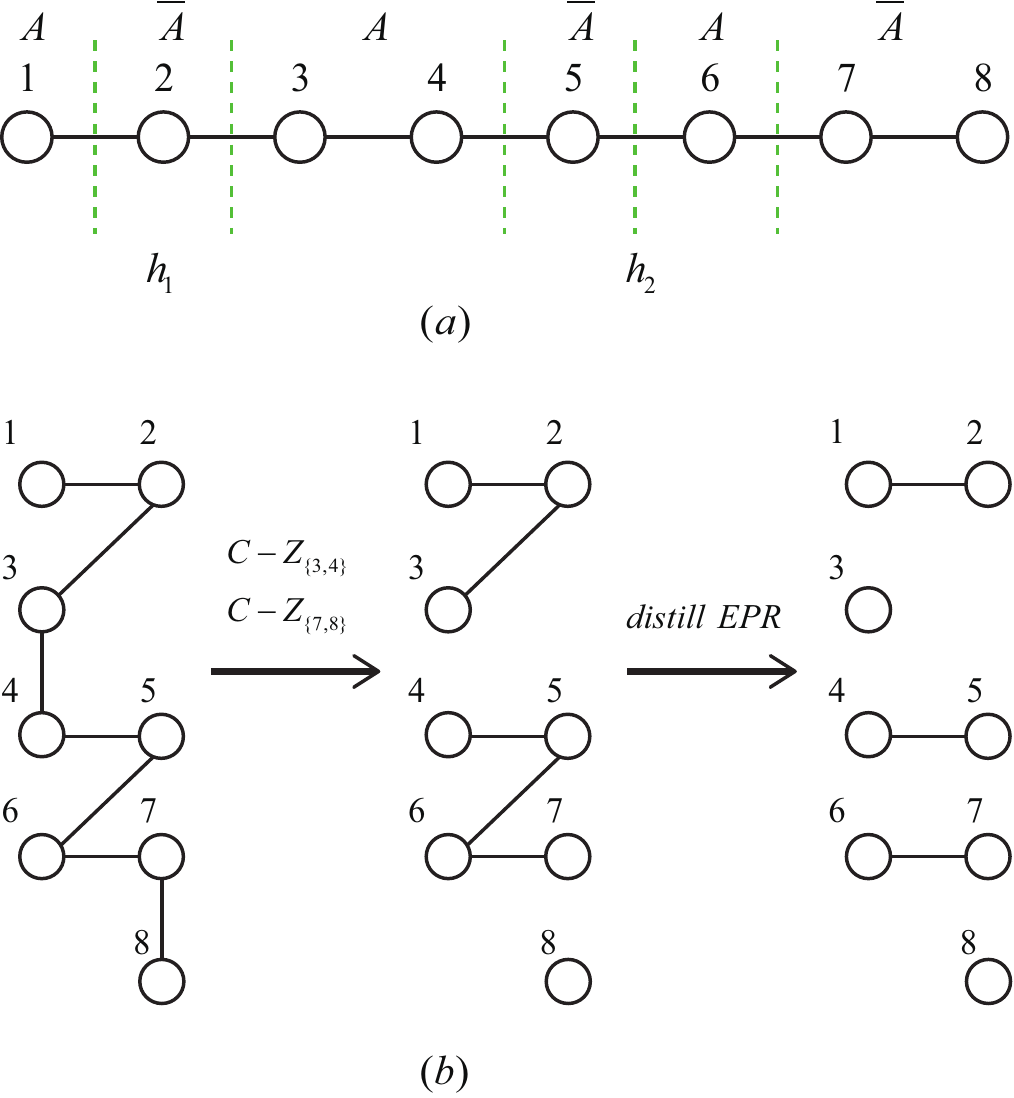}}
\caption{(a) Given a bipartition $\{A,\bar{A}\}$, subsystems $A$ and $\bar{A}$ distribute sequentially on the chain. In this 7-qubit example, there are two boundary clusters $h_1$ and $h_2$, with $|h_1|=2$ and $|h_2|=3$. (2) Distillation of EPR pairs between $A$ and $\bar{A}$. First, one can apply C-Z operation inside $A$ and $\bar{A}$ to eliminate the irrelevant edges. Here we apply $C-Z_{\{3,4\}}$ and $C-Z_{\{7,8\}}$. Second, we distill EPR pairs between $A$ and $\bar{A}$ as in Fig.~\ref{Fig:1Ddistill}, and one can see that the distillation is conducted inside each cluster.}
\label{Fig:1Dmsep}
\end{figure}
\begin{proof}
In the following, we bound the value of $S(\rho_A)$ by distilling EPR pairs between $A$ and $\bar{A}$ with local unitary operations, similar as the proof of Corollary \ref{Th:1DN}.

Given any bipartition $\{A,\bar{A}\}$, subsystems $A$ and $\bar{A}$ would distribute sequentially on the chain, as shown in Fig.~\ref{Fig:1Dmsep}. Two boundaries may be close to each other, such as $(i,i+1)$ and  $(i+1,i+2)$. There are several boundary clusters, denoted by $h_k$. Inside each cluster, the boundaries distribute in sequence closely; between the clusters, the boundaries are far away from each other more than one qubit. Fig.~\ref{Fig:1Dmsep} (a) shows a 7-qubit example, where there are two boundary clusters $h_1$ and $h_2$.

Via applying local unitary operations, one can distill $\lceil |h_k|/2 \rceil$ EPR pairs from each cluster, where $|h_k|$ is the number of boundaries (see Fig.~\ref{Fig:1Dmsep} (b) for a detailed illustration). As a result, the entanglement entropy is bounded by,
\begin{equation}\label{}
\begin{aligned}
S(\rho_A)=\sum_k \left\lceil \frac{|h_k|}{2}\right\rceil\geq \left\lceil\sum_k \frac{|h_k|}{2}\right\rceil\geq \left\lceil \frac{m-1}{2}\right\rceil=\left\lfloor\frac{m}{2}\right\rfloor£¬
\end{aligned}
\end{equation}
where the second inequality is because the precondition that the number of boundaries is not less than $m-1$.
\end{proof}
Then we give the proof of Corollary \ref{Th:1Dmsep} as follows.
\begin{proof}
According to Theorem 2, one needs to prove the solution of the optimization,
\begin{equation}\label{}
f_1(m)\equiv \min_{\mathcal{P}_m}\max_{\{A,\bar{A}\}} S(\rho_A)=\left\lfloor\frac{m}{2}\right\rfloor
\end{equation}
where the maximization takes over all possible partitions $\mathcal{P}_m$ with $m$ subsystems, the minimization takes over all bipartitions of $\mathcal{P}_m$, and  $\rho_A=\mathrm{Tr}_{\bar{A}}(\ket{C_1}\bra{C_1})$. In the following, We first show $f_1(m)\geq\left\lfloor\frac{m}{2}\right\rfloor$ and then give a specific partition to saturate the bound.

Given any m-partition $\mathcal{P}_m=\{A_i\}_{i=1,2\cdots m}$, one can always choose a bipartition $\{A,\bar{A}\}$ of $\mathcal{P}_m$,
such that $A$ and $\bar{A}$ shares at least $m-1$ boundaries. Then, with Lemma \ref{Lem:Entropy} one has $f_1(m)\geq\left\lfloor\frac{m}{2}\right\rfloor$. To be specific, if $A_i$ distribute sequentially on the qubit chain, one can choose $A_i$ with odd $i$ to constitute $A$, i.e., $A=\bigcup_{odd\ i} A_i$, and it is clear that $A$ and $\bar{A}$ shares $m-1$ boundaries in this case. For general cases, by utilizing a fact of graph theory, one can also find a proper bipartition of $\mathcal{P}_m$ as follows. Let every subsystem $A_i$ represent a vertex $i$ of a weighted graph, and the weight of the edge $(i,j)$ is the number of boundaries between $A_i$ and $A_j$. Since this graph is connected, one can choose a (minimum) spanning tree of this graph \cite{Kleinberg2005Algorithm}. Then let $A$ contains all the $A_i$ in the odd layers of the tree, and there are $m-1$ edges between $A$ and $\bar{A}$. As a result, the number of boundary is at least $m-1$, since there is at least one boundary with respect to one edge. 

Finally, we give a specific partition to show that $f_1(m)=\left\lfloor\frac{m}{2}\right\rfloor$. Let $A_1=\{1\}, A_2=\{2\},\cdots, A_m=\{m,m+1,\cdots,N\}$, where $1,2,\cdots N$ denote the qubits. Then choose $A=\bigcup_{odd\ i} A_i$, one can distill EPR pairs as in the proof of Lemma \ref{Lem:Entropy}. Now there is only one boundary cluster and one has $S(\rho_A)=\left\lfloor\frac{m}{2}\right\rfloor$. Finally, we finish the proof of $f_1(m)=\left\lfloor\frac{m}{2}\right\rfloor$ as well as Corollary \ref{Th:1Dmsep}.
\end{proof}

At the end of this section, we show that the witnesses in Corollary \ref{Th:1Dmsep} are tight, in the sense that there is an $m$-separable state $\rho_m$ to saturate $\mathrm{Tr}(\rho_mW_{m,C_1})=0$. Here we show the $m=4$ case, by choosing $\ket{\Psi_{m}}=\ket{+}_1\ket{0}_2\ket{+}_3\ket{0}_4\ket{C_1}_{\{5,6,\cdots,N\}}$, and it can be generalized to any $m$ directly. Similar as the discussion of $W_{b}^{\mathcal{P}_N}$ in Sec.~\ref{Sec:Co:one}, one can find that $\langle P_1 \rangle=1$, $\langle P_2\rangle=2^{-\lfloor\frac{m}{2}\rfloor}$, and $\langle P_1+P_2\rangle=1+2^{-\lfloor\frac{m}{2}\rfloor}$.
\section*{SUPPLEMENTARY Methods 4: Proof of Corollary \ref{Th:2Dmsep}}\label{Sec:Co:four}
Before proving Corollary \ref{Th:2Dmsep}, we introduce the following Lemma.
\begin{lemma}\label{Lem:2Dnum}
Given an $m$-partition $\mathcal{P}_m$ of an $N$-qubit system, with subsystems $\{A_i\}_{i=1}^m$ and $N\geq \frac{m(m-1)}{2}$, there always exists a bipartition of $\mathcal{P}_m$, denoted as $\{A,\bar{A}\}$, such that the qubit number in $A$ satisfies,
\begin{equation}\label{Eq:Lem2D}
m-1\leq |A| \leq N-(m-1).
\end{equation}
\end{lemma}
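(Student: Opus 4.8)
The plan is to forget the geometry entirely and work only with the integers $s_i:=|A_i|\ge 1$, which satisfy $\sum_{i=1}^m s_i=N\ge\binom{m}{2}$. Producing the required bipartition then amounts to producing a nonempty proper index set $I\subsetneq\{1,\dots,m\}$ with $m-1\le\sum_{i\in I}s_i\le N-(m-1)$, after which $A:=\bigcup_{i\in I}A_i$ works. The one elementary fact I would extract at the outset is that no single part can be large: for each $i$ the remaining $m-1$ parts each contribute at least $1$, so $N\ge s_i+(m-1)$, i.e. $s_i\le N-(m-1)$ for every $i$.

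I would then split into two cases. If some part already has $s_i\ge m-1$, then $m-1\le s_i\le N-(m-1)$ and $I=\{i\}$ is done. Otherwise every part has $s_i\le m-2$, and I would build $I$ greedily: take $I$ inclusion-minimal with $\sigma:=\sum_{i\in I}s_i\ge m-1$ (it exists because the full sum $N\ge\binom{m}{2}\ge m-1$ for $m\ge 2$). Minimality applied to any $j\in I$ gives $\sigma-s_j\le m-2$, hence $\sigma\le(m-2)+s_j\le 2m-4$; together with $\sigma\ge m-1$, this $I$ does the job provided $2m-4\le N-(m-1)$, i.e. $N\ge 3m-5$ (this also keeps $I$ proper, since then $\sigma\le 2m-4<N$). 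This is exactly where the hypothesis is used: $\binom{m}{2}\ge 3m-5$ is equivalent to $(m-2)(m-5)\ge 0$, so it holds automatically for all $m\ge 5$, closing the argument in that range.

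The main --- indeed essentially the only --- remaining obstacle is the finitely many small cases $m\le 4$, where $N$ could a priori be smaller than $3m-5$; by the same inequality this can happen only for $m=3$ (forcing $N=3$) and $m=4$ (forcing $N=6$), which I would check by hand. For $m=4,N=6$ every part is $\le 2$ in the second case, so the size multiset is $\{2,2,1,1\}$ and $\{2,1\}$ sums to $3=N-(m-1)$. The pair $(m,N)=(3,3)$ is genuinely degenerate --- every bipartition of three singletons has $|A|\in\{1,2\}$ while the target interval $[2,1]$ is empty --- so this boundary case has to be excluded from the statement. In short, I expect no real analytic difficulty: the proof is a ``no large part'' observation plus a minimal-subset greedy choice, with the quantitative assumption $N\ge m(m-1)/2$ doing just enough work to make the greedy bound $\sigma\le 2m-4$ fit inside the admissible window --- automatically for $m\ge 5$ and by direct inspection for the few smaller $m$.
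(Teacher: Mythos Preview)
Your proof is correct, and it takes a somewhat different route from the paper in the second case. Both arguments open identically: note $s_i\le N-(m-1)$ for every $i$, and if some $s_i\ge m-1$ take that single part. When every $s_i\le m-2$, the paper (after sorting) always chooses $A=A_{m-1}\cup A_m$ and argues both bounds by contradiction, whereas you run a greedy minimal-subset argument yielding the explicit bound $m-1\le\sigma\le 2m-4$ and then check $2m-4\le N-(m-1)$ via $(m-2)(m-5)\ge 0$. Your route makes completely transparent where the hypothesis $N\ge\binom{m}{2}$ is spent and cleanly isolates the residual cases $m\in\{3,4\}$. In particular, your identification of $(m,N)=(3,3)$ as a genuine counterexample is correct: with three singleton parts the target interval $[m-1,N-(m-1)]=[2,1]$ is empty, so the lemma as stated fails at that boundary point. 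The paper's fixed choice $A=A_{m-1}\cup A_m$ in fact also fails at $(m,N)=(4,6)$ with size multiset $\{1,1,2,2\}$ (giving $|A|=4>N-(m-1)=3$), while your selection $\{2,1\}$ succeeds there; so your greedy construction is the one that actually works uniformly, at the modest cost of the small-$m$ hand checks.
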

\begin{proof}
Without loss of generality, we assume that the qubit number of each subsystem is in the increasing order $|A_1|\leq |A_2|\leq \cdots \leq |A_m|$.
Since $|A_i|\geq 1$, one has $|A_m|=N-\sum_{i=1}^{m-1}\leq N-(m-1)$. Suppose $|A_m|\geq m-1$, we can directly select $A_m$ as $A$. Otherwise we choose $A=A_{m-1}\bigcup A_{m}$, and show that $|A|=|A_{m-1}|+|A_{m}|$ satisfies Eq.~\eqref{Eq:Lem2D} by contradiction.

First, suppose $|A|< m-1$, in the case of $m$ being even, one has

\begin{equation}\label{Eq.Lem2De}
\begin{aligned}
N=\sum_{i=1}^{m} |A_i|
\leq \frac{m}{2} (|A_{m-1}|+|A_{m}|)
<\frac{m(m-1)}{2},
\end{aligned}
\end{equation}
where the first inequality is because there are $\frac{m}{2}$ pair of subsystems $A_{2k-1}, A_{2k}$
for $1\leq k \leq \frac{m}{2}$, with $|A_{2k-1}|+|A_{2k}|\leq |A_{m-1}|+|A_{m}|$. It is clear that Eq.~\eqref{Eq.Lem2De} is contradict with the precondition $N\geq \frac{m(m-1)}{2}$. In the odd $m$ case, one can obtain the same contradiction as Eq.~\eqref{Eq.Lem2De},
\begin{equation}\label{Eq.Lem2Dodd}
\begin{aligned}
N=\sum_{i=1}^{m} |A_i|
\leq (m-2)|A_{m-1}| +(|A_{m-1}|+|A_{m}|)
<(m-2)\frac{m-1}{2} +(m-1)=\frac{m(m-1)}{2},
\end{aligned}
\end{equation}
where in the first inequality we apply $|A_i|\leq |A_{m-1}|$ for the first $m-2$ subsystems, the second inequality is due to the fact that $|A_{m-1}|<\frac{m-1}{2}$.

Moreover, suppose $|A|> N-(m-1)$, one has $|A|= N-(m-2)$, since there is at least one qubit in each of the first $m-2$ subsystems. On account of $|A_m|<m-1$, we have
\begin{equation}\label{}
\begin{aligned}
|A_{m-1}|>N-(m-2)-(m-1)=N-1,
\end{aligned}
\end{equation}
which means that $|A_{m-1}|=N$ and it leads to a clear contradiction.
\end{proof}
Then we give the proof of Corollary \ref{Th:2Dmsep} as follows.
\begin{proof}
According to Theorem 2, similar as the 1-D case, one needs to find the solution of the optimization
\begin{equation}\label{}
f_2(m)\equiv \min_{\mathcal{P}_m}\max_{\{A,\bar{A}\}} S(\rho_A)
\end{equation}
where the maximization takes over all possible partitions $\mathcal{P}_m$ with $m$ subsystems, the minimization takes over all bipartitions of $\mathcal{P}_m$, and  $\rho_A=\mathrm{Tr}_{\bar{A}}(\ket{C_2}\bra{C_2})$.
Comparing to the 1-D case, the partition of the 2-D lattice is richer and more complex, thus here instead of finding the exact solution of $f_2(m)$, we give a reliable lower bound of it, i.e.,
\begin{equation}\label{}
f_2(m)\geq \gamma(m) \equiv \left\lceil \frac{-1+\sqrt{1+8(m-1)}}{2}\right\rceil.
\end{equation}
Note that a lower bound of the optimization can also give us a reasonable witness. We also show that this bound is tight, i.e., it exactly equals to the solution of the optimization, as $m\leq 5$.

In the following, we first prove that $f_2(m)\geq \gamma(m)$ and then give explicit partitions to saturate the bound as $m\leq 5$. On account of Lemma \ref{Lem:2Dnum}, for any m-partition $\mathcal{P}_m$, one can always find a bipartition $\{A,\bar{A}\}$ of $\mathcal{P}_m$ such that the qubit number in $A$ satisfying $m-1\leq |A| \leq N-(m-1)$. Without loss of generality, one can consider $m-1\leq |A| \leq \frac{N}{2}$, otherwise we take $\bar{A}$ as $A$.

Generally speaking, the larger the total qubit number $|A|$ contains, the larger the entanglement entropy $S(\rho_A)$ is. For 2-D cluster state, the entanglement entropy satisfies the area law \cite{Eisert2010area}. Given that the qubit number of $A$ satisfies $m-1\leq |A| \leq \frac{N}{2}$, the best way to minimize the entanglement entropy is to reduce the boundary length of it,that is, gather at the corner of the square lattice, as shown in Fig.~\ref{Fig:2Dtriangle}. Suppose that the subsystem $A$ happens to be a  right-angled isosceles triangle with the length of the hypotenuse being $d$. Then the boundary length of $A$, $|\partial A|=d$, and it is not hard to find that $S(\rho_A)=d$. Consider $|A|=m-1$, and we have the relation,
\begin{equation}\label{Eq:2Dtra}
\frac{d(d+1)}{2}=m-1.
\end{equation}
By solving Eq.~\eqref{Eq:2Dtra}, one obtains $d=\frac{-1+\sqrt{1+8(m-1)}}{2}$. For general $m$, the shape of $A$ is not necessarily a triangle, and we should round up the value and get,
\begin{equation}\label{}
S(\rho_A)\geq \left\lceil \frac{-1+\sqrt{1+8(m-1)}}{2}\right\rceil.
\end{equation}

\begin{figure}[thb]
\centering
\resizebox{9cm}{!}{\includegraphics[scale=1]{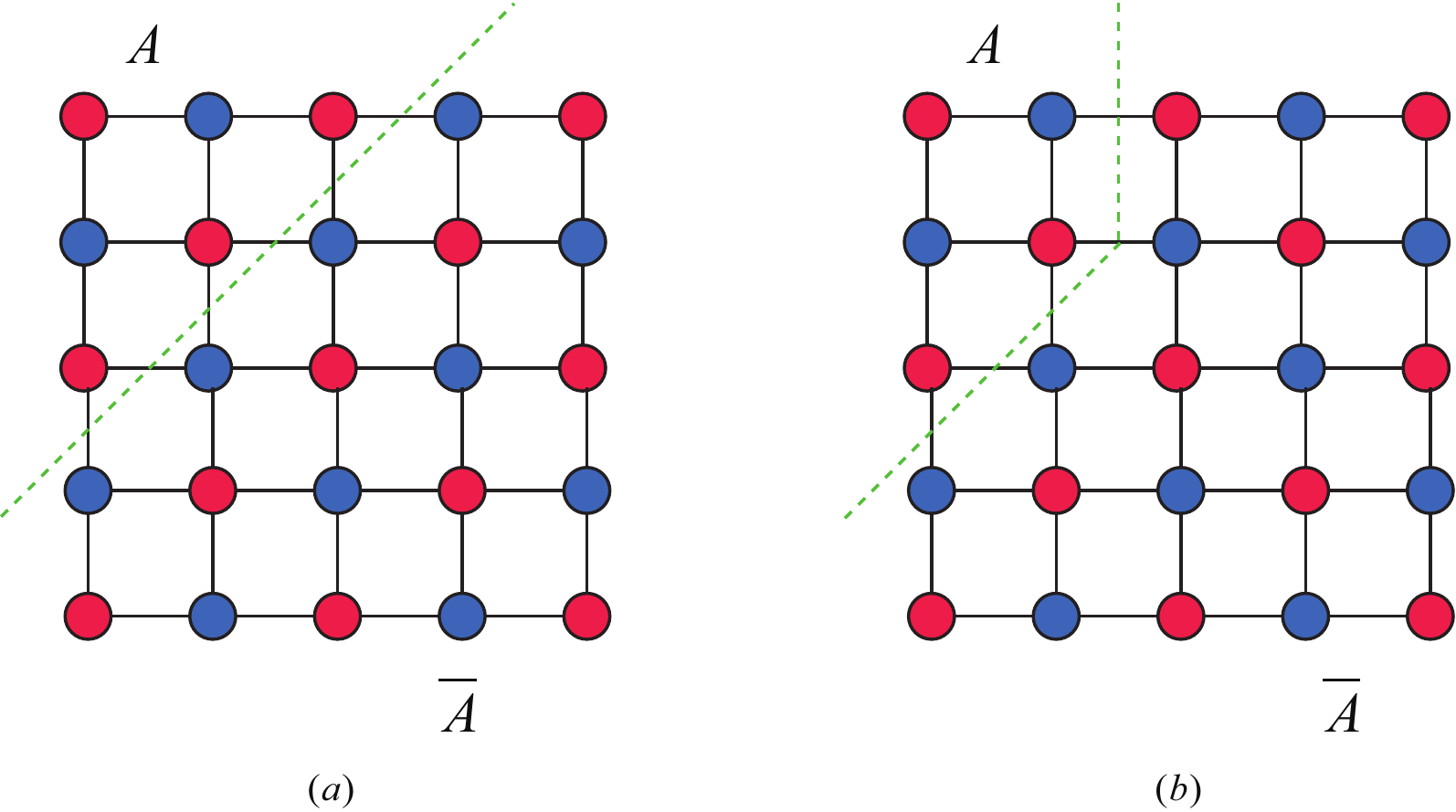}}
\caption{Minimization of the entanglement entropy $S(\rho_A)$ by reducing the boundary length $|\partial A|$. The best strategy is to gather the qubits at the corner of the 2-D lattice. (a)The subsystem $A$ contains 6 qubits and it happens to be a triangle with $S(\rho_A)=|\partial A|=3$. (b)The subsystem $A$ contains 5 qubits and in this case its entropy $S(\rho_A)$ is still $3$.}
\label{Fig:2Dtriangle}
\end{figure}
Consequently, for any $m$-partition $\mathcal{P}_m$, there exists a bipartition $\{A,\bar{A}\}$ of it, such that $S(\rho_A)\geq \gamma(m)$ and hence $\max_{\{A,\bar{A}\}} S(\rho_A)\geq  \gamma(m)$. As a result, one has $f_2(m)\geq \gamma(m)$.

Moreover, we give partitions to saturate this bound as $m\leq 5$. Take the $m=5$ case for example, one can choose the first four subsystems all contain one qubits, i.e., $|A_1|=|A_2|=|A_3|=|A_4|=1$, in a corner of the square lattice, and $A_5$ contains the remaining qubits, as shown in Fig.~\ref{Fig:2Dmsep}. It is not hard to see that $\max_{\{A,\bar{A}\}}S(\rho_A)= \gamma(5)=3$ for any bipartition of this $\mathcal{P}_{m=5}$.
\end{proof}
\begin{figure}[thb]
\centering
\resizebox{3cm}{!}{\includegraphics[scale=1]{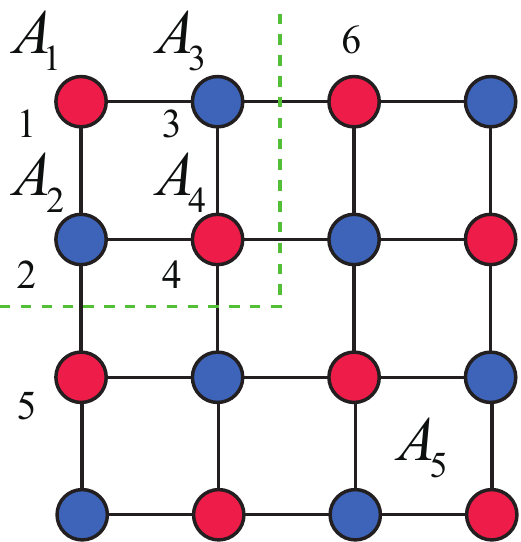}}
\caption{Illustration of a partition that saturates the bound on entanglement entropy, i.e., $S(\rho_A)\geq \gamma(m)=3$, in the $m=5$ case.  The first four subsystems all contain one qubits, i.e., $|A_1|=|A_2|=|A_3|=|A_4|=1$, in the corner of the square lattice, and $A_5$ contains the remaining qubits. It is not hard to see that $\max_{\{A,\bar{A}\}}S(\rho_A)=3$ for any bipartition of this $\mathcal{P}_{m=5}$.}
\label{Fig:2Dmsep}
\end{figure}

Finally, we remark that the non-$m$-separability witnesses shown in Corollary \ref{Th:2Dmsep} is tight or optimal as $m\leq 5$. Here we show a specific $m$-separable state to saturate the witness as $m=5$. The qubit label is given in Fig.~\ref{Fig:2Dmsep}, and we choose $\ket{\Psi_{m}}=\ket{G_{\{1,2,3\}}}\otimes\ket{0}_4\ket{0}_5\ket{0}_6\otimes \ket{G_{\{7,8,\cdots,N\}}}$. Here $\ket{G_{\{1,2,3\}}}\otimes \ket{G_{\{7,8,\cdots,N\}}}$ are the graph state whose graph is obtained from the 2-D lattice by deleting the vertexes $\{4,5,6\}$ and their associated edges.
Similar as the discussion of $W_{b}^{\mathcal{P}_N}$ in Sec.~\ref{Sec:Co:one}, one can find that $\langle P_1 \rangle=2^{-3}$, $\langle P_2\rangle=1$, and $\langle P_1+P_2\rangle=1+2^{-3}$.

\bibliographystyle{apsrev4-1}

\bibliography{BibSubsystemEW}

\begin{thebibliography}{64}%
\makeatletter
\providecommand \@ifxundefined [1]{%
 \@ifx{#1\undefined}
}%
\providecommand \@ifnum [1]{%
 \ifnum #1\expandafter \@firstoftwo
 \else \expandafter \@secondoftwo
 \fi
}%
\providecommand \@ifx [1]{%
 \ifx #1\expandafter \@firstoftwo
 \else \expandafter \@secondoftwo
 \fi
}%
\providecommand \natexlab [1]{#1}%
\providecommand \enquote  [1]{``#1''}%
\providecommand \bibnamefont  [1]{#1}%
\providecommand \bibfnamefont [1]{#1}%
\providecommand \citenamefont [1]{#1}%
\providecommand \href@noop [0]{\@secondoftwo}%
\providecommand \href [0]{\begingroup \@sanitize@url \@href}%
\providecommand \@href[1]{\@@startlink{#1}\@@href}%
\providecommand \@@href[1]{\endgroup#1\@@endlink}%
\providecommand \@sanitize@url [0]{\catcode `\\12\catcode `\$12\catcode
  `\&12\catcode `\#12\catcode `\^12\catcode `\_12\catcode `\%12\relax}%
\providecommand \@@startlink[1]{}%
\providecommand \@@endlink[0]{}%
\providecommand \url  [0]{\begingroup\@sanitize@url \@url }%
\providecommand \@url [1]{\endgroup\@href {#1}{\urlprefix }}%
\providecommand \urlprefix  [0]{URL }%
\providecommand \Eprint [0]{\href }%
\providecommand \doibase [0]{http://dx.doi.org/}%
\providecommand \selectlanguage [0]{\@gobble}%
\providecommand \bibinfo  [0]{\@secondoftwo}%
\providecommand \bibfield  [0]{\@secondoftwo}%
\providecommand \translation [1]{[#1]}%
\providecommand \BibitemOpen [0]{}%
\providecommand \bibitemStop [0]{}%
\providecommand \bibitemNoStop [0]{.\EOS\space}%
\providecommand \EOS [0]{\spacefactor3000\relax}%
\providecommand \BibitemShut  [1]{\csname bibitem#1\endcsname}%
\let\auto@bib@innerbib\@empty
\bibitem [{\citenamefont {Horodecki}\ \emph {et~al.}(2009)\citenamefont
  {Horodecki}, \citenamefont {Horodecki}, \citenamefont {Horodecki},\ and\
  \citenamefont {Horodecki}}]{Horodecki2009entanglement}%
  \BibitemOpen
  \bibfield  {author} {\bibinfo {author} {\bibfnamefont {R.}~\bibnamefont
  {Horodecki}}, \bibinfo {author} {\bibfnamefont {P.}~\bibnamefont
  {Horodecki}}, \bibinfo {author} {\bibfnamefont {M.}~\bibnamefont
  {Horodecki}}, \ and\ \bibinfo {author} {\bibfnamefont {K.}~\bibnamefont
  {Horodecki}},\ }\href {\doibase 10.1103/RevModPhys.81.865} {\bibfield
  {journal} {\bibinfo  {journal} {Rev. Mod. Phys.}\ }\textbf {\bibinfo {volume}
  {81}},\ \bibinfo {pages} {865} (\bibinfo {year} {2009})}\BibitemShut
  {NoStop}%
\bibitem [{\citenamefont {Bennett}\ \emph {et~al.}(1993)\citenamefont
  {Bennett}, \citenamefont {Brassard}, \citenamefont {Cr\'epeau}, \citenamefont
  {Jozsa}, \citenamefont {Peres},\ and\ \citenamefont
  {Wootters}}]{Bennett1993Teleporting}%
  \BibitemOpen
  \bibfield  {author} {\bibinfo {author} {\bibfnamefont {C.~H.}\ \bibnamefont
  {Bennett}}, \bibinfo {author} {\bibfnamefont {G.}~\bibnamefont {Brassard}},
  \bibinfo {author} {\bibfnamefont {C.}~\bibnamefont {Cr\'epeau}}, \bibinfo
  {author} {\bibfnamefont {R.}~\bibnamefont {Jozsa}}, \bibinfo {author}
  {\bibfnamefont {A.}~\bibnamefont {Peres}}, \ and\ \bibinfo {author}
  {\bibfnamefont {W.~K.}\ \bibnamefont {Wootters}},\ }\href {\doibase
  10.1103/PhysRevLett.70.1895} {\bibfield  {journal} {\bibinfo  {journal}
  {Phys. Rev. Lett.}\ }\textbf {\bibinfo {volume} {70}},\ \bibinfo {pages}
  {1895} (\bibinfo {year} {1993})}\BibitemShut {NoStop}%
\bibitem [{\citenamefont {Bennett}\ and\ \citenamefont
  {Brassard}(1984)}]{Bennett84cryptography}%
  \BibitemOpen
  \bibfield  {author} {\bibinfo {author} {\bibfnamefont {C.~H.}\ \bibnamefont
  {Bennett}}\ and\ \bibinfo {author} {\bibfnamefont {G.}~\bibnamefont
  {Brassard}},\ }in\ \href@noop {} {\emph {\bibinfo {booktitle} {Proceedings of
  IEEE International Conference on Computers, Systems, and Signal
  Processing}}}\ (\bibinfo {address} {India},\ \bibinfo {year} {1984})\ p.\
  \bibinfo {pages} {175}\BibitemShut {NoStop}%
\bibitem [{\citenamefont {Ekert}(1991)}]{Ekert1991cryptography}%
  \BibitemOpen
  \bibfield  {author} {\bibinfo {author} {\bibfnamefont {A.~K.}\ \bibnamefont
  {Ekert}},\ }\href {\doibase 10.1103/PhysRevLett.67.661} {\bibfield  {journal}
  {\bibinfo  {journal} {Phys. Rev. Lett.}\ }\textbf {\bibinfo {volume} {67}},\
  \bibinfo {pages} {661} (\bibinfo {year} {1991})}\BibitemShut {NoStop}%
\bibitem [{\citenamefont {Brunner}\ \emph {et~al.}(2014)\citenamefont
  {Brunner}, \citenamefont {Cavalcanti}, \citenamefont {Pironio}, \citenamefont
  {Scarani},\ and\ \citenamefont {Wehner}}]{Brunner2014nonlocality}%
  \BibitemOpen
  \bibfield  {author} {\bibinfo {author} {\bibfnamefont {N.}~\bibnamefont
  {Brunner}}, \bibinfo {author} {\bibfnamefont {D.}~\bibnamefont {Cavalcanti}},
  \bibinfo {author} {\bibfnamefont {S.}~\bibnamefont {Pironio}}, \bibinfo
  {author} {\bibfnamefont {V.}~\bibnamefont {Scarani}}, \ and\ \bibinfo
  {author} {\bibfnamefont {S.}~\bibnamefont {Wehner}},\ }\href {\doibase
  10.1103/RevModPhys.86.419} {\bibfield  {journal} {\bibinfo  {journal} {Rev.
  Mod. Phys.}\ }\textbf {\bibinfo {volume} {86}},\ \bibinfo {pages} {419}
  (\bibinfo {year} {2014})}\BibitemShut {NoStop}%
\bibitem [{\citenamefont {Nielsen}\ and\ \citenamefont
  {Chuang}(2011)}]{Nielsen2011Quantum}%
  \BibitemOpen
  \bibfield  {author} {\bibinfo {author} {\bibfnamefont {M.~A.}\ \bibnamefont
  {Nielsen}}\ and\ \bibinfo {author} {\bibfnamefont {I.~L.}\ \bibnamefont
  {Chuang}},\ }\href@noop {} {\emph {\bibinfo {title} {Quantum Computation and
  Quantum Information: 10th Anniversary Edition}}},\ \bibinfo {edition} {10th}\
  ed.\ (\bibinfo  {publisher} {Cambridge University Press},\ \bibinfo {address}
  {New York, NY, USA},\ \bibinfo {year} {2011})\BibitemShut {NoStop}%
\bibitem [{\citenamefont {Lloyd}(1996)}]{Lloyd1996Simulators}%
  \BibitemOpen
  \bibfield  {author} {\bibinfo {author} {\bibfnamefont {S.}~\bibnamefont
  {Lloyd}},\ }\href {\doibase 10.1126/science.273.5278.1073} {\bibfield
  {journal} {\bibinfo  {journal} {Science}\ }\textbf {\bibinfo {volume}
  {273}},\ \bibinfo {pages} {1073} (\bibinfo {year} {1996})},\ \Eprint
  {http://arxiv.org/abs/http://science.sciencemag.org/content/273/5278/1073.full.pdf}
  {http://science.sciencemag.org/content/273/5278/1073.full.pdf} \BibitemShut
  {NoStop}%
\bibitem [{\citenamefont {Wineland}\ \emph {et~al.}(1992)\citenamefont
  {Wineland}, \citenamefont {Bollinger}, \citenamefont {Itano}, \citenamefont
  {Moore},\ and\ \citenamefont {Heinzen}}]{Wineland1992squeezing}%
  \BibitemOpen
  \bibfield  {author} {\bibinfo {author} {\bibfnamefont {D.~J.}\ \bibnamefont
  {Wineland}}, \bibinfo {author} {\bibfnamefont {J.~J.}\ \bibnamefont
  {Bollinger}}, \bibinfo {author} {\bibfnamefont {W.~M.}\ \bibnamefont
  {Itano}}, \bibinfo {author} {\bibfnamefont {F.~L.}\ \bibnamefont {Moore}}, \
  and\ \bibinfo {author} {\bibfnamefont {D.~J.}\ \bibnamefont {Heinzen}},\
  }\href {\doibase 10.1103/PhysRevA.46.R6797} {\bibfield  {journal} {\bibinfo
  {journal} {Phys. Rev. A}\ }\textbf {\bibinfo {volume} {46}},\ \bibinfo
  {pages} {R6797} (\bibinfo {year} {1992})}\BibitemShut {NoStop}%
\bibitem [{\citenamefont {Giovannetti}\ \emph {et~al.}(2006)\citenamefont
  {Giovannetti}, \citenamefont {Lloyd},\ and\ \citenamefont
  {Maccone}}]{Giovannetti2006Metrology}%
  \BibitemOpen
  \bibfield  {author} {\bibinfo {author} {\bibfnamefont {V.}~\bibnamefont
  {Giovannetti}}, \bibinfo {author} {\bibfnamefont {S.}~\bibnamefont {Lloyd}},
  \ and\ \bibinfo {author} {\bibfnamefont {L.}~\bibnamefont {Maccone}},\ }\href
  {\doibase 10.1103/PhysRevLett.96.010401} {\bibfield  {journal} {\bibinfo
  {journal} {Phys. Rev. Lett.}\ }\textbf {\bibinfo {volume} {96}},\ \bibinfo
  {pages} {010401} (\bibinfo {year} {2006})}\BibitemShut {NoStop}%
\bibitem [{\citenamefont {Monz}\ \emph {et~al.}(2011)\citenamefont {Monz},
  \citenamefont {Schindler}, \citenamefont {Barreiro}, \citenamefont {Chwalla},
  \citenamefont {Nigg}, \citenamefont {Coish}, \citenamefont {Harlander},
  \citenamefont {H\"ansel}, \citenamefont {Hennrich},\ and\ \citenamefont
  {Blatt}}]{Monz2011Entanglement}%
  \BibitemOpen
  \bibfield  {author} {\bibinfo {author} {\bibfnamefont {T.}~\bibnamefont
  {Monz}}, \bibinfo {author} {\bibfnamefont {P.}~\bibnamefont {Schindler}},
  \bibinfo {author} {\bibfnamefont {J.~T.}\ \bibnamefont {Barreiro}}, \bibinfo
  {author} {\bibfnamefont {M.}~\bibnamefont {Chwalla}}, \bibinfo {author}
  {\bibfnamefont {D.}~\bibnamefont {Nigg}}, \bibinfo {author} {\bibfnamefont
  {W.~A.}\ \bibnamefont {Coish}}, \bibinfo {author} {\bibfnamefont
  {M.}~\bibnamefont {Harlander}}, \bibinfo {author} {\bibfnamefont
  {W.}~\bibnamefont {H\"ansel}}, \bibinfo {author} {\bibfnamefont
  {M.}~\bibnamefont {Hennrich}}, \ and\ \bibinfo {author} {\bibfnamefont
  {R.}~\bibnamefont {Blatt}},\ }\href {\doibase 10.1103/PhysRevLett.106.130506}
  {\bibfield  {journal} {\bibinfo  {journal} {Phys. Rev. Lett.}\ }\textbf
  {\bibinfo {volume} {106}},\ \bibinfo {pages} {130506} (\bibinfo {year}
  {2011})}\BibitemShut {NoStop}%
\bibitem [{\citenamefont {Britton}\ \emph {et~al.}(2012)\citenamefont
  {Britton}, \citenamefont {Sawyer}, \citenamefont {Keith}, \citenamefont
  {Wang}, \citenamefont {Freericks}, \citenamefont {Uys}, \citenamefont
  {Biercuk},\ and\ \citenamefont {Bollinger}}]{Britton2012trapped}%
  \BibitemOpen
  \bibfield  {author} {\bibinfo {author} {\bibfnamefont {J.~W.}\ \bibnamefont
  {Britton}}, \bibinfo {author} {\bibfnamefont {B.~C.}\ \bibnamefont {Sawyer}},
  \bibinfo {author} {\bibfnamefont {A.~C.}\ \bibnamefont {Keith}}, \bibinfo
  {author} {\bibfnamefont {C.~C.~J.}\ \bibnamefont {Wang}}, \bibinfo {author}
  {\bibfnamefont {J.~K.}\ \bibnamefont {Freericks}}, \bibinfo {author}
  {\bibfnamefont {H.}~\bibnamefont {Uys}}, \bibinfo {author} {\bibfnamefont
  {M.~J.}\ \bibnamefont {Biercuk}}, \ and\ \bibinfo {author} {\bibfnamefont
  {J.~J.}\ \bibnamefont {Bollinger}},\ }\href
  {https://doi.org/10.1038/nature10981} {\bibfield  {journal} {\bibinfo
  {journal} {Nature}\ }\textbf {\bibinfo {volume} {484}},\ \bibinfo {pages}
  {489 EP } (\bibinfo {year} {2012})}\BibitemShut {NoStop}%
\bibitem [{\citenamefont {Friis}\ \emph {et~al.}(2018)\citenamefont {Friis},
  \citenamefont {Marty}, \citenamefont {Maier}, \citenamefont {Hempel},
  \citenamefont {Holz\"apfel}, \citenamefont {Jurcevic}, \citenamefont
  {Plenio}, \citenamefont {Huber}, \citenamefont {Roos}, \citenamefont
  {Blatt},\ and\ \citenamefont {Lanyon}}]{Friis2018Observation}%
  \BibitemOpen
  \bibfield  {author} {\bibinfo {author} {\bibfnamefont {N.}~\bibnamefont
  {Friis}}, \bibinfo {author} {\bibfnamefont {O.}~\bibnamefont {Marty}},
  \bibinfo {author} {\bibfnamefont {C.}~\bibnamefont {Maier}}, \bibinfo
  {author} {\bibfnamefont {C.}~\bibnamefont {Hempel}}, \bibinfo {author}
  {\bibfnamefont {M.}~\bibnamefont {Holz\"apfel}}, \bibinfo {author}
  {\bibfnamefont {P.}~\bibnamefont {Jurcevic}}, \bibinfo {author}
  {\bibfnamefont {M.~B.}\ \bibnamefont {Plenio}}, \bibinfo {author}
  {\bibfnamefont {M.}~\bibnamefont {Huber}}, \bibinfo {author} {\bibfnamefont
  {C.}~\bibnamefont {Roos}}, \bibinfo {author} {\bibfnamefont {R.}~\bibnamefont
  {Blatt}}, \ and\ \bibinfo {author} {\bibfnamefont {B.}~\bibnamefont
  {Lanyon}},\ }\href {\doibase 10.1103/PhysRevX.8.021012} {\bibfield  {journal}
  {\bibinfo  {journal} {Phys. Rev. X}\ }\textbf {\bibinfo {volume} {8}},\
  \bibinfo {pages} {021012} (\bibinfo {year} {2018})}\BibitemShut {NoStop}%
\bibitem [{\citenamefont {Song}\ \emph {et~al.}(2017)\citenamefont {Song},
  \citenamefont {Xu}, \citenamefont {Liu}, \citenamefont {Yang}, \citenamefont
  {Zheng}, \citenamefont {Deng}, \citenamefont {Xie}, \citenamefont {Huang},
  \citenamefont {Guo}, \citenamefont {Zhang}, \citenamefont {Zhang},
  \citenamefont {Xu}, \citenamefont {Zheng}, \citenamefont {Zhu}, \citenamefont
  {Wang}, \citenamefont {Chen}, \citenamefont {Lu}, \citenamefont {Han},\ and\
  \citenamefont {Pan}}]{Song2017Entanglement}%
  \BibitemOpen
  \bibfield  {author} {\bibinfo {author} {\bibfnamefont {C.}~\bibnamefont
  {Song}}, \bibinfo {author} {\bibfnamefont {K.}~\bibnamefont {Xu}}, \bibinfo
  {author} {\bibfnamefont {W.}~\bibnamefont {Liu}}, \bibinfo {author}
  {\bibfnamefont {C.-P.}\ \bibnamefont {Yang}}, \bibinfo {author}
  {\bibfnamefont {S.-B.}\ \bibnamefont {Zheng}}, \bibinfo {author}
  {\bibfnamefont {H.}~\bibnamefont {Deng}}, \bibinfo {author} {\bibfnamefont
  {Q.}~\bibnamefont {Xie}}, \bibinfo {author} {\bibfnamefont {K.}~\bibnamefont
  {Huang}}, \bibinfo {author} {\bibfnamefont {Q.}~\bibnamefont {Guo}}, \bibinfo
  {author} {\bibfnamefont {L.}~\bibnamefont {Zhang}}, \bibinfo {author}
  {\bibfnamefont {P.}~\bibnamefont {Zhang}}, \bibinfo {author} {\bibfnamefont
  {D.}~\bibnamefont {Xu}}, \bibinfo {author} {\bibfnamefont {D.}~\bibnamefont
  {Zheng}}, \bibinfo {author} {\bibfnamefont {X.}~\bibnamefont {Zhu}}, \bibinfo
  {author} {\bibfnamefont {H.}~\bibnamefont {Wang}}, \bibinfo {author}
  {\bibfnamefont {Y.-A.}\ \bibnamefont {Chen}}, \bibinfo {author}
  {\bibfnamefont {C.-Y.}\ \bibnamefont {Lu}}, \bibinfo {author} {\bibfnamefont
  {S.}~\bibnamefont {Han}}, \ and\ \bibinfo {author} {\bibfnamefont {J.-W.}\
  \bibnamefont {Pan}},\ }\href {\doibase 10.1103/PhysRevLett.119.180511}
  {\bibfield  {journal} {\bibinfo  {journal} {Phys. Rev. Lett.}\ }\textbf
  {\bibinfo {volume} {119}},\ \bibinfo {pages} {180511} (\bibinfo {year}
  {2017})}\BibitemShut {NoStop}%
\bibitem [{\citenamefont {Gong}\ \emph {et~al.}(2019)\citenamefont {Gong},
  \citenamefont {Chen}, \citenamefont {Zheng}, \citenamefont {Wang},
  \citenamefont {Zha}, \citenamefont {Deng}, \citenamefont {Yan}, \citenamefont
  {Rong}, \citenamefont {Wu}, \citenamefont {Li}, \citenamefont {Chen},
  \citenamefont {Zhao}, \citenamefont {Liang}, \citenamefont {Lin},
  \citenamefont {Xu}, \citenamefont {Guo}, \citenamefont {Sun}, \citenamefont
  {Castellano}, \citenamefont {Wang}, \citenamefont {Peng}, \citenamefont {Lu},
  \citenamefont {Zhu},\ and\ \citenamefont {Pan}}]{Gong2019Genuine}%
  \BibitemOpen
  \bibfield  {author} {\bibinfo {author} {\bibfnamefont {M.}~\bibnamefont
  {Gong}}, \bibinfo {author} {\bibfnamefont {M.-C.}\ \bibnamefont {Chen}},
  \bibinfo {author} {\bibfnamefont {Y.}~\bibnamefont {Zheng}}, \bibinfo
  {author} {\bibfnamefont {S.}~\bibnamefont {Wang}}, \bibinfo {author}
  {\bibfnamefont {C.}~\bibnamefont {Zha}}, \bibinfo {author} {\bibfnamefont
  {H.}~\bibnamefont {Deng}}, \bibinfo {author} {\bibfnamefont {Z.}~\bibnamefont
  {Yan}}, \bibinfo {author} {\bibfnamefont {H.}~\bibnamefont {Rong}}, \bibinfo
  {author} {\bibfnamefont {Y.}~\bibnamefont {Wu}}, \bibinfo {author}
  {\bibfnamefont {S.}~\bibnamefont {Li}}, \bibinfo {author} {\bibfnamefont
  {F.}~\bibnamefont {Chen}}, \bibinfo {author} {\bibfnamefont {Y.}~\bibnamefont
  {Zhao}}, \bibinfo {author} {\bibfnamefont {F.}~\bibnamefont {Liang}},
  \bibinfo {author} {\bibfnamefont {J.}~\bibnamefont {Lin}}, \bibinfo {author}
  {\bibfnamefont {Y.}~\bibnamefont {Xu}}, \bibinfo {author} {\bibfnamefont
  {C.}~\bibnamefont {Guo}}, \bibinfo {author} {\bibfnamefont {L.}~\bibnamefont
  {Sun}}, \bibinfo {author} {\bibfnamefont {A.~D.}\ \bibnamefont {Castellano}},
  \bibinfo {author} {\bibfnamefont {H.}~\bibnamefont {Wang}}, \bibinfo {author}
  {\bibfnamefont {C.}~\bibnamefont {Peng}}, \bibinfo {author} {\bibfnamefont
  {C.-Y.}\ \bibnamefont {Lu}}, \bibinfo {author} {\bibfnamefont
  {X.}~\bibnamefont {Zhu}}, \ and\ \bibinfo {author} {\bibfnamefont {J.-W.}\
  \bibnamefont {Pan}},\ }\href {\doibase 10.1103/PhysRevLett.122.110501}
  {\bibfield  {journal} {\bibinfo  {journal} {Phys. Rev. Lett.}\ }\textbf
  {\bibinfo {volume} {122}},\ \bibinfo {pages} {110501} (\bibinfo {year}
  {2019})}\BibitemShut {NoStop}%
\bibitem [{\citenamefont {Wang}\ \emph {et~al.}(2016)\citenamefont {Wang},
  \citenamefont {Chen}, \citenamefont {Li}, \citenamefont {Huang},
  \citenamefont {Liu}, \citenamefont {Chen}, \citenamefont {Luo}, \citenamefont
  {Su}, \citenamefont {Wu}, \citenamefont {Li}, \citenamefont {Lu},
  \citenamefont {Hu}, \citenamefont {Jiang}, \citenamefont {Peng},
  \citenamefont {Li}, \citenamefont {Liu}, \citenamefont {Chen}, \citenamefont
  {Lu},\ and\ \citenamefont {Pan}}]{Wang2016Entanglement}%
  \BibitemOpen
  \bibfield  {author} {\bibinfo {author} {\bibfnamefont {X.-L.}\ \bibnamefont
  {Wang}}, \bibinfo {author} {\bibfnamefont {L.-K.}\ \bibnamefont {Chen}},
  \bibinfo {author} {\bibfnamefont {W.}~\bibnamefont {Li}}, \bibinfo {author}
  {\bibfnamefont {H.-L.}\ \bibnamefont {Huang}}, \bibinfo {author}
  {\bibfnamefont {C.}~\bibnamefont {Liu}}, \bibinfo {author} {\bibfnamefont
  {C.}~\bibnamefont {Chen}}, \bibinfo {author} {\bibfnamefont {Y.-H.}\
  \bibnamefont {Luo}}, \bibinfo {author} {\bibfnamefont {Z.-E.}\ \bibnamefont
  {Su}}, \bibinfo {author} {\bibfnamefont {D.}~\bibnamefont {Wu}}, \bibinfo
  {author} {\bibfnamefont {Z.-D.}\ \bibnamefont {Li}}, \bibinfo {author}
  {\bibfnamefont {H.}~\bibnamefont {Lu}}, \bibinfo {author} {\bibfnamefont
  {Y.}~\bibnamefont {Hu}}, \bibinfo {author} {\bibfnamefont {X.}~\bibnamefont
  {Jiang}}, \bibinfo {author} {\bibfnamefont {C.-Z.}\ \bibnamefont {Peng}},
  \bibinfo {author} {\bibfnamefont {L.}~\bibnamefont {Li}}, \bibinfo {author}
  {\bibfnamefont {N.-L.}\ \bibnamefont {Liu}}, \bibinfo {author} {\bibfnamefont
  {Y.-A.}\ \bibnamefont {Chen}}, \bibinfo {author} {\bibfnamefont {C.-Y.}\
  \bibnamefont {Lu}}, \ and\ \bibinfo {author} {\bibfnamefont {J.-W.}\
  \bibnamefont {Pan}},\ }\href {\doibase 10.1103/PhysRevLett.117.210502}
  {\bibfield  {journal} {\bibinfo  {journal} {Phys. Rev. Lett.}\ }\textbf
  {\bibinfo {volume} {117}},\ \bibinfo {pages} {210502} (\bibinfo {year}
  {2016})}\BibitemShut {NoStop}%
\bibitem [{\citenamefont {Chen}\ \emph {et~al.}(2017)\citenamefont {Chen},
  \citenamefont {Li}, \citenamefont {Yao}, \citenamefont {Huang}, \citenamefont
  {Li}, \citenamefont {Lu}, \citenamefont {Yuan}, \citenamefont {Zhang},
  \citenamefont {Jiang}, \citenamefont {Peng} \emph
  {et~al.}}]{chen2017observation}%
  \BibitemOpen
  \bibfield  {author} {\bibinfo {author} {\bibfnamefont {L.-K.}\ \bibnamefont
  {Chen}}, \bibinfo {author} {\bibfnamefont {Z.-D.}\ \bibnamefont {Li}},
  \bibinfo {author} {\bibfnamefont {X.-C.}\ \bibnamefont {Yao}}, \bibinfo
  {author} {\bibfnamefont {M.}~\bibnamefont {Huang}}, \bibinfo {author}
  {\bibfnamefont {W.}~\bibnamefont {Li}}, \bibinfo {author} {\bibfnamefont
  {H.}~\bibnamefont {Lu}}, \bibinfo {author} {\bibfnamefont {X.}~\bibnamefont
  {Yuan}}, \bibinfo {author} {\bibfnamefont {Y.-B.}\ \bibnamefont {Zhang}},
  \bibinfo {author} {\bibfnamefont {X.}~\bibnamefont {Jiang}}, \bibinfo
  {author} {\bibfnamefont {C.-Z.}\ \bibnamefont {Peng}},  \emph {et~al.},\
  }\href@noop {} {\bibfield  {journal} {\bibinfo  {journal} {Optica}\ }\textbf
  {\bibinfo {volume} {4}},\ \bibinfo {pages} {77} (\bibinfo {year}
  {2017})}\BibitemShut {NoStop}%
\bibitem [{\citenamefont {Zhong}\ \emph {et~al.}(2018)\citenamefont {Zhong},
  \citenamefont {Li}, \citenamefont {Li}, \citenamefont {Peng}, \citenamefont
  {Su}, \citenamefont {Hu}, \citenamefont {He}, \citenamefont {Ding},
  \citenamefont {Zhang}, \citenamefont {Li}, \citenamefont {Zhang},
  \citenamefont {Wang}, \citenamefont {You}, \citenamefont {Wang},
  \citenamefont {Jiang}, \citenamefont {Li}, \citenamefont {Chen},
  \citenamefont {Liu}, \citenamefont {Lu},\ and\ \citenamefont
  {Pan}}]{12photon}%
  \BibitemOpen
  \bibfield  {author} {\bibinfo {author} {\bibfnamefont {H.-S.}\ \bibnamefont
  {Zhong}}, \bibinfo {author} {\bibfnamefont {Y.}~\bibnamefont {Li}}, \bibinfo
  {author} {\bibfnamefont {W.}~\bibnamefont {Li}}, \bibinfo {author}
  {\bibfnamefont {L.-C.}\ \bibnamefont {Peng}}, \bibinfo {author}
  {\bibfnamefont {Z.-E.}\ \bibnamefont {Su}}, \bibinfo {author} {\bibfnamefont
  {Y.}~\bibnamefont {Hu}}, \bibinfo {author} {\bibfnamefont {Y.-M.}\
  \bibnamefont {He}}, \bibinfo {author} {\bibfnamefont {X.}~\bibnamefont
  {Ding}}, \bibinfo {author} {\bibfnamefont {W.}~\bibnamefont {Zhang}},
  \bibinfo {author} {\bibfnamefont {H.}~\bibnamefont {Li}}, \bibinfo {author}
  {\bibfnamefont {L.}~\bibnamefont {Zhang}}, \bibinfo {author} {\bibfnamefont
  {Z.}~\bibnamefont {Wang}}, \bibinfo {author} {\bibfnamefont {L.}~\bibnamefont
  {You}}, \bibinfo {author} {\bibfnamefont {X.-L.}\ \bibnamefont {Wang}},
  \bibinfo {author} {\bibfnamefont {X.}~\bibnamefont {Jiang}}, \bibinfo
  {author} {\bibfnamefont {L.}~\bibnamefont {Li}}, \bibinfo {author}
  {\bibfnamefont {Y.-A.}\ \bibnamefont {Chen}}, \bibinfo {author}
  {\bibfnamefont {N.-L.}\ \bibnamefont {Liu}}, \bibinfo {author} {\bibfnamefont
  {C.-Y.}\ \bibnamefont {Lu}}, \ and\ \bibinfo {author} {\bibfnamefont {J.-W.}\
  \bibnamefont {Pan}},\ }\href {\doibase 10.1103/PhysRevLett.121.250505}
  {\bibfield  {journal} {\bibinfo  {journal} {Phys. Rev. Lett.}\ }\textbf
  {\bibinfo {volume} {121}},\ \bibinfo {pages} {250505} (\bibinfo {year}
  {2018})}\BibitemShut {NoStop}%
\bibitem [{\citenamefont {L\"ucke}\ \emph {et~al.}(2014)\citenamefont
  {L\"ucke}, \citenamefont {Peise}, \citenamefont {Vitagliano}, \citenamefont
  {Arlt}, \citenamefont {Santos}, \citenamefont {T\'oth},\ and\ \citenamefont
  {Klempt}}]{Toth2014Detecting}%
  \BibitemOpen
  \bibfield  {author} {\bibinfo {author} {\bibfnamefont {B.}~\bibnamefont
  {L\"ucke}}, \bibinfo {author} {\bibfnamefont {J.}~\bibnamefont {Peise}},
  \bibinfo {author} {\bibfnamefont {G.}~\bibnamefont {Vitagliano}}, \bibinfo
  {author} {\bibfnamefont {J.}~\bibnamefont {Arlt}}, \bibinfo {author}
  {\bibfnamefont {L.}~\bibnamefont {Santos}}, \bibinfo {author} {\bibfnamefont
  {G.}~\bibnamefont {T\'oth}}, \ and\ \bibinfo {author} {\bibfnamefont
  {C.}~\bibnamefont {Klempt}},\ }\href {\doibase
  10.1103/PhysRevLett.112.155304} {\bibfield  {journal} {\bibinfo  {journal}
  {Phys. Rev. Lett.}\ }\textbf {\bibinfo {volume} {112}},\ \bibinfo {pages}
  {155304} (\bibinfo {year} {2014})}\BibitemShut {NoStop}%
\bibitem [{\citenamefont {Luo}\ \emph {et~al.}(2017)\citenamefont {Luo},
  \citenamefont {Zou}, \citenamefont {Wu}, \citenamefont {Liu}, \citenamefont
  {Han}, \citenamefont {Tey},\ and\ \citenamefont
  {You}}]{Luo2017Deterministic}%
  \BibitemOpen
  \bibfield  {author} {\bibinfo {author} {\bibfnamefont {X.-Y.}\ \bibnamefont
  {Luo}}, \bibinfo {author} {\bibfnamefont {Y.-Q.}\ \bibnamefont {Zou}},
  \bibinfo {author} {\bibfnamefont {L.-N.}\ \bibnamefont {Wu}}, \bibinfo
  {author} {\bibfnamefont {Q.}~\bibnamefont {Liu}}, \bibinfo {author}
  {\bibfnamefont {M.-F.}\ \bibnamefont {Han}}, \bibinfo {author} {\bibfnamefont
  {M.~K.}\ \bibnamefont {Tey}}, \ and\ \bibinfo {author} {\bibfnamefont
  {L.}~\bibnamefont {You}},\ }\href {\doibase 10.1126/science.aag1106}
  {\bibfield  {journal} {\bibinfo  {journal} {Science}\ }\textbf {\bibinfo
  {volume} {355}},\ \bibinfo {pages} {620} (\bibinfo {year} {2017})},\ \Eprint
  {http://arxiv.org/abs/http://science.sciencemag.org/content/355/6325/620.full.pdf}
  {http://science.sciencemag.org/content/355/6325/620.full.pdf} \BibitemShut
  {NoStop}%
\bibitem [{\citenamefont {Lange}\ \emph {et~al.}(2018)\citenamefont {Lange},
  \citenamefont {Peise}, \citenamefont {L{\"u}cke}, \citenamefont {Kruse},
  \citenamefont {Vitagliano}, \citenamefont {Apellaniz}, \citenamefont
  {Kleinmann}, \citenamefont {T{\'o}th},\ and\ \citenamefont
  {Klempt}}]{Lange2018atomic}%
  \BibitemOpen
  \bibfield  {author} {\bibinfo {author} {\bibfnamefont {K.}~\bibnamefont
  {Lange}}, \bibinfo {author} {\bibfnamefont {J.}~\bibnamefont {Peise}},
  \bibinfo {author} {\bibfnamefont {B.}~\bibnamefont {L{\"u}cke}}, \bibinfo
  {author} {\bibfnamefont {I.}~\bibnamefont {Kruse}}, \bibinfo {author}
  {\bibfnamefont {G.}~\bibnamefont {Vitagliano}}, \bibinfo {author}
  {\bibfnamefont {I.}~\bibnamefont {Apellaniz}}, \bibinfo {author}
  {\bibfnamefont {M.}~\bibnamefont {Kleinmann}}, \bibinfo {author}
  {\bibfnamefont {G.}~\bibnamefont {T{\'o}th}}, \ and\ \bibinfo {author}
  {\bibfnamefont {C.}~\bibnamefont {Klempt}},\ }\href {\doibase
  10.1126/science.aao2035} {\bibfield  {journal} {\bibinfo  {journal}
  {Science}\ }\textbf {\bibinfo {volume} {360}},\ \bibinfo {pages} {416}
  (\bibinfo {year} {2018})},\ \Eprint
  {http://arxiv.org/abs/https://science.sciencemag.org/content/360/6387/416.full.pdf}
  {https://science.sciencemag.org/content/360/6387/416.full.pdf} \BibitemShut
  {NoStop}%
\bibitem [{Note1()}]{Note1}%
  \BibitemOpen
  \bibinfo {note}
  {Https://www.sciencenews.org/article/google-moves-toward-quantum-supremacy-72-qubit-computer}\BibitemShut
  {NoStop}%
\bibitem [{Note2()}]{Note2}%
  \BibitemOpen
  \bibinfo {note}
  {Https://physicsworld.com/a/ion-based-commercial-quantum-computer-is-a-first/}\BibitemShut
  {NoStop}%
\bibitem [{\citenamefont {D\"ur}\ \emph {et~al.}(2000)\citenamefont {D\"ur},
  \citenamefont {Vidal},\ and\ \citenamefont {Cirac}}]{Dur2000Three}%
  \BibitemOpen
  \bibfield  {author} {\bibinfo {author} {\bibfnamefont {W.}~\bibnamefont
  {D\"ur}}, \bibinfo {author} {\bibfnamefont {G.}~\bibnamefont {Vidal}}, \ and\
  \bibinfo {author} {\bibfnamefont {J.~I.}\ \bibnamefont {Cirac}},\ }\href
  {\doibase 10.1103/PhysRevA.62.062314} {\bibfield  {journal} {\bibinfo
  {journal} {Phys. Rev. A}\ }\textbf {\bibinfo {volume} {62}},\ \bibinfo
  {pages} {062314} (\bibinfo {year} {2000})}\BibitemShut {NoStop}%
\bibitem [{\citenamefont {Ac\'{\i}n}\ \emph {et~al.}(2001)\citenamefont
  {Ac\'{\i}n}, \citenamefont {Bru\ss{}}, \citenamefont {Lewenstein},\ and\
  \citenamefont {Sanpera}}]{Acin2001Classification}%
  \BibitemOpen
  \bibfield  {author} {\bibinfo {author} {\bibfnamefont {A.}~\bibnamefont
  {Ac\'{\i}n}}, \bibinfo {author} {\bibfnamefont {D.}~\bibnamefont {Bru\ss{}}},
  \bibinfo {author} {\bibfnamefont {M.}~\bibnamefont {Lewenstein}}, \ and\
  \bibinfo {author} {\bibfnamefont {A.}~\bibnamefont {Sanpera}},\ }\href
  {\doibase 10.1103/PhysRevLett.87.040401} {\bibfield  {journal} {\bibinfo
  {journal} {Phys. Rev. Lett.}\ }\textbf {\bibinfo {volume} {87}},\ \bibinfo
  {pages} {040401} (\bibinfo {year} {2001})}\BibitemShut {NoStop}%
\bibitem [{\citenamefont {Guhne}\ \emph {et~al.}(2005)\citenamefont {Guhne},
  \citenamefont {Toth},\ and\ \citenamefont {Briegel}}]{Guhne2005Multipartite}%
  \BibitemOpen
  \bibfield  {author} {\bibinfo {author} {\bibfnamefont {O.}~\bibnamefont
  {Guhne}}, \bibinfo {author} {\bibfnamefont {G.}~\bibnamefont {Toth}}, \ and\
  \bibinfo {author} {\bibfnamefont {H.~J.}\ \bibnamefont {Briegel}},\ }\href
  {http://stacks.iop.org/1367-2630/7/i=1/a=229} {\bibfield  {journal} {\bibinfo
   {journal} {New Journal of Physics}\ }\textbf {\bibinfo {volume} {7}},\
  \bibinfo {pages} {229} (\bibinfo {year} {2005})}\BibitemShut {NoStop}%
\bibitem [{\citenamefont {Huber}\ and\ \citenamefont
  {de~Vicente}(2013)}]{Huber2013Structure}%
  \BibitemOpen
  \bibfield  {author} {\bibinfo {author} {\bibfnamefont {M.}~\bibnamefont
  {Huber}}\ and\ \bibinfo {author} {\bibfnamefont {J.~I.}\ \bibnamefont
  {de~Vicente}},\ }\href {\doibase 10.1103/PhysRevLett.110.030501} {\bibfield
  {journal} {\bibinfo  {journal} {Phys. Rev. Lett.}\ }\textbf {\bibinfo
  {volume} {110}},\ \bibinfo {pages} {030501} (\bibinfo {year}
  {2013})}\BibitemShut {NoStop}%
\bibitem [{\citenamefont {Shahandeh}\ \emph {et~al.}(2014)\citenamefont
  {Shahandeh}, \citenamefont {Sperling},\ and\ \citenamefont
  {Vogel}}]{Shahandeh2014Structural}%
  \BibitemOpen
  \bibfield  {author} {\bibinfo {author} {\bibfnamefont {F.}~\bibnamefont
  {Shahandeh}}, \bibinfo {author} {\bibfnamefont {J.}~\bibnamefont {Sperling}},
  \ and\ \bibinfo {author} {\bibfnamefont {W.}~\bibnamefont {Vogel}},\ }\href
  {\doibase 10.1103/PhysRevLett.113.260502} {\bibfield  {journal} {\bibinfo
  {journal} {Phys. Rev. Lett.}\ }\textbf {\bibinfo {volume} {113}},\ \bibinfo
  {pages} {260502} (\bibinfo {year} {2014})}\BibitemShut {NoStop}%
\bibitem [{\citenamefont {Lu}\ \emph {et~al.}(2018)\citenamefont {Lu},
  \citenamefont {Zhao}, \citenamefont {Li}, \citenamefont {Yin}, \citenamefont
  {Yuan}, \citenamefont {Hung}, \citenamefont {Chen}, \citenamefont {Li},
  \citenamefont {Liu}, \citenamefont {Peng}, \citenamefont {Liang},
  \citenamefont {Ma}, \citenamefont {Chen},\ and\ \citenamefont
  {Pan}}]{Lu2018Structure}%
  \BibitemOpen
  \bibfield  {author} {\bibinfo {author} {\bibfnamefont {H.}~\bibnamefont
  {Lu}}, \bibinfo {author} {\bibfnamefont {Q.}~\bibnamefont {Zhao}}, \bibinfo
  {author} {\bibfnamefont {Z.-D.}\ \bibnamefont {Li}}, \bibinfo {author}
  {\bibfnamefont {X.-F.}\ \bibnamefont {Yin}}, \bibinfo {author} {\bibfnamefont
  {X.}~\bibnamefont {Yuan}}, \bibinfo {author} {\bibfnamefont {J.-C.}\
  \bibnamefont {Hung}}, \bibinfo {author} {\bibfnamefont {L.-K.}\ \bibnamefont
  {Chen}}, \bibinfo {author} {\bibfnamefont {L.}~\bibnamefont {Li}}, \bibinfo
  {author} {\bibfnamefont {N.-L.}\ \bibnamefont {Liu}}, \bibinfo {author}
  {\bibfnamefont {C.-Z.}\ \bibnamefont {Peng}}, \bibinfo {author}
  {\bibfnamefont {Y.-C.}\ \bibnamefont {Liang}}, \bibinfo {author}
  {\bibfnamefont {X.}~\bibnamefont {Ma}}, \bibinfo {author} {\bibfnamefont
  {Y.-A.}\ \bibnamefont {Chen}}, \ and\ \bibinfo {author} {\bibfnamefont
  {J.-W.}\ \bibnamefont {Pan}},\ }\href {\doibase 10.1103/PhysRevX.8.021072}
  {\bibfield  {journal} {\bibinfo  {journal} {Phys. Rev. X}\ }\textbf {\bibinfo
  {volume} {8}},\ \bibinfo {pages} {021072} (\bibinfo {year}
  {2018})}\BibitemShut {NoStop}%
\bibitem [{\citenamefont {Cirac}\ \emph {et~al.}(1999)\citenamefont {Cirac},
  \citenamefont {Ekert}, \citenamefont {Huelga},\ and\ \citenamefont
  {Macchiavello}}]{Cirac1999Distributed}%
  \BibitemOpen
  \bibfield  {author} {\bibinfo {author} {\bibfnamefont {J.~I.}\ \bibnamefont
  {Cirac}}, \bibinfo {author} {\bibfnamefont {A.~K.}\ \bibnamefont {Ekert}},
  \bibinfo {author} {\bibfnamefont {S.~F.}\ \bibnamefont {Huelga}}, \ and\
  \bibinfo {author} {\bibfnamefont {C.}~\bibnamefont {Macchiavello}},\ }\href
  {\doibase 10.1103/PhysRevA.59.4249} {\bibfield  {journal} {\bibinfo
  {journal} {Phys. Rev. A}\ }\textbf {\bibinfo {volume} {59}},\ \bibinfo
  {pages} {4249} (\bibinfo {year} {1999})}\BibitemShut {NoStop}%
\bibitem [{\citenamefont {Kimble}(2008)}]{Kimble2008internet}%
  \BibitemOpen
  \bibfield  {author} {\bibinfo {author} {\bibfnamefont {H.~J.}\ \bibnamefont
  {Kimble}},\ }\href {https://doi.org/10.1038/nature07127} {\bibfield
  {journal} {\bibinfo  {journal} {Nature}\ }\textbf {\bibinfo {volume} {453}},\
  \bibinfo {pages} {1023 EP } (\bibinfo {year} {2008})}\BibitemShut {NoStop}%
\bibitem [{\citenamefont {Terhal}(2001)}]{TERHAL2001witness}%
  \BibitemOpen
  \bibfield  {author} {\bibinfo {author} {\bibfnamefont {B.~M.}\ \bibnamefont
  {Terhal}},\ }\href {\doibase https://doi.org/10.1016/S0024-3795(00)00251-2}
  {\bibfield  {journal} {\bibinfo  {journal} {Linear Algebra and its
  Applications}\ }\textbf {\bibinfo {volume} {323}},\ \bibinfo {pages} {61 }
  (\bibinfo {year} {2001})}\BibitemShut {NoStop}%
\bibitem [{\citenamefont {Guhne}\ and\ \citenamefont
  {Toth}(2009)}]{GUHNE2009detection}%
  \BibitemOpen
  \bibfield  {author} {\bibinfo {author} {\bibfnamefont {O.}~\bibnamefont
  {Guhne}}\ and\ \bibinfo {author} {\bibfnamefont {G.}~\bibnamefont {Toth}},\
  }\href {\doibase https://doi.org/10.1016/j.physrep.2009.02.004} {\bibfield
  {journal} {\bibinfo  {journal} {Physics Reports}\ }\textbf {\bibinfo {volume}
  {474}},\ \bibinfo {pages} {1 } (\bibinfo {year} {2009})}\BibitemShut
  {NoStop}%
\bibitem [{\citenamefont {Friis}\ \emph {et~al.}(2019)\citenamefont {Friis},
  \citenamefont {Vitagliano}, \citenamefont {Malik},\ and\ \citenamefont
  {Huber}}]{Friis2019Reviews}%
  \BibitemOpen
  \bibfield  {author} {\bibinfo {author} {\bibfnamefont {N.}~\bibnamefont
  {Friis}}, \bibinfo {author} {\bibfnamefont {G.}~\bibnamefont {Vitagliano}},
  \bibinfo {author} {\bibfnamefont {M.}~\bibnamefont {Malik}}, \ and\ \bibinfo
  {author} {\bibfnamefont {M.}~\bibnamefont {Huber}},\ }\href {\doibase
  10.1038/s42254-018-0003-5} {\bibfield  {journal} {\bibinfo  {journal} {Nature
  Reviews Physics}\ }\textbf {\bibinfo {volume} {1}},\ \bibinfo {pages} {72}
  (\bibinfo {year} {2019})}\BibitemShut {NoStop}%
\bibitem [{\citenamefont {G\"uhne}\ \emph {et~al.}(2007)\citenamefont
  {G\"uhne}, \citenamefont {Lu}, \citenamefont {Gao},\ and\ \citenamefont
  {Pan}}]{Guhne2007Toolbox}%
  \BibitemOpen
  \bibfield  {author} {\bibinfo {author} {\bibfnamefont {O.}~\bibnamefont
  {G\"uhne}}, \bibinfo {author} {\bibfnamefont {C.-Y.}\ \bibnamefont {Lu}},
  \bibinfo {author} {\bibfnamefont {W.-B.}\ \bibnamefont {Gao}}, \ and\
  \bibinfo {author} {\bibfnamefont {J.-W.}\ \bibnamefont {Pan}},\ }\href
  {\doibase 10.1103/PhysRevA.76.030305} {\bibfield  {journal} {\bibinfo
  {journal} {Phys. Rev. A}\ }\textbf {\bibinfo {volume} {76}},\ \bibinfo
  {pages} {030305(R)} (\bibinfo {year} {2007})}\BibitemShut {NoStop}%
\bibitem [{\citenamefont {Zhou}\ \emph {et~al.}(2019)\citenamefont {Zhou},
  \citenamefont {Guo},\ and\ \citenamefont {Ma}}]{You2019Sym}%
  \BibitemOpen
  \bibfield  {author} {\bibinfo {author} {\bibfnamefont {Y.}~\bibnamefont
  {Zhou}}, \bibinfo {author} {\bibfnamefont {C.}~\bibnamefont {Guo}}, \ and\
  \bibinfo {author} {\bibfnamefont {X.}~\bibnamefont {Ma}},\ }\href {\doibase
  10.1103/PhysRevA.99.052324} {\bibfield  {journal} {\bibinfo  {journal} {Phys.
  Rev. A}\ }\textbf {\bibinfo {volume} {99}},\ \bibinfo {pages} {052324}
  (\bibinfo {year} {2019})}\BibitemShut {NoStop}%
\bibitem [{\citenamefont {T\'oth}\ and\ \citenamefont
  {G\"uhne}(2005)}]{Toth2005Detecting}%
  \BibitemOpen
  \bibfield  {author} {\bibinfo {author} {\bibfnamefont {G.}~\bibnamefont
  {T\'oth}}\ and\ \bibinfo {author} {\bibfnamefont {O.}~\bibnamefont
  {G\"uhne}},\ }\href {\doibase 10.1103/PhysRevLett.94.060501} {\bibfield
  {journal} {\bibinfo  {journal} {Phys. Rev. Lett.}\ }\textbf {\bibinfo
  {volume} {94}},\ \bibinfo {pages} {060501} (\bibinfo {year}
  {2005})}\BibitemShut {NoStop}%
\bibitem [{\citenamefont {Knips}\ \emph {et~al.}(2016)\citenamefont {Knips},
  \citenamefont {Schwemmer}, \citenamefont {Klein}, \citenamefont
  {Wie\ifmmode~\acute{s}\else \'{s}\fi{}niak},\ and\ \citenamefont
  {Weinfurter}}]{Knips2016Multipartite}%
  \BibitemOpen
  \bibfield  {author} {\bibinfo {author} {\bibfnamefont {L.}~\bibnamefont
  {Knips}}, \bibinfo {author} {\bibfnamefont {C.}~\bibnamefont {Schwemmer}},
  \bibinfo {author} {\bibfnamefont {N.}~\bibnamefont {Klein}}, \bibinfo
  {author} {\bibfnamefont {M.}~\bibnamefont {Wie\ifmmode~\acute{s}\else
  \'{s}\fi{}niak}}, \ and\ \bibinfo {author} {\bibfnamefont {H.}~\bibnamefont
  {Weinfurter}},\ }\href {\doibase 10.1103/PhysRevLett.117.210504} {\bibfield
  {journal} {\bibinfo  {journal} {Phys. Rev. Lett.}\ }\textbf {\bibinfo
  {volume} {117}},\ \bibinfo {pages} {210504} (\bibinfo {year}
  {2016})}\BibitemShut {NoStop}%
\bibitem [{\citenamefont {{Wang}}\ \emph {et~al.}(2018)\citenamefont {{Wang}},
  \citenamefont {{Li}}, \citenamefont {{Yin}},\ and\ \citenamefont
  {{Zeng}}}]{ibm16entanglement}%
  \BibitemOpen
  \bibfield  {author} {\bibinfo {author} {\bibfnamefont {Y.}~\bibnamefont
  {{Wang}}}, \bibinfo {author} {\bibfnamefont {Y.}~\bibnamefont {{Li}}},
  \bibinfo {author} {\bibfnamefont {Z.-q.}\ \bibnamefont {{Yin}}}, \ and\
  \bibinfo {author} {\bibfnamefont {B.}~\bibnamefont {{Zeng}}},\ }\href
  {\doibase 10.1038/s41534-018-0095-x} {\bibfield  {journal} {\bibinfo
  {journal} {npj Quantum Information}\ }\textbf {\bibinfo {volume} {4}},\
  \bibinfo {eid} {46} (\bibinfo {year} {2018})},\ \Eprint
  {http://arxiv.org/abs/1801.03782} {arXiv:1801.03782 [quant-ph]} \BibitemShut
  {NoStop}%
\bibitem [{\citenamefont {Briegel}\ and\ \citenamefont
  {Raussendorf}(2001)}]{Briegel2001Persistent}%
  \BibitemOpen
  \bibfield  {author} {\bibinfo {author} {\bibfnamefont {H.~J.}\ \bibnamefont
  {Briegel}}\ and\ \bibinfo {author} {\bibfnamefont {R.}~\bibnamefont
  {Raussendorf}},\ }\href {\doibase 10.1103/PhysRevLett.86.910} {\bibfield
  {journal} {\bibinfo  {journal} {Phys. Rev. Lett.}\ }\textbf {\bibinfo
  {volume} {86}},\ \bibinfo {pages} {910} (\bibinfo {year} {2001})}\BibitemShut
  {NoStop}%
\bibitem [{\citenamefont {{Hein}}\ \emph {et~al.}(2006)\citenamefont {{Hein}},
  \citenamefont {{D{\"u}r}}, \citenamefont {{Eisert}}, \citenamefont
  {{Raussendorf}}, \citenamefont {{Van den Nest}},\ and\ \citenamefont
  {{Briegel}}}]{Hein2006Graph}%
  \BibitemOpen
  \bibfield  {author} {\bibinfo {author} {\bibfnamefont {M.}~\bibnamefont
  {{Hein}}}, \bibinfo {author} {\bibfnamefont {W.}~\bibnamefont {{D{\"u}r}}},
  \bibinfo {author} {\bibfnamefont {J.}~\bibnamefont {{Eisert}}}, \bibinfo
  {author} {\bibfnamefont {R.}~\bibnamefont {{Raussendorf}}}, \bibinfo {author}
  {\bibfnamefont {M.}~\bibnamefont {{Van den Nest}}}, \ and\ \bibinfo {author}
  {\bibfnamefont {H.~J.}\ \bibnamefont {{Briegel}}},\ }\href@noop {} {\bibfield
   {journal} {\bibinfo  {journal} {arXiv e-prints}\ ,\ \bibinfo {eid}
  {quant-ph/0602096}} (\bibinfo {year} {2006})},\ \Eprint
  {http://arxiv.org/abs/quant-ph/0602096} {arXiv:quant-ph/0602096 [quant-ph]}
  \BibitemShut {NoStop}%
\bibitem [{\citenamefont {Raussendorf}\ and\ \citenamefont
  {Briegel}(2001)}]{onewayQC}%
  \BibitemOpen
  \bibfield  {author} {\bibinfo {author} {\bibfnamefont {R.}~\bibnamefont
  {Raussendorf}}\ and\ \bibinfo {author} {\bibfnamefont {H.~J.}\ \bibnamefont
  {Briegel}},\ }\href {\doibase 10.1103/PhysRevLett.86.5188} {\bibfield
  {journal} {\bibinfo  {journal} {Phys. Rev. Lett.}\ }\textbf {\bibinfo
  {volume} {86}},\ \bibinfo {pages} {5188} (\bibinfo {year}
  {2001})}\BibitemShut {NoStop}%
\bibitem [{\citenamefont {Raussendorf}\ \emph {et~al.}(2003)\citenamefont
  {Raussendorf}, \citenamefont {Browne},\ and\ \citenamefont
  {Briegel}}]{Raussendorf2003Measurement}%
  \BibitemOpen
  \bibfield  {author} {\bibinfo {author} {\bibfnamefont {R.}~\bibnamefont
  {Raussendorf}}, \bibinfo {author} {\bibfnamefont {D.~E.}\ \bibnamefont
  {Browne}}, \ and\ \bibinfo {author} {\bibfnamefont {H.~J.}\ \bibnamefont
  {Briegel}},\ }\href {\doibase 10.1103/PhysRevA.68.022312} {\bibfield
  {journal} {\bibinfo  {journal} {Phys. Rev. A}\ }\textbf {\bibinfo {volume}
  {68}},\ \bibinfo {pages} {022312} (\bibinfo {year} {2003})}\BibitemShut
  {NoStop}%
\bibitem [{\citenamefont {Schlingemann}\ and\ \citenamefont
  {Werner}(2001)}]{Werner2001error}%
  \BibitemOpen
  \bibfield  {author} {\bibinfo {author} {\bibfnamefont {D.}~\bibnamefont
  {Schlingemann}}\ and\ \bibinfo {author} {\bibfnamefont {R.~F.}\ \bibnamefont
  {Werner}},\ }\href {\doibase 10.1103/PhysRevA.65.012308} {\bibfield
  {journal} {\bibinfo  {journal} {Phys. Rev. A}\ }\textbf {\bibinfo {volume}
  {65}},\ \bibinfo {pages} {012308} (\bibinfo {year} {2001})}\BibitemShut
  {NoStop}%
\bibitem [{\citenamefont {G\"uhne}\ \emph {et~al.}(2005)\citenamefont
  {G\"uhne}, \citenamefont {T\'oth}, \citenamefont {Hyllus},\ and\
  \citenamefont {Briegel}}]{Guhne2005Bell}%
  \BibitemOpen
  \bibfield  {author} {\bibinfo {author} {\bibfnamefont {O.}~\bibnamefont
  {G\"uhne}}, \bibinfo {author} {\bibfnamefont {G.}~\bibnamefont {T\'oth}},
  \bibinfo {author} {\bibfnamefont {P.}~\bibnamefont {Hyllus}}, \ and\ \bibinfo
  {author} {\bibfnamefont {H.~J.}\ \bibnamefont {Briegel}},\ }\href {\doibase
  10.1103/PhysRevLett.95.120405} {\bibfield  {journal} {\bibinfo  {journal}
  {Phys. Rev. Lett.}\ }\textbf {\bibinfo {volume} {95}},\ \bibinfo {pages}
  {120405} (\bibinfo {year} {2005})}\BibitemShut {NoStop}%
\bibitem [{\citenamefont {Zeng}\ \emph {et~al.}(2015)\citenamefont {Zeng},
  \citenamefont {Chen}, \citenamefont {Zhou},\ and\ \citenamefont
  {Wen}}]{Bei2019Meets}%
  \BibitemOpen
  \bibfield  {author} {\bibinfo {author} {\bibfnamefont {B.}~\bibnamefont
  {Zeng}}, \bibinfo {author} {\bibfnamefont {X.}~\bibnamefont {Chen}}, \bibinfo
  {author} {\bibfnamefont {D.-L.}\ \bibnamefont {Zhou}}, \ and\ \bibinfo
  {author} {\bibfnamefont {X.-G.}\ \bibnamefont {Wen}},\ }\href@noop {}
  {\bibfield  {journal} {\bibinfo  {journal} {arXiv:1508.02595}\ } (\bibinfo
  {year} {2015})}\BibitemShut {NoStop}%
\bibitem [{Note3()}]{Note3}%
  \BibitemOpen
  \bibinfo {note} {Note that the colorability is a property of the graph (not
  the state), one may reduce the number of measurement settings by local
  Clifford operations \cite {Hein2006Graph}.}\BibitemShut {Stop}%
\bibitem [{\citenamefont {Eisert}\ \emph {et~al.}(2010)\citenamefont {Eisert},
  \citenamefont {Cramer},\ and\ \citenamefont {Plenio}}]{Eisert2010area}%
  \BibitemOpen
  \bibfield  {author} {\bibinfo {author} {\bibfnamefont {J.}~\bibnamefont
  {Eisert}}, \bibinfo {author} {\bibfnamefont {M.}~\bibnamefont {Cramer}}, \
  and\ \bibinfo {author} {\bibfnamefont {M.~B.}\ \bibnamefont {Plenio}},\
  }\href {\doibase 10.1103/RevModPhys.82.277} {\bibfield  {journal} {\bibinfo
  {journal} {Rev. Mod. Phys.}\ }\textbf {\bibinfo {volume} {82}},\ \bibinfo
  {pages} {277} (\bibinfo {year} {2010})}\BibitemShut {NoStop}%
\bibitem [{\citenamefont {Gottesman}(1997)}]{gottesman1997stabilizer}%
  \BibitemOpen
  \bibfield  {author} {\bibinfo {author} {\bibfnamefont {D.}~\bibnamefont
  {Gottesman}},\ }\href@noop {} {\bibfield  {journal} {\bibinfo  {journal}
  {arXiv: quant-ph/9705052}\ } (\bibinfo {year} {1997})}\BibitemShut {NoStop}%
\bibitem [{\citenamefont {Rossi}\ \emph {et~al.}(2013)\citenamefont {Rossi},
  \citenamefont {Huber}, \citenamefont {BruB},\ and\ \citenamefont
  {Macchiavello}}]{Rossi2013hyper}%
  \BibitemOpen
  \bibfield  {author} {\bibinfo {author} {\bibfnamefont {M.}~\bibnamefont
  {Rossi}}, \bibinfo {author} {\bibfnamefont {M.}~\bibnamefont {Huber}},
  \bibinfo {author} {\bibfnamefont {D.}~\bibnamefont {BruB}}, \ and\ \bibinfo
  {author} {\bibfnamefont {C.}~\bibnamefont {Macchiavello}},\ }\href
  {http://stacks.iop.org/1367-2630/15/i=11/a=113022} {\bibfield  {journal}
  {\bibinfo  {journal} {New Journal of Physics}\ }\textbf {\bibinfo {volume}
  {15}},\ \bibinfo {pages} {113022} (\bibinfo {year} {2013})}\BibitemShut
  {NoStop}%
\bibitem [{\citenamefont {Or\.us}(2014)}]{ORUS2014TNW}%
  \BibitemOpen
  \bibfield  {author} {\bibinfo {author} {\bibfnamefont {R.}~\bibnamefont
  {Or\.us}},\ }\href {\doibase https://doi.org/10.1016/j.aop.2014.06.013}
  {\bibfield  {journal} {\bibinfo  {journal} {Annals of Physics}\ }\textbf
  {\bibinfo {volume} {349}},\ \bibinfo {pages} {117 } (\bibinfo {year}
  {2014})}\BibitemShut {NoStop}%
\bibitem [{\citenamefont {Carleo}\ and\ \citenamefont
  {Troyer}(2017)}]{Carleo2017Solving}%
  \BibitemOpen
  \bibfield  {author} {\bibinfo {author} {\bibfnamefont {G.}~\bibnamefont
  {Carleo}}\ and\ \bibinfo {author} {\bibfnamefont {M.}~\bibnamefont
  {Troyer}},\ }\href {\doibase 10.1126/science.aag2302} {\bibfield  {journal}
  {\bibinfo  {journal} {Science}\ }\textbf {\bibinfo {volume} {355}},\ \bibinfo
  {pages} {602} (\bibinfo {year} {2017})},\ \Eprint
  {http://arxiv.org/abs/http://science.sciencemag.org/content/355/6325/602.full.pdf}
  {http://science.sciencemag.org/content/355/6325/602.full.pdf} \BibitemShut
  {NoStop}%
\bibitem [{\citenamefont {Deng}\ \emph {et~al.}(2017)\citenamefont {Deng},
  \citenamefont {Li},\ and\ \citenamefont {Das~Sarma}}]{Deng2017Neural}%
  \BibitemOpen
  \bibfield  {author} {\bibinfo {author} {\bibfnamefont {D.-L.}\ \bibnamefont
  {Deng}}, \bibinfo {author} {\bibfnamefont {X.}~\bibnamefont {Li}}, \ and\
  \bibinfo {author} {\bibfnamefont {S.}~\bibnamefont {Das~Sarma}},\ }\href
  {\doibase 10.1103/PhysRevX.7.021021} {\bibfield  {journal} {\bibinfo
  {journal} {Phys. Rev. X}\ }\textbf {\bibinfo {volume} {7}},\ \bibinfo {pages}
  {021021} (\bibinfo {year} {2017})}\BibitemShut {NoStop}%
\bibitem [{\citenamefont {S\o{}rensen}\ and\ \citenamefont
  {M\o{}lmer}(2001)}]{Sorensen2001depth}%
  \BibitemOpen
  \bibfield  {author} {\bibinfo {author} {\bibfnamefont {A.~S.}\ \bibnamefont
  {S\o{}rensen}}\ and\ \bibinfo {author} {\bibfnamefont {K.}~\bibnamefont
  {M\o{}lmer}},\ }\href {\doibase 10.1103/PhysRevLett.86.4431} {\bibfield
  {journal} {\bibinfo  {journal} {Phys. Rev. Lett.}\ }\textbf {\bibinfo
  {volume} {86}},\ \bibinfo {pages} {4431} (\bibinfo {year}
  {2001})}\BibitemShut {NoStop}%
\bibitem [{\citenamefont {W\"olk}\ and\ \citenamefont
  {G\"uhne}(2016)}]{Wolk2016width}%
  \BibitemOpen
  \bibfield  {author} {\bibinfo {author} {\bibfnamefont {S.}~\bibnamefont
  {W\"olk}}\ and\ \bibinfo {author} {\bibfnamefont {O.}~\bibnamefont
  {G\"uhne}},\ }\href {http://stacks.iop.org/1367-2630/18/i=12/a=123024}
  {\bibfield  {journal} {\bibinfo  {journal} {New Journal of Physics}\ }\textbf
  {\bibinfo {volume} {18}},\ \bibinfo {pages} {123024} (\bibinfo {year}
  {2016})}\BibitemShut {NoStop}%
\bibitem [{\citenamefont {Branciard}\ \emph {et~al.}(2013)\citenamefont
  {Branciard}, \citenamefont {Rosset}, \citenamefont {Liang},\ and\
  \citenamefont {Gisin}}]{Branciard2013Measurement}%
  \BibitemOpen
  \bibfield  {author} {\bibinfo {author} {\bibfnamefont {C.}~\bibnamefont
  {Branciard}}, \bibinfo {author} {\bibfnamefont {D.}~\bibnamefont {Rosset}},
  \bibinfo {author} {\bibfnamefont {Y.-C.}\ \bibnamefont {Liang}}, \ and\
  \bibinfo {author} {\bibfnamefont {N.}~\bibnamefont {Gisin}},\ }\href
  {\doibase 10.1103/PhysRevLett.110.060405} {\bibfield  {journal} {\bibinfo
  {journal} {Phys. Rev. Lett.}\ }\textbf {\bibinfo {volume} {110}},\ \bibinfo
  {pages} {060405} (\bibinfo {year} {2013})}\BibitemShut {NoStop}%
\bibitem [{\citenamefont {Liang}\ \emph {et~al.}(2015)\citenamefont {Liang},
  \citenamefont {Rosset}, \citenamefont {Bancal}, \citenamefont {P\"utz},
  \citenamefont {Barnea},\ and\ \citenamefont {Gisin}}]{Liang2015Bell}%
  \BibitemOpen
  \bibfield  {author} {\bibinfo {author} {\bibfnamefont {Y.-C.}\ \bibnamefont
  {Liang}}, \bibinfo {author} {\bibfnamefont {D.}~\bibnamefont {Rosset}},
  \bibinfo {author} {\bibfnamefont {J.-D.}\ \bibnamefont {Bancal}}, \bibinfo
  {author} {\bibfnamefont {G.}~\bibnamefont {P\"utz}}, \bibinfo {author}
  {\bibfnamefont {T.~J.}\ \bibnamefont {Barnea}}, \ and\ \bibinfo {author}
  {\bibfnamefont {N.}~\bibnamefont {Gisin}},\ }\href {\doibase
  10.1103/PhysRevLett.114.190401} {\bibfield  {journal} {\bibinfo  {journal}
  {Phys. Rev. Lett.}\ }\textbf {\bibinfo {volume} {114}},\ \bibinfo {pages}
  {190401} (\bibinfo {year} {2015})}\BibitemShut {NoStop}%
\bibitem [{\citenamefont {Zhao}\ \emph {et~al.}(2016)\citenamefont {Zhao},
  \citenamefont {Yuan},\ and\ \citenamefont {Ma}}]{Zhao2016Efficient}%
  \BibitemOpen
  \bibfield  {author} {\bibinfo {author} {\bibfnamefont {Q.}~\bibnamefont
  {Zhao}}, \bibinfo {author} {\bibfnamefont {X.}~\bibnamefont {Yuan}}, \ and\
  \bibinfo {author} {\bibfnamefont {X.}~\bibnamefont {Ma}},\ }\href {\doibase
  10.1103/PhysRevA.94.012343} {\bibfield  {journal} {\bibinfo  {journal} {Phys.
  Rev. A}\ }\textbf {\bibinfo {volume} {94}},\ \bibinfo {pages} {012343}
  (\bibinfo {year} {2016})}\BibitemShut {NoStop}%
\bibitem [{\citenamefont {Dimic}\ and\ \citenamefont
  {Dakic}(2018)}]{Dimic2018Single}%
  \BibitemOpen
  \bibfield  {author} {\bibinfo {author} {\bibfnamefont {A.}~\bibnamefont
  {Dimic}}\ and\ \bibinfo {author} {\bibfnamefont {B.}~\bibnamefont {Dakic}},\
  }\href {\doibase 10.1038/s41534-017-0055-x} {\bibfield  {journal} {\bibinfo
  {journal} {npj Quantum Information}\ }\textbf {\bibinfo {volume} {4}},\
  \bibinfo {pages} {11} (\bibinfo {year} {2018})}\BibitemShut {NoStop}%
\bibitem [{\citenamefont {Saggio}\ \emph {et~al.}(2019)\citenamefont {Saggio},
  \citenamefont {Dimic}, \citenamefont {Greganti}, \citenamefont {Rozema},
  \citenamefont {Walther},\ and\ \citenamefont {Dakic}}]{Saggio2019few}%
  \BibitemOpen
  \bibfield  {author} {\bibinfo {author} {\bibfnamefont {V.}~\bibnamefont
  {Saggio}}, \bibinfo {author} {\bibfnamefont {A.}~\bibnamefont {Dimic}},
  \bibinfo {author} {\bibfnamefont {C.}~\bibnamefont {Greganti}}, \bibinfo
  {author} {\bibfnamefont {L.~A.}\ \bibnamefont {Rozema}}, \bibinfo {author}
  {\bibfnamefont {P.}~\bibnamefont {Walther}}, \ and\ \bibinfo {author}
  {\bibfnamefont {B.}~\bibnamefont {Dakic}},\ }\href {\doibase
  10.1038/s41567-019-0550-4} {\bibfield  {journal} {\bibinfo  {journal} {Nature
  Physics}\ } (\bibinfo {year} {2019}),\ 10.1038/s41567-019-0550-4}\BibitemShut
  {NoStop}%
\bibitem [{\citenamefont {Bourennane}\ \emph {et~al.}(2004)\citenamefont
  {Bourennane}, \citenamefont {Eibl}, \citenamefont {Kurtsiefer}, \citenamefont
  {Gaertner}, \citenamefont {Weinfurter}, \citenamefont {G\"uhne},
  \citenamefont {Hyllus}, \citenamefont {Bru\ss{}}, \citenamefont
  {Lewenstein},\ and\ \citenamefont {Sanpera}}]{Bourennane2004Experimental}%
  \BibitemOpen
  \bibfield  {author} {\bibinfo {author} {\bibfnamefont {M.}~\bibnamefont
  {Bourennane}}, \bibinfo {author} {\bibfnamefont {M.}~\bibnamefont {Eibl}},
  \bibinfo {author} {\bibfnamefont {C.}~\bibnamefont {Kurtsiefer}}, \bibinfo
  {author} {\bibfnamefont {S.}~\bibnamefont {Gaertner}}, \bibinfo {author}
  {\bibfnamefont {H.}~\bibnamefont {Weinfurter}}, \bibinfo {author}
  {\bibfnamefont {O.}~\bibnamefont {G\"uhne}}, \bibinfo {author} {\bibfnamefont
  {P.}~\bibnamefont {Hyllus}}, \bibinfo {author} {\bibfnamefont
  {D.}~\bibnamefont {Bru\ss{}}}, \bibinfo {author} {\bibfnamefont
  {M.}~\bibnamefont {Lewenstein}}, \ and\ \bibinfo {author} {\bibfnamefont
  {A.}~\bibnamefont {Sanpera}},\ }\href {\doibase
  10.1103/PhysRevLett.92.087902} {\bibfield  {journal} {\bibinfo  {journal}
  {Phys. Rev. Lett.}\ }\textbf {\bibinfo {volume} {92}},\ \bibinfo {pages}
  {087902} (\bibinfo {year} {2004})}\BibitemShut {NoStop}%
\bibitem [{\citenamefont {Hein}\ \emph {et~al.}(2004)\citenamefont {Hein},
  \citenamefont {Eisert},\ and\ \citenamefont {Briegel}}]{Hein2004Multiparty}%
  \BibitemOpen
  \bibfield  {author} {\bibinfo {author} {\bibfnamefont {M.}~\bibnamefont
  {Hein}}, \bibinfo {author} {\bibfnamefont {J.}~\bibnamefont {Eisert}}, \ and\
  \bibinfo {author} {\bibfnamefont {H.~J.}\ \bibnamefont {Briegel}},\ }\href
  {\doibase 10.1103/PhysRevA.69.062311} {\bibfield  {journal} {\bibinfo
  {journal} {Phys. Rev. A}\ }\textbf {\bibinfo {volume} {69}},\ \bibinfo
  {pages} {062311} (\bibinfo {year} {2004})}\BibitemShut {NoStop}%
\bibitem [{\citenamefont {Van~den Nest}\ \emph {et~al.}(2004)\citenamefont
  {Van~den Nest}, \citenamefont {Dehaene},\ and\ \citenamefont
  {De~Moor}}]{Van2004Graphical}%
  \BibitemOpen
  \bibfield  {author} {\bibinfo {author} {\bibfnamefont {M.}~\bibnamefont
  {Van~den Nest}}, \bibinfo {author} {\bibfnamefont {J.}~\bibnamefont
  {Dehaene}}, \ and\ \bibinfo {author} {\bibfnamefont {B.}~\bibnamefont
  {De~Moor}},\ }\href {\doibase 10.1103/PhysRevA.69.022316} {\bibfield
  {journal} {\bibinfo  {journal} {Phys. Rev. A}\ }\textbf {\bibinfo {volume}
  {69}},\ \bibinfo {pages} {022316} (\bibinfo {year} {2004})}\BibitemShut
  {NoStop}%
\bibitem [{\citenamefont {Markham}\ \emph {et~al.}(2007)\citenamefont
  {Markham}, \citenamefont {Miyake},\ and\ \citenamefont
  {Virmani}}]{Damian2007Entanglement}%
  \BibitemOpen
  \bibfield  {author} {\bibinfo {author} {\bibfnamefont {D.}~\bibnamefont
  {Markham}}, \bibinfo {author} {\bibfnamefont {A.}~\bibnamefont {Miyake}}, \
  and\ \bibinfo {author} {\bibfnamefont {S.}~\bibnamefont {Virmani}},\ }\href
  {http://stacks.iop.org/1367-2630/9/i=6/a=194} {\bibfield  {journal} {\bibinfo
   {journal} {New Journal of Physics}\ }\textbf {\bibinfo {volume} {9}},\
  \bibinfo {pages} {194} (\bibinfo {year} {2007})}\BibitemShut {NoStop}%
\bibitem [{\citenamefont {Kleinberg}\ and\ \citenamefont
  {Tardos}(2005)}]{Kleinberg2005Algorithm}%
  \BibitemOpen
  \bibfield  {author} {\bibinfo {author} {\bibfnamefont {J.}~\bibnamefont
  {Kleinberg}}\ and\ \bibinfo {author} {\bibfnamefont {E.}~\bibnamefont
  {Tardos}},\ }\href@noop {} {\emph {\bibinfo {title} {Algorithm Design}}}\
  (\bibinfo  {publisher} {Addison-Wesley Longman Publishing Co., Inc.},\
  \bibinfo {address} {Boston, MA, USA},\ \bibinfo {year} {2005})\BibitemShut
  {NoStop}%
\end{thebibliography}%
\end{document}